\keywords{pushdown automaton, context-free grammar, bisimilarity,
  intermediate acceptance, state awareness.}
\tikzset{
        state/.style={
          circle,
          draw,
          minimum size=6mm,
    },
}
\newcommand{\Procids}[1][\mathcal{P}]{\ensuremath{#1}}
\newcommand{\ProcidsT}{\Procids[\mathcal{P}_{\downarrow}]}
\newcommand{\ProcidsNT}{\Procids[\mathcal{P}_{\!\not\,\downarrow}]}
\newcommand{\ProcidsINT}{\Procids[\overline{\mathcal{P}}_{\!\not\,\downarrow}]}
\newcommand{\Procidssep}{\Procids[\mathcal{P}_{\!\not\,\downarrow}^s]}
\newcommand{\pref}[2][a]{\ensuremath{{#1}.{#2}}}
\newcommand{\seqsym}{\ensuremath{\cdot}}
\newcommand{\seqcsym}{;}
\newcommand{\seq}{\ensuremath{\mathbin{\seqsym}}}
\renewcommand{\seqc}{\ensuremath{\mathbin{\seqcsym}}}
\newcommand{\TSP}{\ensuremath{\mathrm{TSP}}}
\newcommand{\TSPc}{\ensuremath{\mathrm{TSP}^{\seqcsym}}}
\newcommand{\nstep}[1]{\ensuremath{\overset{#1}{\nrightarrow}}}
\newcommand{\nbisim}[1][]{%
    \setbox0=\hbox{\kern-.1ex{$\leftrightarrow$}\kern-.1ex}
    \setbox1=\vbox{\hbox{\raise .1ex \box0}\hrule}%
    \ensuremath{\not\mathrel{\hbox{\kern.1ex\box1\kern.1ex}_{#1}}}
  }
\renewcommand{\nterm}[1]{\ensuremath{#1\mathclose{\!\not\,\downarrow}}}
\newcommand{\sbisim}{\ensuremath{\mathrel{\bisim_s}}}
\newcommand{\defeqn}{\ensuremath{\mathrel{\stackrel{\textrm{def}}{=}}}}
\newcommand{\nasym}{\ensuremath{\mathit{NA}}}
\newcommand{\na}[1]{\ensuremath{\nasym({#1})}}
\newcommand{\NA}[1]{\ensuremath{\mathrm{NA}{#1}}}
\newcommand{\A}[1]{\ensuremath{\mathrm{A}{#1}}}
\newcommand{\C}[1]{\ensuremath{\mathrm{C}{#1}}}
\newcommand{\SI}[1]{\ensuremath{\mathrm{S}{#1}}}
 \newcommand{\rse}[2]
{\ensuremath{{#1}\mathbin{\mkern 1.5mu
\rlap{\raisebox{.5ex}{$\scriptstyle \wedge$}}\mkern 4.5mu
\raisebox{.5ex}{$\scriptscriptstyle \blacktriangle$}{#2}}}}
\begin{document}
 
\title[PDAs and CFGs in bisimulation semantics]{Pushdown Automata and Context-Free Grammars in Bisimulation Semantics\rsuper*}
\titlecomment{{\lsuper*}Extended version of \cite{BCL21}}

\author[J.~C.~M.~Baeten]{Jos C. M. Baeten\lmcsorcid{0000-0003-0287-0555}}[a,b] 
\address{CWI, Amsterdam, The Netherlands}	
\email{Jos.Baeten@cwi.nl}  
\address{University of Amsterdam, Amsterdam, The Netherlands}	

\author[C.~Carissimo]{Cesare Carissimo}[b]	
\email{Jos.Baeten@cwi.nl, cesarecarissimo@gmail.com}  

\author[B.~Luttik]{Bas Luttik\lmcsorcid{0000-0001-6710-8436}}[c]	
\address{Eindhoven University of Technology, Eindhoven, The
  Netherlands}	
\email{s.p.luttik@tue.nl}

\begin{abstract}
The Turing machine models an old-fashioned computer, that does not interact with the user or with other computers, and only does batch processing.
Therefore, we came up with a Reactive Turing Machine that does not have these shortcomings. In the Reactive Turing Machine, transitions have labels to give a notion of interactivity.
In the resulting process graph, we use bisimilarity instead of language equivalence.

Subsequently, we considered other classical theorems and notions from automata theory and formal languages theory. 
In this paper, we consider the classical theorem of the correspondence between pushdown automata and context-free grammars. 
By changing the process operator of sequential composition to a sequencing operator with intermediate acceptance, we get a better correspondence in our setting.
We find that the missing ingredient to recover the full correspondence is the addition of a notion of state awareness.
\end{abstract}

\maketitle

\section{Introduction}

A basic ingredient of any undergraduate curriculum in computer science is a course on automata theory and formal languages, as this gives students insight in the essence of a computer, and tells them what a computer can and cannot do. Usually, such a course contains the treatment of the Turing machine as an abstract model of a computer. However, the Turing machine is a very old-fashioned computer: it is deaf, dumb and blind, and all input from the user has to be put on the tape before the start. Computers behaved like this until the advent of the terminal in the mid 1970s. This is far removed from computers the students find all around them, which interact continuously with people, other computers and the internet. It is hard to imagine a self-driving car driven by a Turing machine that is deaf, dumb and blind, where all user input must be on the tape at the start of the trip.

In order to make the Turing machine more interactive, many authors have enhanced it with extra features, see e.g. \cite{goldin2008interactive,wegner1997interaction}. But an extra feature, we believe, is not the way to go. Interaction is an essential ingredient, such as it has been treated in many forms of concurrency theory. We seek a full integration of automata theory and concurrency theory, and proposed the Reactive Turing Machine in \cite{BLT13}. 
In the Reactive Turing Machine, transitions have labels to give a notion of interactivity.
In the resulting process graphs, we use bisimilarity instead of
language equivalence. Subsequently, we considered other classical
theorems and notions from automata theory and formal languages theory
\cite{BCLT09,BLMvT16}. We find richer results and a finer theory.

In this paper, we consider the classical theorem of the correspondence
between pushdown automata and context-free grammars: a formal
  language is accepted by a pushdown automaton if, and only if, it is
  generated by a context-free grammar.
Our aim is to establish a process-theoretic variant of that correspondence theorem: a process is defined by a pushdown automaton
  if, and only if, can be specified in a process algebra comprising
  actions, choice, sequencing and recursion. There are several choices
that we need to make both regarding the process-theoretic semantics of
pushdown automata and regarding the process algebra.

  Regarding the process-theoretic semantics of pushdown automata, in
  contrast to language semantics, in bisimulation semantics it makes a
  difference whether we consider acceptance by final state or
  acceptance by empty stack. Every process that is a pushdown process
  according to the acceptance-by-empty-stack interpretation is also a
  pushdown process according to the acceptance-by-final-state
  interpretation, but the converse does not hold \cite{BCLT09}. In
  this paper, we focus on the  acceptance-by-final-state
  interpretation and set out to find the corresponding process algebra.

  We start out from the seminal process algebra BPA of Bergstra and Klop \cite{BK84}. Then, we need to add constants for acceptance and non-acceptance (deadlock), so we look at the process algebra \TSP{} of \cite{BBR10}. However, as we found earlier (starting from \cite{BCT08}), the process algebra \TSP{} is not a good choice for the correspondence with pushdown auitomata modulo bisimulation: by means of a guarded recursive specification over \TSP{} we can define processes that cannot be generated by any pushdown automaton, and some processes generated by a pushdown automaton cannot be defined by a guarded recursive specification over \TSP{}.

By changing the process operator of sequential composition to a sequencing operator with intermediate acceptance, we get the theory \TSPc{} that provides a better correspondence in our setting \cite{BLY17,Bel18,BLB19}. All processes defined by a guarded recursive specification over \TSPc{} can also be generated by a pushdown automaton. However, the other direction still does not hold.

We find that the missing ingredient to recover the full correspondence
is the addition of a notion of state awareness, by means of a signal
that can be passed along a sequencing operator.

 Our main
contribution is to show that it suffices to add propositional signals
and conditions in the style of \cite{BB97} to restore the correspondence: we establish
that a process is specified by a guarded sequential recursive
specification with propositional signals and conditions if, and only
if, it is the process associated with a pushdown automaton. 

This paper extends \cite{BCL21} by adding an axiomatisation of the
process theory. This axiomatisation allows us to derive the head
normal theorem already reported in \cite{BCL21} purely equationally
and is proved to be ground-complete for the recursion-free fragment of
the process theory.

\section{Preliminaries}

As a common semantic framework we use the notion of a
\emph{labelled transition system}.

\begin{defi} \label{def:tsspace}
A \emph{labelled transition system} is a quadruple
$(\mathcal{S},\mathcal{A},{\xrightarrow{}},{\downarrow})$, where
\begin{enumerate}
\item $\mathcal{S}$ is a set of \emph{states};
\item $\mathcal{A}$ is a set of \emph{actions}, $\tau\not\in\mathcal{A}$;
\item
${\xrightarrow{}}\subseteq{\mathcal{S}\times\mathcal{A}\cup\{\tau\}\times\mathcal{S}}$ is
an $\mathcal{A}\cup\{\tau\}$-labelled \emph{transition relation}; and
     \item ${\downarrow}\subseteq\mathcal{S}$ is the set of \emph{final} or
       \emph{accepting} states.
 \end{enumerate}
A \emph{process graph} is a
labelled transition system with a special
designated \emph{root state} ${\uparrow}$, i.e., it is a quintuple
$(\mathcal{S},\mathcal{A},{\rightarrow},{\uparrow},{\downarrow})$ such that
$(\mathcal{S},\mathcal{A},{\rightarrow},{\downarrow})$ is a labelled transition system, and ${\uparrow}\in\mathcal{S}$.
We write $s\xrightarrow{a}s'$ for $(s,a,s')\in{\rightarrow}$
and $\term{s}$ for $s\in\mathalpha{\downarrow}$.
\end{defi}

By considering language equivalence classes of process graphs, we
recover languages as a semantics, but we can also consider other
equivalence relations. Notable among these is \emph{bisimilarity}.

\begin{defi}
  Let $(\mathcal{S},\mathcal{A},\rightarrow{},{\downarrow})$ be a
  labelled transition system. A symmetric binary relation $R$ on
  $\mathcal{S}$ is a \emph{bisimulation} if it satisfies the following
  conditions for every $s,t\in\mathcal{S}$ such that $s\mathrel{R} t$ and for all $a\in\mathcal{A}\cup\{\tau\}$:
  \begin{enumerate}
    \item if $s\xrightarrow{a}s'$ for some $s'\in\mathcal{S}$,
      then there is a $t'\in\mathcal{S}$ such that
      $t\xrightarrow{a}t'$ and $s'\mathrel{R}t'$; and
    \item if $s{\downarrow}$, then $t{\downarrow}$.
    \end{enumerate}
\end{defi}
The results of this paper do not rely on abstraction from internal
computations. This means we can use the \emph{strong} version of bisimilarity
defined above, which does not give special treatment to
$\tau$-labelled transitions. In general, when we do give special treatment to $\tau$-labeled transitions, we use some form of \emph{branching bisimulation} \cite{GW96}.

A \emph{process} is a (strong) bisimulation equivalence class of process graphs.

\section{Pushdown Automata}

We consider an abstract model of a computer with a memory in the form of a \emph{stack}: the stack is last in and first out, something can be added on top of the stack (push\index{push}), or something can be removed from the top of the stack (pop\index{pop}).

 \begin{defi}
 A \emph{pushdown automaton} $M$ is a sextuple $(\mathcal{S},\mathcal{A}, \mathcal{D},{\rightarrow},{\uparrow},{\downarrow})$ where:
 \begin{enumerate}
 \item $\mathcal{S}$ is a finite set of states,
 \item $\mathcal{A}$ is a finite input alphabet, $\tau \not\in \mathcal{A}$ is the unobservable step,
 \item $\mathcal{D}$ is a finite data alphabet,
 \item $\mathalpha{\rightarrow} \subseteq \mathcal{S} \times  (\mathcal{A} \cup \{\tau\}) \times (\mathcal{D} \cup \{\epsilon\}) \times \mathcal{D}^{*} \times \mathcal{S}$ is a finite set of \emph{transitions} or \emph{steps},
 \item $\mathalpha{\uparrow} \in \mathcal{S}$ is the initial state,
 \item $\mathalpha{\downarrow} \subseteq \mathcal{S}$ is the set of final or accepting states.
 \end{enumerate}
\end{defi}

 If $(s,a,d,x,t) \in \mathalpha{\rightarrow}$ with $d \in \mathcal{D}$, we write $s \xrightarrow{a[d/x]} t$, and this means that the machine, when it is in state $s$ and $d$ is the top element of the stack, can consume input symbol $a$, replace $d$ by the string $x$ and thereby move to state $t$. Likewise, writing $s \xrightarrow{a[\epsilon/x]} t$ means that the machine, when it is in state $s$ and the stack is empty, can consume input symbol $a$, put the string $x$ on the stack and thereby move to state $t$. 
 In steps $s \xrightarrow{\tau[d/x]} t$ and $s
 \xrightarrow{\tau[\epsilon/x]} t$, no input symbol is consumed, only
 the stack is modified.
 
 Notice that we defined a pushdown automaton in such a way that it can be detected if the stack is empty, i.e. when there is no top element.

\begin{figure}[htb]
\begin{center}
\begin{tikzpicture}[->,>=stealth',node distance=2cm, node font=\footnotesize, state/.style={circle, draw, minimum size=.5cm,inner sep=0pt}]
  \node[state,initial,initial text={},initial where=left,accepting] (s0) {$\uparrow$};
  
  \path[->]
  (s0) edge[in=-45,out=45,loop]
         node[right] {$\begin{array}{c}a[\epsilon/1]\\
                                      a[1/11]\\
                                      b[1/\epsilon]\end{array}$} (s0);
\end{tikzpicture}
\end{center}
\caption{Pushdown automaton of a counter.}\label{fig:pda}
\end{figure}

For example, consider the pushdown automaton
depicted in Figure~\ref{fig:pda}. It represents the process that can
start to read an $a$, and after it has read at least one $a$, can read additional $a$'s but can also
read $b$'s. Upon acceptance, it will have read up to as many
$b$'s as it has read $a$'s.
Interpreting symbol $a$ as an increment and $b$ as a decrement, we can see this process as a \emph{counter}.
   
We do not consider the language of a pushdown automaton, but rather
consider the process, i.e., the bisimulation equivalence class of the
process graph of a pushdown automaton. A
state of this process graph is  a pair
$(s,x)$, where $s \in \mathcal{S}$ is the current state and $x \in \mathcal{D}^{*}$ is the current contents of the stack (the left-most element
of $x$ being the top of the stack). In the initial state, the stack is
empty. In a final state, acceptance can take place irrespective of the
contents of the stack. The transitions in the process graph are labeled by the inputs of the pushdown automaton or $\tau$.

\begin{defi}\label{def:pdalts}
  Let
    $M=(\mathcal{S},\mathcal{A}, \mathcal{D},{\rightarrow},{\uparrow},{\downarrow})$
  be a pushdown automaton.
  The \emph{process graph}
  $\mathcal{P}(M)=(\mathcal{S}_{\mathcal{P}(M)},\mathcal{A},{\xrightarrow{}}_{\mathcal{P}(M)},{\uparrow}_{\mathcal{P}(M)},{\downarrow}_{\mathcal{P}(M)})$ associated with $M$ is
  defined as follows:
  \begin{enumerate}
  \item $\mathcal{S}_{\mathcal{P}(M)} = \{(s,x)\mid s\in\mathcal{S}\ \&\
    x\in\mathcal{D}^{*}\}$;
  \item ${\xrightarrow{} _{\mathcal{P}(M)}}\subseteq
    {\mathcal{S}_{\mathcal{P}(M)}\times\mathcal{A}\cup\{\tau\}\times
      \mathcal{S}_{\mathcal{P}(M)}}$ is the least relation such that
    for all $s,s'\in\mathcal{S}$, $a\in\mathcal{A}\cup\{\tau\}$, $d\in\mathcal{D}$ and
    $x,x'\in\mathcal{D}^{*}$ we have
    \begin{equation*}
      (s,dx)\xrightarrow{a}_{\mathcal{P}(M)}(s',x'x)\
        \text{if, and only if,}\ s\xrightarrow{a[d/x']}s'\enskip;
      \end{equation*}
         \begin{equation*}
      (s,\epsilon)\xrightarrow{a}_{\mathcal{P}(M)}(s',x)\
        \text{if, and only if,}\ s\xrightarrow{a[\epsilon/x]}s'\enskip;
      \end{equation*}
  \item $\uparrow_{\mathcal{P}(M)}=(\uparrow,\epsilon)$;
 \item ${\downarrow}_{\mathcal{P}(M)}= \{(s,x)\mid s\in{\downarrow}\
    \&\ x\in\mathcal{D}^{*}\}$.
   \end{enumerate}
\end{defi}

To distinguish, in the definition above, the set of states, the
transition relation, the initial state and the set of accepting states
of the pushdown automaton from similar components of the associated
process graph, we have attached a subscript ${\mathcal{P}(M)}$ to
  the latter. In the remainder of this paper, we will suppress the
  subscript whenever it is already clear from the context whether a
  component of the pushdown automaton or its associated process graph is meant.

\begin{figure}[htb]
\begin{center}
\begin{tikzpicture}[->,>=stealth',node distance=2cm, node
  font=\footnotesize, state/.style={ellipse, draw, minimum size=.5cm,inner sep=0pt}]
  \node[state,accepting,initial,initial text={},initial where=left] (s0)
  {$(\uparrow,\epsilon)$};
  \node[state,accepting] [right of=s0] (s1) {$(\uparrow,1)$};
  \node[state,accepting] [right of=s1] (s2) {$(\uparrow,11)$};
  \node[state,accepting, draw=none] [right of=s2] (sdots) {$\dots$};

  \path[->]
    (s0) edge[bend left] node[above] {$a$} (s1)
    (s1) edge[bend left] node[above] {$a$} (s2)
    (s2) edge[bend left] node[above] {$a$} (sdots);
  \path[->]
    (sdots) edge[bend left] node[below] {$b$} (s2)
    (s2) edge[bend left] node[below] {$b$} (s1)
    (s1) edge[bend left] node[below] {$b$} (s0);
\end{tikzpicture}
\end{center}
\caption{The process graph of the counter.}\label{fig:pdatrans}
\end{figure}

Figure~\ref{fig:pdatrans} depicts the process graph associated
with the pushdown automaton depicted in Figure~\ref{fig:pda}.

In language equivalence, the definition of acceptance in pushdown automata leads to the same set of languages when we define acceptance by final state (as we do here) and when we define acceptance by empty stack (not considering final states). In bisimilarity, these notions are different: acceptance by empty stack yields a smaller set of processes than acceptance by final state. This is illustrated by Figure~\ref{fig:pdatrans}: this process graph has infinitely many non-bisimilar final states. This cannot be realised if we define acceptance by empty stack. For details, see \cite{BCLT09}.

A pushdown automaton has only finitely many transitions, so there is a maximum number of transitions from a given state, called its \emph{branching degree}. Then, also the associated process graph has a branching degree, that cannot be larger than the branching degree of the underlying pushdown automaton. Thus, in a process graph associated with a pushdown automaton, the branching is always \emph{bounded}.

 \section{Sequential Processes: \TSP}

In the process setting, a context-free grammar is a recursive
specification over a process algebra comprising actions, choice and
sequencing. We start out from the seminal process algebra BPA of
Bergstra and Klop \cite{BK84}. Later, it was extended to the process
algebra BPA$_{\delta \varepsilon}$ by adding two constants $\delta$
(for deadlock) and $\varepsilon$ (for termination). This process
algebra was reformulated as the Theory of Sequential Processes (\TSP)
in \cite{BBR10}, using  $\dl$ and $\emp$ for $\delta$ and
$\varepsilon$. We present \TSP{} next, and then will argue that it is
not suitable to obtain a correspondence with pushdown automata.

Let $\Act$ be a set of \emph{actions} and $\tau\not\in\Act$ \emph{the silent action}, symbols denoting atomic
events, and let $\Procids$ be a finite set of \emph{process identifiers}. The
sets $\Act$ and $\Procids$ serve as parameters of the process theory
that we shall introduce below. We use symbols $a,b,\ldots$, possibly indexed, to range over $\Act \cup \{\tau\}$, symbols $X,Y,\dots$, possibly indexed, to range over $\Procids$.
The set of \emph{sequential process expressions} is generated by the following grammar ($a\in\Act \cup \{\tau\}$, $X\in\Procids$):
\begin{equation*}
p ::= \dl \mid \emp \mid \pref[a]{p} \mid p + p \mid p \seq  p \mid X\enskip.
\end{equation*}
The constants $\dl$ and $\emp$ respectively denote the
\emph{deadlocked} (i.e., inactive but not accepting)
process and the \emph{accepting} process. For each
$a\in\Act \cup \{\tau\}$ there is a unary action prefix operator
$\pref[a]{\_}$. The binary operators $+$ and $\seq$ denote
alternative composition and sequential composition,
respectively. We adopt the convention that $\pref[a]{\_}$ binds
strongest and $+$ binds weakest.
The symbol $\cdot$ is often omitted when writing sequential process
expressions.

For a (possibly empty) sequence
$p_1,\dots,p_n$ we inductively define $\sum_{i=1}^np_i=\dl$ if $n=0$
and $\sum_{i=1}^{n}p_i=(\sum_{i=1}^{n-1}p_i)+p_n$ if $n>0$.

A recursive specification over sequential process expressions is a mapping
$\Delta$ from $\Procids$ to the set of sequential process expressions. The idea is that the process expression
$p$ associated with a process identifier $X\in\Procids$ by $\Delta$
\emph{defines} the behaviour of $X$. We prefer to think of $\Delta$ as a
collection of \emph{defining equations}
  $X\defeqn p$,
  exactly one for every $X\in\Procids$.
We shall, throughout the paper, presuppose a recursive specification
$\Delta$ defining the process identifiers in $\Procids$, and we shall
usually simply write $X\defeqn p$ for $\Delta(X)=p$. Note that, by our
assumption that $\Procids$ is finite, $\Delta$ is finite too.

\begin{figure}[htb]
  \centering
  \begin{osrules}
        \osrule*{}{\emp \downarrow}
        \qquad \qquad
        \osrule*{}{\pref[a]p\step{a}p}
    \\
\osrule*{p \downarrow}{(p+q) \downarrow}
\quad
\osrule*{q \downarrow}{(p+q) \downarrow}
\quad
    \osrule*{p \step{a} p'}{p + q \step{a} p'}
    \quad
    \osrule*{q \step{a} q'}{p + q \step{a} q'}
  \\
 \osrule*{p \downarrow & q \downarrow}{p \seq  q \downarrow}\label{osrule:seqone}
  \qquad
    \osrule*{p\step{a} p'}{p \seq  q \step{a} p' \seq  q} \label{osrule:seqtwo}
  \qquad
    \osrule*{p \downarrow &  q \step{a} q' }{p \seq  q
      \step{a} q'}\label{osrule:seqthree}
    \\
    \osrule*{p\step{a}p' & X\defeqn p}{X\step{a}p'}
  \qquad
  \osrule*{\term{p} & X\defeqn p}{\term{X}}
  \end{osrules}
\caption{Operational semantics for sequential process expressions.}
\label{fig:semantics_tspseq}
\end{figure}

We associate behaviour with process expressions by defining, on the
set of process expressions, a unary acceptance predicate $\term{}$
(written postfix) and, for every $a\in\Act \cup \{\tau\}$, a binary transition
relation $\step{a}$ (written infix), by means of the transition system
specification presented in Figure~\ref{fig:semantics_tspseq}. We write
$p\nstep{a}$ for ``there does not exist $p'$ such that $p\step{a}p'$''
and $p\nstep{}$ for ``$p\nstep{a}$ for all $a\in\Act \cup \{\tau\}$''. 

For $w\in\Act^{*}$ we define
$p\steps{w}p'$ inductively, for all process expressions $p,p',p''$;
\begin{itemize}
\item $p \steps{\epsilon} p$;
\item if $p \step{a} p'$ and $p' \steps{w} p''$, then $p \steps{aw} p''$ ($a \in \Act$);
\item if $p \step{\tau} p'$ and $p' \steps{w} p''$, then $p \steps{w} p''$.
\end{itemize}

We see that $\tau$-steps do not contribute to the string $w$.
We write
$p\step{}p'$ for there exists $a\in\Act \cup \{\tau\}$ such that
$p\step{a}p'$. Similarly, we write $p\steps{}p'$ for there exists
$w\in\Act^{*}$ such that $p\steps{w}p'$ and say that $p'$ is
\emph{reachable} from $p$.

When a process expression $p$ satisfies both $p \downarrow$ and $p
\step{}$ we say $p$ has \emph{intermediate acceptance}. We will need
to take special care of such process expressions in the sequel.

We proceed to define when two process expressions are behaviourally equivalent.

\begin{defi} \label{def:bisimilarity}
A binary relation $R$ on the set of sequential process expressions is a \emph{bisimulation} iff $R$ is
symmetric and for all process expressions $p$ and $q$ such that if $(p,q) \in
R$:
\begin{enumerate}
\item If $p \step{a} p'$, then there exists a process expression $q'$, such that $q \step{a} q'$, and $(p', q') \in R$.
\item If $\term{p}$, then $\term{q}$.
\end{enumerate}
The process expressions $p$ and $q$ are bisimilar (notation: $p \bisim q$) iff there exists a bisimulation $R$ such that $(p, q) \in R$.
\end{defi}

The operational rules presented in Fig~\ref{fig:semantics_tspseq} are
in the so-called \emph{path format} from which it immediately follows
that strong bisimilarity is a congruence \cite{BV93}. Branching bisimulation, however, is not a congruence, but by adding a rootedness condition we get rooted branching bisimulation which is a congruence \cite{GW96}.

As is customary in process theory, we restrict our attention  to \emph{guarded}
recursive specifications, i.e., we require that every occurrence of a
process identifier in the definition of some (possibly different)
process identifier occurs within the scope of an action prefix. Note
that we allow $\tau$ as a guard. This is possible since we use strong
bisimulation, not rooted branching bisimulation.

\begin{figure}
\begin{tikzpicture}[->,>=stealth',node distance=2cm, node font=\footnotesize, state/.style={ellipse, draw, minimum size=0.5cm,inner sep=0pt}]
  \node[initial, initial text=,state, double, double distance=1pt] (a1) at (1,2) {$X$};
  \node (a5) at (7,1) {};
  \node[state, double, double distance=1pt] (b1) at (1,0) {$(a.\emp)^0$};
  \node[state] (b2) at (3,0) {$(a.\emp)^1$};
  \node[state] (b3) at (5,0) {$(a.\emp)^2$};
  \node[state] (b4) at (7,0) {$(a.\emp)^3$};
  \node (b5) at (9,0) {};
  \path[->] (a1) edge [right] node {$a$} (b1);
  \path[->] (a1) edge [right]  node[xshift=0.1cm] {$a$} (b2);
  \path[->] (a1) edge [right]  node[xshift=0.1cm] {$a$} (b3);
  \path[->] (a1) edge [right]  node[xshift=0.2cm] {$a$} (b4);
  \path[->] (b4) edge [below] node {$a$} (b3);
  \path[->] (b3) edge [below] node {$a$} (b2);
  \path[->] (b2) edge [below] node {$a$} (b1);
  \path[->, dashed] (a1) edge [right] node[xshift=0.2cm] {$a$} (a5);
  \path[->, dashed] (b5) edge [below] node {$a$} (b4);
\end{tikzpicture}
\caption{The process graph of $X$ as defined in $X \defeqn \emp + X \seq
  a.\emp$; the terms $(a.\emp)^n$ are inductively defined by $(a.\emp)^0=\emp$
  and $(a.\emp)^{n+1} = (a.\emp)^n \seq (a.\emp)$.}\label{fig:infbranch}
\end{figure}

If we do not restrict to guarded recursion, unwanted behaviour can result, as the following example shows.
\begin{exa} \label{exa:unguarded}
Consider the (unguarded) recursive equation
\[ X \defeqn \emp + X \seq a.\emp \enskip. \]
We show the process graph generated by the operational rules in Figure~\ref{fig:infbranch}.

From the initial state, there are infinitely many transitions, to states that are all non-bisimilar. Thus, the process graph generated is infinitely branching, and it cannot be bisimilar to the graph of a pushdown automaton.
\end{exa}

By restricting to guarded recursion, we guarantee that the associated
process graphs are finitely branching.

It is convenient to present a guarded sequential
specification in a normal form, the so-called Greibach normal form,
see \cite{BLB19}.

\begin{defi} \label{GNF}
A guarded sequential specification is in Greibach normal form, GNF, if every right-hand side of every equation has the following form:
\begin{itemize}
\item $(\emp +) \sum_{i=1}^{n} a_i.\alpha_i$ for actions $a_i \in
  \mathcal{A} \cup \{\tau\}$ and a sequence of identifiers $\alpha_i \in \mathcal{P}^{*}$, $n \geq 0$.
\end{itemize}
Recall that the empty summation equals $\dl$. The $\emp$ summand may or may not occur. Furthermore, the empty sequence denotes $\emp$.
\end{defi}

It is well-known that by adding a finite number of process identifiers, every guarded sequential specification can be brought into Greibach normal form (i.e. the behaviour associated with a process identifier by the original specification is strongly bisimilar to
the behaviour associated with it by the transformed specification,
see, e.g.,  \cite{BLB19}).

Since bisimilarity is a congruence, we can consider the equational
theory of sequential expressions. From \cite{BBR10}, we know that the finite axiomatization in Table~\ref{tab:axioms_tsp} is
a sound and ground-complete (i.e. complete for all process
expressions, not including variables) axiomatisation for the theory TSP
with sequential composition and without recursion. In axiomatisations, we use variables $x,y,z$ denoting arbitrary process expressions.

\begin{table}[htb]

\centering
\begin{tabular}{lcl@{\qquad}l}
$x+y$ & $=$ & $y+x$ & $\A1$ \\
$x+(y+z)$ & $=$& $(x+y)+z$ & $\A2$ \\
$x+x$ & $=$ & $x$ & $\A3$ \\ 
$(x+y) \seq z$ & $=$ & $x \seq z + y \seq z$ & $\A4$\\
$(x \seq y) \seq z$ & $=$ & $x  \seq (y \seq z)$ & $\A5$ \\
$x+\dl$ & $=$ & $x$ & $\A6$ \\
  $\dl \seq x$ & $=$ & $\dl$ & $\A7$ \\
$x \seq \emp$ & $=$ & $x$ & $\A8$ \\
$\emp \seq x$ & $=$ & $x$ & $\A9$ \\
$(\pref[a]x)\seq y$ & $=$ & $\pref[a](x \seq y)$ & $\A{10}$ \\ \\
\end{tabular}

\caption{The axioms of TSP ($a \in \Act \cup \{\tau\}$).}\label{tab:axioms_tsp}
\end{table}

Now we show why the process algebra TSP is not suitable to establish a correspondence with pushdown automata. The problem is with the operational semantics of sequential composition.
Consider the following example.

\begin{figure}[h]
  \centering
\begin{tikzpicture}[->,>=stealth',node distance=2cm, node font=\footnotesize, state/.style={ellipse, draw, minimum size=0.5cm,inner sep=0pt}]

\node[initial, initial text=, state,double, double distance=1pt]	(7)        {$X$};
\node[state,double, double distance=1pt]	(8)	[right of=7]	    {$YX$};
\node[state,double, double distance=1pt]	(9)	[right of=8]     	{$Y^2X$};
\node[state,double, double distance=1pt]	(10)[right of=9]     	{$Y^{n-1}X$};
\node[state,double, double distance=1pt]	(11)[right of=10]    	{$Y^nX$};
\node[state,draw=none]	(12)[right of=11]    	{};

\path
    (7) edge [above, bend left] node {$a$}	(8)
		(8) edge [above, bend left] node {$a$}	(9)
		(9) edge [above,dashed, bend left] node {}	(10)
		(10) edge [above,bend left] node {$a$}	(11)
		(11) edge [above,dashed,bend left] node	{}	(12)
			
		(12) edge [below,dashed,bend left] node	{}	(11)
		(11) edge [below,bend left] node			{$b$}	(10)
		(10) edge [below,dashed,bend left] node	{}	(9)
		(9)  edge [below,bend left] node {$b$}	(8)
		(8)  edge [below,bend left] node {$b$}	(7)
		
		(11.south) edge [below,bend left=40] node[xshift=-.4cm,yshift=.1cm] {$b$}	(9.south)
		(11.south) edge [below,bend left=45] node {$b$}	(8.south)
		(11.south) edge [below,bend left=48] node {$b$}	(7.south)
		
		(10.south) edge [below,bend left=40] node[yshift=.1cm] {$b$}	(8.south)
		(10.south) edge [below,bend left=45] node {$b$}	(7.south)
		
		(9.south) edge [below,bend left=40] node[xshift=.4cm,yshift=.1cm] {$b$}	(7.south)
;
\end{tikzpicture}
\caption{Process graph showing unbounded branching.}
\label{fig:UnboundedBranching}
\end{figure}

\begin{exa} \label{exa:difference}
  Consider the recursive specification
  \begin{equation*}
    X \defeqn \emp + \pref[a]{Y \seq X} \qquad
    Y \defeqn \emp + \pref[b]{\emp}+ a.Y \seq Y \enskip.
  \end{equation*}
  By following the operational rules, we obtain a process graph that
  is bisimilar to the one shown in
  Figure~\ref{fig:UnboundedBranching} (for simplicity we have
  identified states labelled with terms that are equal up to \A5, \A8
  and \A9). We see the similarity with Figure~\ref{fig:pdatrans}, but many more transitions are added.

We see that the state given by process expression $Y^n \seq X$ has $n$ outgoing transitions, to states $X, Y \seq X, \ldots, Y^{n-1} \seq X, Y^{n+1} \seq X$. These states are all non-bisimilar. Thus, the branching in this process graph is unbounded, and it cannot be the process graph of a pushdown automaton.
\end{exa}

We call the phenomenon occurring here \emph{transparency}: it is possible to skip part of the terms in a sequential composition. It causes that \TSP{} is too expressive: by means of a guarded recursive specification we can specify a process that is not pushdown. On the other hand,  \TSP{} is insufficiently expressive at the same time: also due to transparency, the pushdown automaton in Figure~\ref{fig:pda} cannot be specified in \TSP.

\begin{thm}
There is no guarded recursive specification over \TSP{} with a process graph bisimilar to the process graph in Figure~\ref{fig:pdatrans}.
\end{thm}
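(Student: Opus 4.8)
The plan is to argue by contradiction: suppose $E$ is a guarded recursive specification over $\TSP$ with a distinguished identifier $X$ whose associated process graph $G$ is bisimilar to the counter process graph $C$ of Figure~\ref{fig:pdatrans}. First I would bring $E$ into Greibach normal form (Definition~\ref{GNF}), which preserves the process up to strong bisimilarity; so without loss of generality every right-hand side is of the form $(\emp +)\sum_{i=1}^{n} a_i.\alpha_i$ with $\alpha_i \in \Procids^{*}$. The reachable states of $G$ are then (bisimulation classes of) sequences of identifiers $\alpha \in \Procids^{*}$, and I would identify states up to $\A5$, $\A8$, $\A9$ so that a state is literally a string over $\Procids \setminus \{X \mid X \defeqn \emp\}$-ish; the point is that every reachable state is a finite concatenation $\beta_1 \cdots \beta_k$ of identifiers, and a transition from $\beta_1\beta_2\cdots\beta_k$ is either a transition fired by $\beta_1$ (replacing $\beta_1$ by some $\alpha$, giving $\alpha\beta_2\cdots\beta_k$), or — and this is the crucial \emph{transparency} phenomenon — if $\term{\beta_1}$, a transition fired by $\beta_2$, and if moreover $\term{\beta_1}$ and $\term{\beta_2}$ then one fired by $\beta_3$, and so on.

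The heart of the argument is to exploit the mismatch between the bounded, rigid shape of $C$ and the transparency inherent in sequential composition. In $C$, the state $(\uparrow, 1^n)$ has exactly two outgoing transitions (an $a$ to $(\uparrow,1^{n+1})$ and a $b$ to $(\uparrow,1^{n-1})$), the states $(\uparrow, 1^n)$ are pairwise non-bisimilar, and every state is accepting; moreover each state is reachable only via $a$-steps from the root and the $b$-steps strictly decrease a well-founded measure. Now, since $\Procids$ is finite, consider the infinite walk $(\uparrow,\epsilon) \xrightarrow{a} (\uparrow,1) \xrightarrow{a} (\uparrow,11) \xrightarrow{a} \cdots$ in $C$ and the matching walk $r_0 \xrightarrow{a} r_1 \xrightarrow{a} r_2 \xrightarrow{a} \cdots$ in $G$ with $r_n \bisim (\uparrow,1^n)$. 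I would track how the length (number of identifiers) of $r_n$ evolves: because each identifier has a right-hand side of bounded ``growth'', each $a$-step from $r_n$ can increase the length by at most a fixed constant, but the $r_n$ are pairwise non-bisimilar, hence pairwise distinct as reduced strings, so their lengths are unbounded; thus along the walk the length tends to infinity, and we can pick $n$ with $r_n = \gamma_1 \gamma_2 \cdots \gamma_m$ with $m$ as large as we like.

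The contradiction then comes from counting outgoing $b$-transitions. Fix such an $r_n \bisim (\uparrow,1^n)$ that is long, say $r_n = \gamma_1\cdots\gamma_m$. Since $r_n \downarrow$ (matching an accepting state of $C$) and bisimilar states have bisimilar acceptance behaviour, and since $r_n$ must be able to do a $b$ reaching something bisimilar to $(\uparrow,1^{n-1})$, at least one prefix-initial identifier $\gamma_j$ must enable a $b$-step while all of $\gamma_1,\dots,\gamma_{j-1}$ terminate. But now I would run the same reasoning as in Example~\ref{exa:difference}: among the arbitrarily many transparent prefixes, each ``active'' identifier $\gamma_j$ that is reachable-transparent can fire its own transitions, producing states $\gamma'_j\gamma_{j+1}\cdots\gamma_m$ for the various right-hand-side summands of $\gamma_j$; by a pigeonhole argument on the finitely many identifiers and their finitely many summands, once $m$ is large enough we obtain either two distinct reachable $b$-successors of $r_n$ that must both be bisimilar to $(\uparrow,1^{n-1})$ but are in fact non-bisimilar (impossible, since $C$ has a \emph{unique} $b$-successor at each state), or we obtain a reachable state of $G$ with strictly more than two outgoing transitions with distinct behaviours, contradicting that $G \bisim C$ and every state of $C$ has branching degree $2$. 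In other words, guardedness plus GNF force the reachable-state lengths to grow along the $a$-chain, transparency then forces the branching degree to grow without bound (exactly as in Figure~\ref{fig:UnboundedBranching}), but $C$ has bounded branching — contradiction.

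I expect the main obstacle to be the bookkeeping in the middle step: making precise the notion of ``reachable-transparent identifier in a state'' and showing that when the state is long enough there really are two such identifiers (or one identifier offering two genuinely different continuations) whose contributions survive as distinct $b$-successors rather than being collapsed by $\A5$, $\A8$, $\A9$ or by accidental bisimilarity. This requires a careful invariant on the reachable states of $G$ — essentially that every reachable state is a concatenation of suffixes of right-hand sides of $E$, together with a bound relating its length to the number of $a$-steps taken — and a clean pigeonhole extracting the offending configuration. Once that invariant is in place, the clash with the branching degree $2$ and the uniqueness of successors in $C$ is immediate.
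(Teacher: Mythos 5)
Your setup matches the paper's: pass to Greibach normal form, view reachable states as sequences of identifiers, and blame transparency. But the decisive step is missing, and you have in fact flagged it yourself as ``the main obstacle'': from the fact that a long state $r_n=\gamma_1\cdots\gamma_m$ has many transparent components, each able to fire, you conclude by ``pigeonhole'' that either two of the resulting $b$-successors are non-bisimilar or that $r_n$ has more than two behaviourally distinct outgoing transitions. Neither alternative follows from pigeonhole on identifiers and summands: syntactically distinct successors $\alpha\gamma_{j+1}\cdots\gamma_m$ may perfectly well all be bisimilar to $(\uparrow,1^{n-1})$, and a state bisimilar to $(\uparrow,1^n)$ may have arbitrarily many outgoing transitions as long as they fall into the two bisimilarity classes the counter prescribes. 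What is needed is a concrete bisimulation-invariant quantity that separates the successors, and your sketch does not supply one. (Your preliminary step that the $r_n$ have unbounded length is fine, but it does not even guarantee that two distinct components can perform a $b$; the interior identifiers could a priori only contribute $a$-moves.)

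The paper closes exactly this gap with a different device: the length of a maximal run of consecutive $b$-steps. Each identifier of the (finite) specification has a bounded maximal $b$-run, so one picks $k$ exceeding all of these and considers $p\bisim(\uparrow,1^k)$, whose maximal $b$-run is exactly $k$; hence the $b$'s must be spread over at least two components $X$ before $Y$. Since $(\uparrow,1^k){\downarrow}$, every component of $p$ is accepting, so the transparency rule lets one skip the $b$-step(s) of $X$ and obtain a \emph{maximal} $b$-run from $p$ of length strictly less than $k$, ending in a state with no $b$ enabled --- while the matching counter state $(\uparrow,1^n)$ with $n>0$ still offers a $b$. This single invariant (maximal consecutive $b$-run length, which is preserved by bisimilarity) is what makes the skipped-prefix successor provably different in behaviour; your argument needs this or an equivalent invariant to go through, and without it the proof is incomplete at its crucial point.
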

\begin{proof}
Suppose such a guarded recursive specification does exist. Without loss of generality, we can assume it is in Greibach normal form, such that every state of its process graph is given by a sequential composition of identifiers. As in Figure~\ref{fig:pdatrans}, from every state a maximal number of consecutive $b$-steps can be executed, this is also true in the process graph of the specification, and this is also true for each of the identifiers. As the specification only has finitely many identifiers, there is a maximal number of consecutive $b$-steps any identifier can execute. Take a number $k$ larger than this maximum, and consider the state $(\uparrow, 1^k)$ of the counter, from which exactly $k$ consecutive $b$-steps can be executed. This state is bisimilar to a sequential composition $p$ of identifiers of the specification, so also from $p$, $k$ consecutive $b$-steps can be executed. By choice of $k$, these $b$-steps cannot all come from the same component of $p$, so $p$ contains two identifiers $X,Y$ ($X$ occurring before $Y$), each of which can execute at least one $b$-step. As $(\uparrow, 1^k) \downarrow$, also $p \downarrow$, and each identifier in $p$ must be accepting, so $X \downarrow$ and $Y \downarrow$. But then, the last operational rule of sequential composition can be applied to $X$, and the $b$-step(s) in $X$ can be skipped. This gives a sequence of less than $k$ consecutive $b$-steps from $p$, after which no further $b$ can be executed. But the resulting state must be bisimilar to a state $(\uparrow, 1^n)$ of the counter, with $n > 0$. From this state a further $b$ can be executed, which gives a contradiction.
\end{proof}

Transparency occurs because of the last rule of sequential composition in the operational semantics: process expression $p \seq q$ can execute a step from $q$, thereby forgetting $p$ if $p$ is accepting. When $p$ has intermediate acceptance, so $p$ can still execute a step, this step is lost.

In order to recover the correspondence between pushdown automata and sequential process algebra, we need to change this operational rule. In order not to cause confusion, we introduce a new operator $\seqc$, called \emph{sequencing}, that will replace the sequential composition operator $\seq$ of TSP. We call the resulting theory \TSPc{}. It was introduced in \cite{BLY17}.

\section{Sequential processes: \TSPc}
We replace the sequential composition $\seq$ of \TSP{} by the sequencing operator $\seqc$ of which the operational rules are shown in Figure~\ref{fig:semantics_seqc}.

\begin{figure}[htb]
  \centering
  \begin{osrules}
\osrule*{p \downarrow & q \downarrow}{p \seqc  q \downarrow}\label{osrule:seqcone}
  \qquad
    \osrule*{p\step{a} p'}{p \seqc  q \step{a} p' \seqc  q} \label{osrule:seqctwo}
  \qquad
    \osrule*{p \downarrow & p \nrightarrow & q \step{a} q' }{p \seqc  q
      \step{a} q'}\label{osrule:seqcthree}
   \end{osrules}
\caption{Operational semantics for sequencing.}
\label{fig:semantics_seqc}
\end{figure}

The last rule for sequencing has a so-called negative premise.
It is well-known that transition system specifications with negative
premises may not define a unique transition relation that agrees with
provability from the transition system specification
\cite{Gro93,BG96,Gla04}. Indeed, in \cite{BLY17} it was already
pointed out that the transition system specification in
Figure~\ref{fig:semantics_tspseq} gives rise to such anomalies in case we use unguarded recursion. Consider recursive equation
$X \defeqn X\seqc \pref[a]\emp +\emp$.  If
$X\nrightarrow$, then according to the rules for sequencing and
recursion we find that $X\step{a}\emp$, which is a contradiction. On the other hand, $X\rightarrow$ is not provable from the transition system
specification.

This is the second reason for restricting our attention to \emph{guarded}
recursive specifications.  If specification $\Delta$ is guarded,
then it is straightforward to prove that the mapping $S$ from process
expressions to natural numbers inductively defined by
$S(\emp)=S(\dl)=S(\pref[a]{p})=0$,
$S(p_1+p_2)=S(p_1\seqc p_2) = S(p_1)+S(p_2)+1$, and $S(X)=S(p)$ if
$(X\defeqn p)\in\Delta$ gives rise to a so-called
\emph{stratification} $S'$ from transitions to natural numbers defined
by $S'(p\step{a}p')=S(p)$ for all $a\in \Act \cup \{\tau\}$ and
process expressions $p$ and $p'$. In \cite{Gro93} it is proved that
whenever such a stratification exists, then the transition system
specification defines a unique transition relation that agrees with
provability in the transition system specification.

The recursive specification
  \begin{equation*}
    X \defeqn \emp + \pref[a]{Y \seqc X} \qquad
    Y \defeqn \emp + \pref[b]{\emp}+ a.Y \seqc Y \enskip.
  \end{equation*}
now yields a process graph bisimilar to the one in Figure~\ref{fig:pdatrans}, which has only binary branching. We see that it is the process of a pushdown automaton. In the next section, we will prove that every recursive specification over \TSPc{} generates a process graph that is bisimilar to the process graph of a pushdown automaton.
We find that the expressiveness of \TSPc{} is \emph{incomparable} to the expressiveness of \TSP: by means of a guarded specification over \TSPc{} we can define the always accepting counter (see Figure~\ref{fig:pda}) but not an unboundedly branching process, and for \TSP{} it is the other way around.

The operational rules presented in Fig~\ref{fig:semantics_seqc} are
in the so-called \emph{panth format} from which it immediately follows
that strong bisimilarity is a congruence \cite{Ver95}. Due to the sequencing operator, rooted branching bisimulation is no longer a congruence, and we have to add an extra condition: rooted \emph{divergence-preserving} branching bisimulation \cite{GW96, Lut20} is a congruence, which can be proved using \cite{FGL19}.

Using sequencing instead of sequential composition, the distributive
law $\A4$ is no longer valid as the following example shows.

\begin{exa}
  Consider the process expressions
    $(a.\emp + \emp)\seqc b.\emp$
  and
    $a.\emp\seqc b.\emp + \emp \seqc b.\emp$.
  On the one hand, since ${a.\emp+\emp}
  \step{a}\emp$, the last operational rule for $\seqc$ does not
  apply to $(a.\emp + \emp)\seqc b.\emp$, and hence $(a.\emp +
  \emp)\seqc b.\emp \nstep{b}$.
  On the other hand, ${\emp\seqc b.\emp}\step{b}\emp$, so ${a.\emp\seqc
     b.\emp + \emp\seqc b.\emp} \step{b} \emp$.
  It follows that
    $(a.\emp + \emp)\seqc b.\emp\not\bisim a.\emp\seqc b.\emp + \emp\seqc b.\emp$.
\end{exa}

In fact, due to this failure, there is no finite sound and ground-complete axiomatization of bisimilarity
for (the recursion-free fragment of) \TSPc{} \cite{Bel18,BLB19}.
In order to find an axiomatization, nevertheless, it suffices to add an auxiliary unary operator $\mathit{NA}$,
denoting \emph{non-acceptance}. Intuitively, $\mathit{NA}(p)$ behaves
as $p$ except that it does not have the option to accept
immediately. The operational semantics of $\mathit{NA}$ is obtained by
adding one very simple rule, see
Figure~\ref{fig:semantics_na}.

\begin{figure}[htb]
  \centering
  \begin{osrules}
     \osrule*{p \step{a} p'}{\mathit{NA}(p) \step{a} p'}
   \end{osrules}
\caption{Operational semantics for non-acceptance.}
\label{fig:semantics_na}
\end{figure}

Now, we can formulate a sound and ground-complete axiomatization of
bisimilarity for the recursion-free fragment of \TSPc{} in three parts. First, there are the axioms in Table~\ref {tab:axioms_tsp}
without the distributive law $\A4$. Second, the axiomatization of the \emph{NA} operator is
straightforward: see axioms \NA{1}--\NA{4} in Table~\ref{tab:axioms_na}.

Finally, instead of the distributive law we include axioms
\A{11}--\A{13} shown in Table~\ref{tab:axioms_na}.

\begin{table}[htb]

\centering
\begin{tabular}{lcl@{\qquad}l}
$\na{\dl}$ & $=$ & $\dl$ & $\NA1$ \\
$\na{\emp}$ & $=$ & $\dl$ & $\NA2$ \\
$\na{\pref[a]x}$ & $=$ & $\pref[a]x$ & $\NA3$ \\
$\na{x+y}$ & $=$ & $\na{x} + \na{y}$ & $\NA4$ \\
\\
$\na{x+y}\seqc z$ & $=$ & $\na{x}\seqc z + \na{y}\seqc z$ & $\A{11}$ \\
$(\pref[a]x + y + \emp)\seqc \na{z}$ & $=$ & $(\pref[a]x + y)\seqc\na{z}$ & $\A{12}$ \\
$(\pref[a]x + y + \emp)\seqc (z + \emp)$ & $=$ & $(\pref[a]x + y)\seqc (z +
                                           \emp) + \emp$ & $\A{13}$ \\
\end{tabular}
\caption{The axioms for the auxiliary operator $\mathit{NA}$ and
  weaker forms of distributivity.}
\label{tab:axioms_na}
\end{table}

The ground-completeness argument in \cite{Bel18, BLB19} proceeds via
an \emph{elimination} theorem: it is established that for every
recursion-free process expression $p$ there exists a recursion-free
process expression $q$ without occurrences of the operators $\seqc$
and $\mathit{NA}$ such that the equation $p=q$ is derivable from the
axioms above using equational logic. Thus the ground-completeness of
the axiomatisation of bisimilarity for (the recursion-free fragment)
of \TSPc{} follows from the
ground-completeness of the axiomatisation of bisimilarity for (the
recursion-free fragment of) BSP (see \cite[Theorem 4.4.12]{BBR10}).

In the context of a guarded recursive specification $\Delta$, we can,
moreover, obtain the following result.

\begin{thm}[head normal form theorem] \label{hnf}
  Let $\Delta$ be a guarded recursive specification.
  For every process expression $p$ there exists a natural number $n$
  and sequences of actions $a_1,\dots,a_n$ and process expressions
  $p_1,\dots,p_n$ such that the equation
  \begin{equation*}
     p = (\emp +{}) \sum_{i=1}^{n}a_i.p_i
  \end{equation*}
  (where $(\emp +{})$ indicates that there may or may not be a summand
  $\emp$), is derivable from the axioms for \TSPc{} and the equations in $\Delta$ using
  equational logic. The process expression $(\emp +)
  \sum_{i=1}^{n}a_i.p_i$ is called the \emph{head normal form} of $p$.
  \end{thm}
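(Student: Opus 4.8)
The plan is to prove the head normal form theorem by structural induction on the process expression $p$, using the axioms for \TSPc{} (Tables~\ref{tab:axioms_tsp} without $\A4$, and~\ref{tab:axioms_na}) together with the defining equations in $\Delta$. The statement to establish is that every $p$ provably equals a term of the shape $(\emp +{})\sum_{i=1}^{n}a_i.p_i$; I will call such a term a head normal form (hnf). Note that $\dl$ (empty sum) and $\emp$ and $a.p$ are already hnfs, so the base cases are immediate, and the recursion case $X$ reduces via $X\defeqn p$ and the induction hypothesis applied to $p$ — except that guardedness is needed to make this well-founded, so the induction should really be on a measure that decreases along unguarded substitution (for instance, the number of $\seqc$- and $+$-operators not under an action prefix, with process identifiers unfolded once; this is essentially the stratification map $S$ already introduced in the text to justify guardedness). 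So the true induction is on $S(p)$, with a sub-induction on the syntactic structure.

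For the operator cases, suppose by induction that $p$ and $q$ have hnfs $p = (\emp +{})\sum_{i=1}^{m}a_i.p_i$ and $q = (\emp +{})\sum_{j=1}^{k}b_j.q_j$. For $p + q$: combining the two summations and using $\A1$--$\A3$ to merge the $\emp$ summands gives an hnf directly. For $\na{p}$: using $\NA4$ to distribute $\na{}$ over the summation, then $\NA3$ on each $a_i.p_i$ summand and $\NA2$ (resp.\ $\NA1$ via $\A6$) to delete the $\emp$ summand yields $\sum_{i=1}^m a_i.p_i$, again an hnf (with no $\emp$ summand). The interesting case is $p \seqc q$. Here I would argue by cases on whether $p$'s hnf has an $\emp$ summand. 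If $p$ has \emph{no} $\emp$ summand, so $p = \sum_{i=1}^{m}a_i.p_i$, then by $\A{11}$ (which is exactly left-distributivity of $\seqc$ over a sum of guarded terms, read via $\na{}$; since $\na{\sum a_i.p_i} = \sum a_i.p_i$ by $\NA3$--$\NA4$, the term is already in the required $\na{x+y}$ shape) we get $p \seqc q = \sum_{i=1}^{m}(a_i.p_i)\seqc q$, and then $\A{10}$ rewrites each summand to $a_i.(p_i \seqc q)$, producing an hnf with no $\emp$ summand. If $p$ \emph{does} have an $\emp$ summand, write $p = \emp + p'$ with $p' = \sum_{i=1}^m a_i.p_i$. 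Then $\A{12}$ and $\A{13}$ are designed precisely to handle $(\pref[a]x + y + \emp)\seqc z$: applying $\A{13}$ (when $q$'s hnf has an $\emp$ summand) or $\A{12}$ (when, after possibly rewriting via $\NA{}$-axioms, the right factor can be placed in $\na{z}$ form — combined with the fact that one can always split $q = \na{q} + (\text{the }\emp\text{ summand, if present})$ using $\NA2$/$\A6$), we reduce to the no-$\emp$ case plus an explicit $\emp$ summand, and then proceed as before. Some care is needed to massage $q$ into the exact left-hand-side shapes of $\A{12}$/$\A{13}$; this should be routine using $\A1$--$\A3$ and the $\NA{}$-axioms.

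The main obstacle I expect is precisely the bookkeeping around the $\emp$ summands in the $\seqc$ case: axioms $\A{12}$ and $\A{13}$ have rigid left-hand sides ($(\pref[a]x + y + \emp)\seqc z$, with the $z$ either of the form $\na{z}$ or $z + \emp$), so one must carefully normalize both factors of $p \seqc q$ into those shapes, and handle the degenerate sub-cases (e.g.\ $p = \emp$, where $p$'s hnf has no guarded summand at all, so $\A9$ applies; or $p = \dl$, where $\A7$ applies; or the right factor $q = \dl$, where one needs $\A7$ after distributing). A clean way to organize this is to first prove a small lemma: every hnf is provably equal either to $\dl$, or to $\emp$, or to $\na{r} + (\emp+{})$ where $r$ is a nonempty guarded sum — i.e.\ separate out the "deadlock", "pure termination", and "genuinely active" shapes — and then treat $p \seqc q$ by a finite case analysis on these three shapes for $p$ and for $q$. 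The recursion/guardedness case is the other point requiring attention: one must be sure the measure $S$ strictly decreases when replacing $X$ by its body and pushing into the induction, which is exactly where guardedness of $\Delta$ (every identifier guarded by an action prefix) is used, mirroring the stratification argument already given in the text.

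Everything else — the congruence of $\bisim$, soundness of the axioms, and the fact that the transition relation is well-defined — is available from the earlier parts of the excerpt, and is not needed for this purely equational statement anyway; the proof lives entirely inside equational logic over the axiom system.
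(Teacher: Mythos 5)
Your proposal is correct and follows essentially the same route as the paper: reduce to a guarded expression by unfolding unguarded identifier occurrences (this is where guardedness of $\Delta$ is used), then do structural induction with the $\seqc$-case split according to whether the head normal forms of the two arguments have an $\emp$-summand, handled exactly via the $\mathit{NA}$-axioms together with \A{11}, \A{12}, \A{13} (and \A7, \A9, \A{10} for the degenerate and prefix-shifting steps, where your citation of \A{10} is in fact the precise axiom needed). The only quibble is your well-foundedness bookkeeping: the stratification map $S$ does \emph{not} strictly decrease when an identifier is replaced by its body (by definition $S(X)=S(p)$ for $X\defeqn p$); the measure that does decrease is the number of unguarded occurrences of process identifiers, which is exactly how the paper justifies the preliminary unfolding before the structural induction.
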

\begin{proof}
  Let $p$ be a process expression. Without loss of generality we may
  assume that $p$ is guarded. For we can replace every unguarded
  occurrence of a process identifier in $p$ by the right-hand side of
  its defining  equation in $\Delta$ which, since $\Delta$ is guarded, results in
  process expression with one fewer unguarded occurrence of a process
  identifier. We now proceed by induction on the structure of $p$.

  Note that $p$ cannot be itself a process identifier, since then $p$
  would not be guarded.

  If $p=\dl$, then $p$ is a head normal form.

  If $p=\emp$, then by \A6{} we have $p=\emp+\dl$, which
  is a head normal form.

  If $p=a.p'$, then by \A6{} and \A1{} we have that
  $p=\dl + a.p'=\sum_{i=1}^0 a_ip_i+a.p'=\sum_{i=1}^1 a_i.p_i$, with
  $a_1=a$ and $p_1=p'$, which is a head normal form.

  Suppose that $p=p'+p''$ and (using the induction hypothesis)
  that
  \begin{equation*}
    p' = (\emp +{}) \sum_{i=1}^{n'}a_{i}'.p_{i}'\ \text{and}\
    p'' = (\emp +{})\sum_{i=1}^{n''}a_{i}''.p_{i}''\enskip.
  \end{equation*}
  Then, using \A1 and  \A2 and, if
  necessary, \A3 to get rid of a superfluous occurrence
  of $\emp$, we get that
  \begin{equation*}
    p= (\emp +{}) \sum_{i=1}^{n'+n''}a_i.p_i\enskip,
  \end{equation*}
  where $a_i=a_{i}'$ and $p_i=p_{i}'$ if $1\leq i \leq n'$ and
  $a_i=a_{i-n'}''$ and $p_i = p_{i-n'}''$ if $n' < i \leq n'+n''$, and
  so $p$ has a head normal form.

  Suppose that $p=p'\seqc p''$, and (using the induction hypothesis)
  that
  \begin{equation*}
    p' = (\emp +{}) \sum_{i=1}^{n'}a_{i}'.p_{i}'\ \text{and}\
    p'' = (\emp +{})\sum_{i=1}^{n''}a_{i}''.p_{i}''\enskip.
  \end{equation*}
  We distinguish three cases:
  \begin{enumerate}
    \item Suppose that the head normal form of $p'$ does not have the
      $\emp$-summand.
      If $n'=0$, then  $p = p' \seqc p'' = \dl \seqc p'' = \dl$ and
      the latter is a head normal form.
      Otherwise, if $n'>0$, then, by axioms \NA{1}, \NA{3} and \NA{4} we have
      that $p' = \mathit{NA}(\sum_{i=1}^{n'}a_{i}'.p_{i}')$. It then
      follows with applications \A{11}, \A5{}, \NA{1}, \NA{3} and
      \NA{4} that
      \begin{equation*}
          p = p'\seqc p'' =
          \mathit{NA}(\sum_{i=1}^{n'}a_{i}'.p_{i}')\seqc p'' =
          \sum_{i=1}^{n'}\mathit{NA}(a_i'.p_i')\seqc p''
        = \sum_{i=1}^{n'}a_i'.p_i' \seqc p''\enskip.
      \end{equation*}
    \item Suppose that the head normal form of $p'$ has the
      $\emp$-summand, but the head normal form of $p''$ does not.
      If $n'=0$, then, by \A6 we have that $p'=\emp$,
      so by \A9 it follows that
      $p = p'\seqc p'' = 1\seqc p'' = p''$.
      Otherwise, if $n'>0$, then we first note that, since the head
      normal form of $p''$ does not have the $\emp$-summand, by the
      axioms for $\mathit{NA}$ we get that $p'' =
      \mathit{NA}(p'')$. Then we find with an application of the axiom
      \A{12} (and applications of \A1 and \A2 if necessary) that
      \begin{equation*}
        p = p' \seqc p'' = p' \seqc \mathit{NA}(p'') =
        \left(\sum_{i=1}^{n'}a_i'.p_{i}'\right)\seqc\mathit{NA}(p'') =
        \left(\sum_{i=1}^{n'}a_i'.p_{i}'\right)\seqc p''\enskip.
      \end{equation*}
      We can then proceed as in the first case to find the head normal
      form for $p=\left(\sum_{i=1}^{n'}a_i'.p_{i}'\right)\seqc p''$.
  \item Suppose that both the head normal forms of $p'$ and $p''$ have
    the $\emp$-summand.
    Again, if $n'=0$, then, by \A6{} we have that $p'=\emp$,
      so by \A{9} it follows that
      $p = p'\seqc p'' = \emp\seqc p'' = p''$.
    Otherwise, if $n'>0$, then we apply the axiom \A{13} to get
      $p = p' \seqc p'' = \emp +
      \left(\sum_{i=1}^{n'}a_i'.p_i'\right)\seqc p''$, we proceed
      as in the first case to find a head normal form
        $\left(\sum_{i=1}^{n'}a_i'.p_i'\right)\seqc p''$
        for $\left(\sum_{i=1}^{n'}a_i'.p_i'\right)\seqc p''$, and
        observe that
        $p = \emp + \left(\sum_{i=1}^{n'}a_i'.p_i'\right)\seqc p''$.
      \end{enumerate}

   Suppose that $p=\mathit{NA}(p')$, and (using the induction
   hypothesis) that
  \begin{equation*}
    p' = (\emp +{}) \sum_{i=1}^{n}a_{i}.p_{i}\enskip.
  \end{equation*}
  Then by the axioms for $\mathit{NA}$, \A1 and \A6 we have
  \begin{equation*}
    p=\mathit{NA}(p')=(\dl +{}) \sum_{i=1}^{n}a_i.p_i =
    \sum_{i=1}^na_i.p_i\enskip.
    \qedhere
  \end{equation*}
\end{proof}

\section{The correspondence} 

The classical theorem states that a language can be defined by a
push-down automaton just in case it can be defined by a context-free
grammar.  In our setting, we do have that the process of a given
guarded specification over \TSPc{} (i.e., the equivalence class of process graphs bisimilar to the process graph
  associated with the specification) coincides with the process of some
push-down automaton (i.e., the equivalence
  class of process graphs bisimilar to the process graph associated
  with the push-down automaton), but
not the other way around: there is a push-down automaton of which the
process is different from the process of any guarded sequential
specification. In this section, we will prove these facts; in the next
section, we investigate what is needed in addition to recover the full
correspondence.

We use the following extra notation.  If $\alpha\in\Procids^{*}$, say
$\alpha=X_1\cdots X_n$, then $\alpha$ denotes the process expression
inductively defined by $\alpha=\emp{}$ if $n=0$ and $\alpha=(X_1\cdots
X_{n-1})\seqc X_n$ if $n>0$.
We will also use this construct indexed by a word $x \in \mathcal{D}^{*}$, so if we have $X_d \in \Procids$ ($d \in \mathcal{D}$), then $\alpha_x$ is inductively defined by 
$\alpha_{\epsilon} = \emp$ and $\alpha_{xd} = \alpha_x \seqc X_d$.

First of all, we look at the failing direction. It can fail if the push-down automaton has at least two states. For one state, it does work.

\begin{thm} \label{th:onestate}
For every one-state pushdown automaton there is a guarded sequential
specification of which the process coincides with the process of the automaton.
\end{thm}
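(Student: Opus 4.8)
The plan is to read a guarded sequential specification off the one-state automaton $M=(\{s\},\mathcal{A},\mathcal{D},{\to},s,{\downarrow})$ directly, using one process identifier $X_d$ per stack symbol $d\in\mathcal{D}$ together with one further identifier $E$ standing for the empty stack, and then to verify that the process graph of the specification is bisimilar to $\mathcal{P}(M)$. For $x=d_1\cdots d_n\in\mathcal{D}^{*}$ write $\gamma(x)\defeqn X_{d_1}\seqc\cdots\seqc X_{d_n}$ (so $\gamma(\epsilon)=\emp$) and $\delta(x)\defeqn\gamma(x)\seqc E$ (so $\delta(\epsilon)=E$), and put
\begin{equation*}
  X_d\ \defeqn\ (\emp+{})\!\!\sum_{s\xrightarrow{a[d/x]}s}\!\!\pref[a]{\gamma(x)},
  \qquad
  E\ \defeqn\ (\emp+{})\!\!\sum_{s\xrightarrow{a[\epsilon/x]}s}\!\!\pref[a]{\delta(x)},
\end{equation*}
with the leading $\emp$-summand present exactly when $s\in{\downarrow}$. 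Every right-hand side is in Greibach normal form and every identifier occurs action-guarded, so (by the stratification argument recalled above for $\seqc$) the specification is guarded and defines a unique transition relation. Note that $E$ is kept as the bottom component, so every reachable term has the shape $\delta(y)$; this is what lets a step that ``pops past the whole stack'' reappear as a step of $E$.

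The core of the argument is to check that
\begin{equation*}
  R\ \defeqn\ \{\,((s,x),\,\delta(x))\mid x\in\mathcal{D}^{*}\,\}\cup\{\,((s,xy),\,\gamma(x)\seqc\delta(y))\mid x,y\in\mathcal{D}^{*}\,\},
\end{equation*}
symmetrised, is a bisimulation between $\mathcal{P}(M)$ and the specification; then $\uparrow_{\mathcal{P}(M)}=(s,\epsilon)$ and $E=\delta(\epsilon)$ are bisimilar, which is the claim. The transition matching is mechanical. A step of $(s,dy)$ comes from some $s\xrightarrow{a[d/x']}s$ and is matched, via the middle rule for $\seqc$, by $X_d\seqc\delta(y)\step{a}\gamma(x')\seqc\delta(y)$; conversely a step of $X_d\seqc\delta(y)$ can (via that same rule) only stem from a summand $\pref[a]{\gamma(x')}$ of $X_d$, hence from a transition $s\xrightarrow{a[d/x']}s$, and is matched by $(s,dy)\step{a}(s,x'y)$. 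The empty-stack steps $(s,\epsilon)\step{a}(s,x)$ are matched by the summands $\pref[a]{\delta(x)}$ of $E$. Using $\A5$, $\A8$ and $\A9$ one has $\gamma(x')\seqc\delta(y)=\delta(x'y)$ and $\gamma(\epsilon)\seqc\delta(y)=\delta(y)$, so successors stay inside $R$; acceptance matches because each identifier carries the $\emp$-summand iff $s\in{\downarrow}$ iff every state of $\mathcal{P}(M)$ accepts.

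The delicate point — the one I expect to cost the most work — is the negative premise in the third rule for $\seqc$. In a term $X_d\seqc\delta(y)$ that rule becomes applicable exactly when $X_d{\downarrow}$ and $X_d\nrightarrow$, i.e.\ when $s\in{\downarrow}$ and $d$ heads no transition of $M$; the specification then lets $\delta(y)$ continue, whereas the configuration $(s,dy)$ of $\mathcal{P}(M)$ is a deadlocked accepting state. The plan is to dispose of this by a preliminary normalisation of $M$ before applying the construction: one first truncates, in every transition $s\xrightarrow{a[d/x]}s$, the push string $x$ at (and including) its first symbol heading no transition — whatever lies below such a symbol can never be uncovered, so this leaves $\mathcal{P}(M)$ unchanged — and then one must arrange, by reshaping how and when such ``dead'' symbols get pushed, that in the resulting automaton they can only occupy the very bottom of the stack; equivalently, that in the construction an exhausted $X_d$ is never followed by anything but a term equal to $\emp$ (and if $s\notin{\downarrow}$ there is nothing to do, since then $X_d=\dl$ and $\dl\seqc z=\dl$ already deadlocks). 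Establishing this normalisation, and that it preserves the process, is the part requiring genuine care; once $M$ is in such a form the bisimulation above goes through verbatim and the remainder is routine transition chasing.
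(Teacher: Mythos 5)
Your construction and bisimulation are the ones the paper uses (your $E$ is its initial identifier $X$, your $X_d$ its $X_d$, your relation is its relation between $(\uparrow,x)$ and $\alpha_x\seqc X$), so up to the point you flag there is nothing to object to. The genuine gap is precisely the step you postpone: the preliminary normalisation that is supposed to arrange that ``dead'' accepting symbols only ever sit at the very bottom of the stack. That normalisation does not exist in general. Consider the one-state automaton with final state $\uparrow$, data $\{d,f\}$ and transitions ${\uparrow}\xrightarrow{a[\epsilon/f]}{\uparrow}$, ${\uparrow}\xrightarrow{a[f/ff]}{\uparrow}$, ${\uparrow}\xrightarrow{b[f/\epsilon]}{\uparrow}$, ${\uparrow}\xrightarrow{c[f/df]}{\uparrow}$, where $d$ heads no transition. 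From $(\uparrow,f^{\,n})$ a single $c$-step reaches the accepting deadlock $(\uparrow,df^{\,n})$ while $f^{\,n-1}$ is still on the stack; since a pushdown step can only rewrite the top symbol, \emph{any} automaton with the same process must reach an accepting deadlock in one step from configurations carrying arbitrarily much live material, so in it too a blocking accepting symbol necessarily ends up above live content. Hence no process-preserving reshaping of the kind you appeal to can be carried out, and your bisimulation then breaks exactly where you predicted: the term reached after the $c$-step, $\gamma(df)\seqc\delta(f^{\,n-1})$ with $X_d=\emp$, still performs $a$-, $b$- and $c$-steps through the negative-premise rule, whereas $(\uparrow,df^{\,n})$ is stuck.

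Two further remarks. First, the paper's own proof is silent on this case: it simply posits the relation $(\uparrow,x)\mapsto\alpha_x\seqc X$ and matches the transitions coming from the head identifier, never discussing an exhausted accepting $X_d$ above a live tail, so the case you isolated is not treated there either. Second, the obstruction is structural rather than an artefact of the particular $X_d$ you chose: in \TSPc{} there is no term that is accepting yet blocks its right context ($\emp$ is transparent, $\dl$ is non-accepting), and in any specification in Greibach normal form a state whose $c$-successor must be an accepting deadlock is forced to have an accepting and inert tail, making that state bisimilar to a single identifier --- which cannot happen for infinitely many pairwise non-bisimilar counter states. Truncating push strings at the first dead symbol (your first normalisation step) does not help, because the truncated dead symbol is still pushed on top of the old stack contents. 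So the delicate point you identified is a real gap, and the route you sketch through it does not go through; a correct treatment has to either exclude or handle the dead-symbol configurations by an argument of a different nature than the relation $(\uparrow,x)\mapsto\alpha_x\seqc X$.
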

\begin{proof}
Let $M = (\{\uparrow\},\mathcal{A},\mathcal{D},{\rightarrow},{\uparrow},{\downarrow})$ be a pushdown automaton. We have identifiers $X$ and $X_d$ for $d \in \mathcal{D}$.
If there is no transition 
${\uparrow} \xrightarrow{a[\epsilon/x]} {\uparrow}$, then we can take either $X = \emp$ or $X = \dl$ as the resulting specification (in case ${\downarrow} = \{\uparrow\}$ resp. ${\downarrow} = \emptyset$). Otherwise, add a summand $a.\alpha_x \seqc X$ for each such transition to the equation of the initial identifier $X$.
Next, the equation for the added identifier $X_d$ has a summand $a.\alpha_x$ for each transition ${\uparrow} \xrightarrow{a[d/x]} {\uparrow}$, and a summand $\emp$ or $\dl$, depending on whether ${\uparrow} \in {\downarrow}$ or not.

Now a bisimulation between the process graph of $M$ and the process graph of the constructed specification can be obtained by relating a state $(\uparrow, x)$ ($x \in \mathcal{D}^{*}$) to the state given by the sequence of identifiers $\alpha_x \seqc X$. The initial states are related (by identifying $X$ and $\emp \seqc X$) and $(\uparrow, dx) \step{a} (\uparrow, yx)$ just in case $X_d \seqc \alpha_x \seqc X \step{a} \alpha_y \seqc \alpha_x \seqc X$. Also, $(\uparrow, \epsilon) \step{a} (\uparrow, x)$ just in case $X \step{a} \alpha_x \seqc X$.
\end{proof}

\begin{exa} \label{ex:onestate}
The counter of Figure~\ref{fig:pdatrans} can be seen as a stack over a singleton data set $\{1\}$. For a general finite data set $\mathcal{D}$,
the stack that is accepting in every state has a pushdown automaton with state $\uparrow$, actions $\{ push_d, pop_d \mid d \in \mathcal{D}\}$, ${\downarrow} = \{\uparrow\}$ and 
transitions ${\uparrow} \xrightarrow{push_d[\epsilon/d]} {\uparrow}$ and ${\uparrow} \xrightarrow{pop_d[d/\epsilon]} {\uparrow}$ for each $d \in \mathcal{D}$ and transitions ${\uparrow} \xrightarrow{push_e[d/ed]} {\uparrow}$ for each $d,e \in \mathcal{D}$.

The following guarded recursive specification is obtained for this pushdown automaton in accordance with the procedure described in the proof of Theorem~\ref{th:onestate}.
 \begin{equation*}
    X \defeqn \emp + \sum_{d \in \mathcal{D}} push_d.X_d \seqc X \qquad
    X_d \defeqn \emp + pop_d.\emp + \sum_{e \in D} push_e.X_e \seqc X_d \enskip (d \in \mathcal{D})
  \end{equation*}
\end{exa}

Note that if we use the sequential composition operator $\seq$ of \TSP{} \cite{BBR10} instead of the present sequencing operator $\seqc$ of \TSPc, then Theorem~\ref{th:onestate} fails because of the transparency illustrated in Figure~\ref{fig:UnboundedBranching}. With the sequential composition operator, we cannot find a recursive specification of the stack accepting in every state of Example~\ref{ex:onestate}, because of the same phenomenon. 

\begin{figure}[htb]
\begin{center}
\begin{tikzpicture}[->,>=stealth',node distance=2cm, node font=\footnotesize, state/.style={circle, draw, minimum size=.5cm,inner sep=0pt}]
  \node[state,initial,initial text={},initial where=above] (s0) {$\uparrow$};
  \node[state,accepting] [right of=s0] (s1) {$\downarrow$};
  
  \path[->]
  (s0) edge[in=225,out=135,loop]
         node[left] {$\begin{array}{c}a[\epsilon/1]\\
                                      a[1/11]\\
                                      b[1/\epsilon]\end{array}$} (s0)
  (s0) edge node[above] {$\begin{array}{c}c[\epsilon/\epsilon]\\ c[1/1]\end{array}$} (s1)
  (s1) edge[in=-45,out=45,loop] node[right] {$b[1/\epsilon]$} (s1);
\end{tikzpicture}
\end{center}
\caption{Pushdown automaton used in the proof of Theorem~\ref{nospec}.}\label{fig:pda2}
\end{figure}

\begin{figure}[htb]
\begin{center}
\begin{tikzpicture}[->,>=stealth',node distance=2cm, node
  font=\footnotesize, state/.style={ellipse, draw, minimum size=.5cm,inner sep=0pt}]
  \node[state,initial,initial text={},initial where=left] (s0)
  {$(\uparrow,\epsilon)$};
  \node[state] [right of=s0] (s1) {$(\uparrow,1)$};
  \node[state] [right of=s1] (s2) {$(\uparrow,11)$};
  \node[state, draw=none] [right of=s2] (sdots) {$\dots$};
  \node[state,accepting] [below of=s0] (t0) {$(\downarrow,\epsilon)$};
  \node[state,accepting] [right of=t0] (t1) {$(\downarrow,1)$};
  \node[state,accepting] [right of=t1] (t2) {$(\downarrow,11)$};
  \node[state, draw=none] [right of=t2] (tdots) {$\dots$};

  \path[->]
    (s0) edge[bend left] node[above] {$a$} (s1)
    (s1) edge[bend left] node[above] {$a$} (s2)
    (s2) edge[bend left] node[above] {$a$} (sdots);
  \path[->]
    (sdots) edge[bend left] node[below] {$b$} (s2)
    (s2) edge[bend left] node[below] {$b$} (s1)
    (s1) edge[bend left] node[below] {$b$} (s0);
  \path[->]
    (tdots) edge node[below] {$b$} (t2)
    (t2) edge node[below] {$b$} (t1)
    (t1) edge node[below] {$b$} (t0);
  \path[->]
    (s0) edge node[right] {$c$} (t0)
    (s1) edge node[right] {$c$} (t1)
    (s2) edge node[right] {$c$} (t2);
\end{tikzpicture}
\end{center}
\caption{The process graph associated with the pushdown automaton
  in Figure~\ref{fig:pda2}.}\label{fig:pdatrans2}
\end{figure}

\begin{thm} \label{nospec}
There is a pushdown automaton with two states, such that there is no
guarded sequential specification with the same process.
\end{thm}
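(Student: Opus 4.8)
The plan is to consider the pushdown automaton in Figure~\ref{fig:pda2}, whose associated process graph is depicted in Figure~\ref{fig:pdatrans2}, and to show that no guarded sequential specification over \TSPc{} has a bisimilar process graph. The key feature to exploit is the combination of an unbounded ``counter region'' (the upper row of states $(\uparrow,1^n)$, each able to perform exactly $n$ consecutive $b$-steps before the option to do a $c$) together with a lower row of \emph{accepting} states $(\downarrow,1^n)$ reached from $(\uparrow,1^n)$ by a $c$-step, from which again exactly $n$ consecutive $b$-steps are possible; crucially, none of the upper states is accepting, whereas all lower states are. So the argument combines the counting phenomenon already used in the proof for Figure~\ref{fig:pdatrans} with the extra discipline that $\seqc$ imposes on when a sequential component may be skipped.

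First I would argue, as in the earlier theorem, that without loss of generality such a specification may be assumed to be in Greibach normal form, so that every reachable state is a sequence $Z_1 \seqc \cdots \seqc Z_m$ of process identifiers (with $\alpha=\emp$ for the empty sequence). Since there are finitely many identifiers, there is a uniform bound $K$ on the number of consecutive $b$-steps any single identifier can perform. Now pick $k > K$. The state $(\uparrow,1^k)$ is bisimilar to some sequence $p = Z_1 \seqc \cdots \seqc Z_m$; from $p$ one can do exactly $k$ consecutive $b$-steps and then a $c$, but $p$ itself is not accepting (since $(\uparrow,1^k)\not\downarrow$). The $k > K$ consecutive $b$-steps cannot all originate in $Z_1$, so the $b$-behaviour must pass into at least a second component. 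Under the operational rules for $\seqc$, control can only move from $Z_1$ to $Z_2$ after $Z_1$ is accepting \emph{and} stuck ($Z_1\downarrow$ and $Z_1\nrightarrow$); in fact, at the point in the $b$-sequence where control is ``inside'' component $Z_j$ with $j\ge 2$, the residual of $Z_1,\dots,Z_{j-1}$ has already been discarded. I would then track the state reached after performing $c$ from the various intermediate points of the $b$-chain: from $(\uparrow,1^k)$ after $i$ $b$-steps one reaches $(\uparrow,1^{k-i})$, which does a $c$ to the accepting state $(\downarrow,1^{k-i})$ capable of exactly $k-i$ further $b$-steps. Matching this against the structure of $p$, the identifier currently carrying the $b$-behaviour must ``know'' a residual count arbitrarily close to $k$, which (as $k$ can be taken as large as we like relative to the fixed finite specification) forces one identifier to exhibit unboundedly many distinct non-bisimilar $c$-successors — contradicting bounded branching of the identifiers' graphs, or alternatively contradicting the finiteness of the specification directly.

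The cleanest packaging of the contradiction, and the step I expect to be the main obstacle, is the bookkeeping about \emph{which} component of $p$ is active and what its accepting status must be at each stage, so that the non-acceptance of all the $(\uparrow,1^n)$ states is reconciled with the $\seqc$-rules. Concretely: the ``non-accepting counter'' part forces that in $p = Z_1 \seqc \cdots \seqc Z_m$ the component currently performing $b$'s cannot be accepting while still able to move (otherwise its residual could be skipped, as in the transparency argument), yet after the $c$-step the corresponding state \emph{is} accepting and still does the same number of $b$'s — so the specification would need, for each large $n$, an identifier that is non-accepting, does $n$ $b$'s, and has a $c$-transition to something accepting that again does $n$ $b$'s; with finitely many identifiers and the uniform bound $K < k$, no single identifier can realise this for two different large values, and the residual sequence $Z_{j}\seqc\cdots\seqc Z_m$ is forced to change only in the first component, which cannot encode an unbounded counter. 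I would finish by making this ``only finitely many options'' count precise — e.g. via a pumping-style argument on the reachable sequences, showing that two distinct counter values $1^{n}$ and $1^{n'}$ would get the same head component and hence be wrongly identified. This final counting step is where the real care is needed; the rest is a direct adaptation of the transparency argument already carried out for Figure~\ref{fig:pdatrans}.
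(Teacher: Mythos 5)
You pick the right automaton and the right skeleton (Greibach normal form, states as sequences of identifiers, a pigeonhole over the finitely many identifiers, and the observation that for large counter values the $b$-behaviour must spill over from the head $Z_1$ into the tail), but the decisive contradiction is missing, and the routes you sketch for producing it do not work. The claim that some single identifier must ``know'' a residual count close to $k$, and hence exhibit unboundedly many non-bisimilar $c$-successors, is exactly what the tail encoding avoids: the count is stored in the \emph{length} of the remaining sequence $Z_j \seqc \cdots \seqc Z_m$, not in any one identifier (this is precisely how the always-accepting counter of Figure~\ref{fig:pdatrans} is specified with two identifiers), so no conflict with bounded branching or with finiteness of the specification arises this way. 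Similarly, ``two distinct counter values get the same head component and hence are wrongly identified'' is a non sequitur: equal heads with different tails need not be bisimilar. Finally, your appeal to transparency (``the active component cannot be accepting while still able to move, otherwise its residual could be skipped'') is false for $\seqc$: the negative premise in the third rule for sequencing blocks skipping unless the head is accepting \emph{and} stuck, so no such constraint on the heads of the non-accepting states $(\uparrow,1^n)$ follows.

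The missing idea, which is how the paper closes the argument, is that the tail cannot record whether the $c$ has already happened. Write $\alpha_i$ for the sequence bisimilar to $(\uparrow,1^i)$ and $X_i$ for its head; every trace of $X_i$ is also a trace of $\alpha_i$, so $X_i$ can do at most one $c$ and at most $i$ $b$'s without an $a$. Choosing $k$ larger than the number of identifiers gives $n<m\le k$ with $X_n=X_m$, so the head of $\alpha_m$ can contribute at most $n<m$ of the $b$'s. Hence along $\alpha_m\steps{b^m}$ control reaches the suffix of $\alpha_m$; that suffix is a state reached by executing only $a$'s and $b$'s, so it is bisimilar to some $(\uparrow,1^j)$ and admits an initial $c$. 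But along $\alpha_m\steps{cb^m}$ the head can again contribute at most one $c$ and at most $n<m$ $b$'s, so control reaches the \emph{same} suffix, which then enables a second $c$ --- impossible from $(\uparrow,1^m)$, where after a $c$ no further $c$ can ever occur. Your proposal never isolates this ``second $c$'' contradiction; supplying it (or an equivalent) is what is needed to complete the proof. A minor further point: your uniform bound $K$ on consecutive $b$-steps holds only for identifiers actually occurring as heads of reachable sequences; the paper instead uses the bound $i$ attached to $X_i$ together with the repetition $X_n=X_m$, which avoids this issue.
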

\begin{proof}
Consider the example pushdown automaton in Figure~\ref{fig:pda2}.
Suppose there is a finite guarded sequential specification with the
same process, depicted by the representative in
Figure~\ref{fig:pdatrans2}. Without loss of generality we can assume
that this specification is in Greibach Normal Form (see
\cite{BLB19}). As a consequence, each state of the process graph generated by the automaton is bisimilar to a sequence of identifiers of the specification (as defined earlier).
Take $k$ a natural number that is larger than the number of process identifiers of the specification, and for $ 0 \leq i \leq k$ consider the state $(\uparrow,1^i)$ reached after executing $i$ $a$-steps. From this state, consider any sequence of steps $\steps{w}$ where $a \not\in w$. Thus, $w$ contains at most one $c$ and at most $i$ $b$'s. 

In the process graph generated by the recursive specification, this same sequence of steps $\: {\steps{w}}$ is possible from the sequence of identifiers $\alpha_i$ bisimilar to state $(\uparrow,1^i)$. Let $X_i$ be the first element of $\alpha_i$. From $X_i$, we can also execute at most one $c$-step and $i$ $b$-steps, without executing an $a$-step.

Since $k$ is larger than the number of process identifiers of the specification, there must be a repetition in the identifiers $X_i$ ($i \leq k$). Thus, there are numbers $n,m$, $n<m \leq k$, with $X_n = X_m$. The process identifier $X_m$ can execute at most one $c$ and $n < m$ $b$'s without executing an $a$.
But $\alpha_m\steps{b^m}$, so the additional $b$-steps must come from the second and following identifiers of the sequence. As the second identifier is reached by just executing $b$'s, this is a state reached by just executing $a$'s and $b$'s, so it must allow an initial $c$-step.
Now we can consider $\alpha_m\steps{cb^m}$. This sequence of steps must also reach the second identifier, but then, a second $c$ can be executed, contradicting that $\alpha_i$ is bisimilar to $(\uparrow,1^i)$.

Thus, our assumption was wrong, and the theorem is proved.
\end{proof}

We see that the contradiction is reached, because when we reach the second identifier in the sequence $\alpha_i$, we do not know whether we are in a state relating to the initial state or the final state of the pushdown automaton. Going from the first identifier to the second identifier by means of the sequencing operator, no extra information can be passed along. In the next section we will add a mechanism that allows the passing of extra information along the sequencing operator.

Theorem~\ref{nospec} holds for the sequencing operator we introduced, but it holds in the same way for the sequential composition operator of \TSP{} \cite{BBR10}. No intermediate acceptance is involved in the proof.

In the other direction, we can find a pushdown automaton with the same
process as a given guarded sequential specification. The proof will be
more complicated than the classical proof, where it is only needed to
find a pushdown automaton with the same language. The classical proof
uses the equivalence of acceptance by final state and acceptance by
empty stack, which does not hold in bisimulation semantics. 

We again have to deal with the failure of the distributive law. Due to this,
we carefully need to consider every instance of intermediate
acceptance. Consider the sequencing $(a.\emp + \emp) \seqc
b.\emp$. The left argument of this sequencing shows intermediate
acceptance, the right argument does not. As $(a.\emp + \emp) \seqc b.\emp \bisim a.\emp \seqc b.\emp$, the intermediate acceptance in the first argument is redundant, and can be removed. We have to restrict the notion of Greibach normal form, in order to remove all redundant intermediate acceptance.

We proceed to recall the definition of Acceptance Irredundant Greibach
Normal Form from \cite{BLB19}.
First, partition the set of process identifiers $\Procids$ into sets
  $\ProcidsT=\{X\in\Procids\mid \term{X}\}$
and
  $\ProcidsNT=\{X\in\Procids\mid\nterm{X}\}$.
Then, define the set $\ProcidsINT$ of \emph{hereditarily
non-accepting} process identifiers as the largest subset of 
$\ProcidsNT$ such that for all $X\in\ProcidsINT$ we have that
if $(X\defeqn p)\in\Delta$ and $Y$ is a process identifier occurring
in $p$, then $Y\in\ProcidsINT$.
We say that $\alpha\in\Procids^{*}$ is \emph{acceptance
  irredundant}
if $\alpha\in\ProcidsINT^{*}\ProcidsNT\ProcidsT^{*}\cup\ProcidsT^{*}$.
  
  \begin{defi}\label{def:AIGNF}
    A recursive specification $\Delta$ is in \emph{Acceptance Irredundant Greibach
      Normal Form} (AIGNF) if every right-hand side of every equation
    has the following form:
    \begin{itemize}
\item $(\emp +) \sum_{i=1}^{n} a_i.\alpha_i$ for actions $a_i \in
  \mathcal{A} \cup \{\tau\}$, acceptance irredundant sequences of identifiers $\alpha_i
  \in \mathcal{P}^{*}$, and
  $n \geq 0$.
\end{itemize}
   \end{defi}

   According to \cite[Proposition 20]{BLB19}, for every recursive
   specification $\Delta$ there exist $\Procids'\supseteq\Procids$ and
   a recursive specification $\Delta'$ in AIGNF such that for all
   $X,Y\in\Procids$ we have that $X\bisim Y$ with respect to $\Delta$
   if, and only if, $X\bisim Y$ with respect to $\Delta'$. Moreover,
   every state of the process graph associated with a process
   identifier is given by an acceptance irredundant sequence of identifiers.

   Now that we can assume guarded sequential specifications to be in
   AIGNF, we are one step closer to associating a pushdown automaton
   with each of them. There is, however, still one complication we
   need to resolve, as illustrated by the following example.

   \begin{exa} \label{exa:nonseparation}
  Consider the AIGNF specification
  \begin{equation*}
    X \defeqn \emp + a.Y \seqc X\qquad
    Y \defeqn b.\emp + a.Y \seqc Y\enskip.
  \end{equation*}
  The sequence $YYX$ reachable from $X$ is non-accepting, and since
  $Y\step{b}\emp$, we have that $YYX\step{b} YX$. The sequence $YX$ is
  non-accepting too, but we have that $YX\step{b}X$ and $X$ is
  accepting. Thus, we see that the $b$-transition emanating from $Y$ 
  can lead to a non-accepting state or to an accepting state,
  depending on the context. So it cannot be determined from
  the first process identifier of a sequence of process identifiers alone
  whether executing a transition from it will lead to an accepting state or a
  non-accepting state.
\end{exa}

  To determine when a pushdown automaton executing a sequential
  process needs to switch from non-accepting to accepting, it would
  be convenient if, given any acceptance irredundant sequence
  $\alpha\beta$ with $\alpha$ a sequence of non-accepting process
  identifiers and $\beta$ a sequence of accepting process identifiers:
  \begin{itemize}
    \item the rightmost identifier in $\alpha$ is always from a
      designated subset $\Procidssep$ of non-accepting
      identifiers;
   \item all the other process identifiers in $\alpha$ are not in
     $\Procidssep$ (nor can they reach a process identifier in $\Procidssep$).
  \end{itemize}
  Let $\Procidssep\subseteq\ProcidsNT$; we say $\Procidssep$
  \emph{separates non-acceptance from acceptance} in $\alpha$ if
  \begin{equation*}
    \alpha\in
    (\ProcidsINT-\Procidssep)^{*}\Procidssep\ProcidsT^{*}\cup\ProcidsT^{*}\enskip.
  \end{equation*}
  If there exists a set $\Procidssep\subseteq\ProcidsNT$
  such that $\Procidssep$ separates non-acceptance from acceptance in
  all sequences of process identifiers associated with the states of a
  process graph, then a pushdown automaton simulating the behaviour of
  that process graph can decide to switch from non-acceptance to acceptance
  if, and only if, it executes a transition from some process
  identifier in $\Procidssep$ that leads to an accepting sequence
  of process identifiers.

  We now define when a subset of non-accepting process identifiers
  separates non-acceptance from acceptance in a given sequential
  specification in AIGNF.

  \begin{defi}\label{def:restrAIGNF}
    Let $\Delta$ be in AIGNF and let
    $\Procidssep\subseteq\ProcidsNT$. We say that $\Procidssep$
    \emph{separates non-acceptance from acceptance}  in $\Delta$ if
    for all sequences of process identifiers $\alpha$ and $\beta$ such
    that $\alpha\step{a}\beta$ for some $a\in \mathcal{A} \cup
    \{\tau\}$ if holds that whenever $\Procidssep$ separates
    non-acceptance from acceptance in $\alpha$, then $\Procidssep$
    separates non-acceptance from acceptance in $\beta$.
  \end{defi}

  Note that for the AIGNF specification of
  Example~\ref{exa:nonseparation} there does not exist a subset of
  non-accepting process identifiers that separates non-acceptance from
  acceptance, for we have that $YYX$ is reachable from $X$ and in
  $YYX$ non-acceptance cannot be separated from acceptance. We shall
  illustrate below how, by adding an extra process identifier, the
  recursive specification of Example~\ref{exa:nonseparation} can be
  transformed into a recursive specification in which $\{Y\}$
  separates non-acceptance from acceptance.

    \begin{exa}\label{exa:separate}
      The specification of Example~\ref{exa:nonseparation} can be
      transformed to a specification in AIGNF in which $\{Y\}$
      separates non-acceptance from acceptance by introducing a new
      process identifier $Z$ and changing the specification to
  \begin{equation*}
    X \defeqn \emp + a.Y \seqc X\qquad
    Y \defeqn b.\emp + a.Z \seqc Y\qquad
    Z \defeqn b.\emp + a.Z \seqc Z\enskip.
  \end{equation*}
    We argue that $\{Y\}$ separates non-acceptance from acceptance in
    this recursive specification.
    To this end, suppose that $\alpha\step{x}\beta$
    with $x\in\{a,b\}$. If $\{Y\}$ separates non-acceptance from
    acceptance in $\alpha$, then either $\alpha\in\{X\}^{*}$, or
    $\alpha=Y\alpha'$ and $\alpha'\in\{X\}^{*}$, or
    $\alpha=Z\alpha'$ and $\alpha'\in\{Z\}^{*}\{Y\}\{X\}^{*}$.

    If $\alpha\in\{X\}^{*}$, then $x=a$ and $\beta=Y\alpha\in\{Z\}^{*}\{Y\}\{X\}^{*}$.

    If $\alpha=Y\alpha'$ and $\alpha'\in\{X\}^{*}$, then either $x=b$ and
    $\beta=\alpha'\in\{X\}^{*}$ or $x=a$ and $\beta={{ZY}\alpha'}\in{\{Z\}^{*}\{Y\}\{X\}^{*}}$.

    If $\alpha=Z\alpha'$ and  $\alpha'\in\{Z\}^{*}\{Y\}\{X\}^{*}$,  then either $x=b$
    and $\beta=\alpha'\in\{Z\}^{*}\{Y\}\{X\}^{*}$ or $x=a$ and
    $\beta={Z\alpha}\in{\{Z\}^{*}\{Y\}\{X\}^{*}}$.
  \end{exa}

    We generalise the insight of the preceding example in the
    following proposition.

    \begin{prop} \label{prop:separation}
      For every recursive specification $\Delta$ defining the process
      identifiers in $\Procids$ there exist a set of process
      identifiers $\Procids'\supseteq\Procids$ and a recursive
      specification $\Delta'$ defining the process identifiers in
      $\Procids'$ such that $\Delta'$ is in AIGNF and $\ProcidsNT$
      separates non-acceptance from acceptance in $\Delta'$.
      Moreover, for all $X,Y\in\Procids$ we have that $X\bisim Y$ with
      respect to $\Delta$ if, and only if, $X\bisim Y$ with respect to $\Delta'$.
    \end{prop}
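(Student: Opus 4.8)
The plan is to reduce to the case where $\Delta$ is already in AIGNF — this is licensed by \cite[Proposition~20]{BLB19} — so that every state reachable from an identifier is an acceptance irredundant sequence, of the shape $\gamma_1\cdots\gamma_k\,\delta\,X_1\cdots X_m$ with $\gamma_1,\dots,\gamma_k\in\ProcidsINT$, $\delta\in\ProcidsNT$ and $X_1,\dots,X_m\in\ProcidsT$, or else a word in $\ProcidsT^{*}$. The only thing preventing $\ProcidsNT$ from separating non-acceptance from acceptance is that one and the same non-accepting identifier may occur both in the boundary position $\delta$ and in an interior position $\gamma_i$, precisely the situation of Example~\ref{exa:nonseparation}. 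Following Example~\ref{exa:separate}, I would disambiguate by splitting each such identifier into a ``boundary'' incarnation (the original one) and a fresh ``interior'' incarnation, keeping $\Procids\subseteq\Procids'$ so that $\ProcidsNT$ retains its meaning in $\Delta'$.

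Concretely: put $\Procids'=\Procids\cup\{X^{\circ}\mid X\in\ProcidsINT\}$, and on acceptance irredundant sequences define two substitutions — $\chi$, which replaces \emph{every} non-accepting identifier by its interior copy, and $\psi$, which does the same except that it leaves the \emph{rightmost} non-accepting identifier untouched. Obtain $\Delta'$ by replacing, in the equation $X\defeqn(\emp+)\sum_i a_i.\alpha_i$ of each $X\in\Procids$, every summand $a_i.\alpha_i$ by $a_i.\psi(\alpha_i)$, and by adjoining, for each $X\in\ProcidsINT$, the equation $X^{\circ}\defeqn\sum_i a_i.\chi(\alpha_i)$ (with no $\emp$-summand, so $X^{\circ}$ is non-accepting). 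The rationale for having both $\psi$ and $\chi$ is the book-keeping of the boundary: when the leading identifier of a sequence of the separating shape fires, the fragment it produces is pushed either in front of an accepting tail, so that the fragment's own rightmost non-accepting identifier becomes the new boundary (use $\psi$), or wholly inside a non-accepting prefix, so that all of it becomes interior (use $\chi$); the latter is exactly what happens in the equation of an interior copy, because interior copies only ever occur strictly inside such a prefix. One checks that this construction reproduces Example~\ref{exa:separate} on its input.

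I would then discharge three checks. First, $\Delta'$ is in AIGNF: each $\chi(\alpha_i)$ is a word over interior copies and each $\psi(\alpha_i)$ a word over interior copies followed by one original non-accepting identifier followed by the original accepting tail, and — since every $X^{\circ}$ refers only to interior copies — the interior copies are hereditarily non-accepting in $\Delta'$, so both kinds of right-hand side are acceptance irredundant with respect to $\Delta'$. Second, $\ProcidsNT$ separates non-acceptance from acceptance in $\Delta'$ in the sense of Definition~\ref{def:restrAIGNF}: since the interior copies are among the hereditarily non-accepting identifiers of $\Delta'$ but outside $\ProcidsNT$, the pattern $(\ProcidsINT-\ProcidsNT)^{*}\ProcidsNT\ProcidsT^{*}\cup\ProcidsT^{*}$ specialises to ``a word of interior copies, then at most one original non-accepting identifier, then accepting identifiers'', and a case analysis on whether the leading identifier of a sequence of this shape is an interior copy, an original non-accepting identifier, or an accepting identifier — invoking the shape of $\chi$ in the first case and of $\psi$ in the other two — shows this set is closed under $\Delta'$-transitions. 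Third, bisimilarity is preserved: the relation pairing each acceptance irredundant sequence $\alpha$ over $\Procids$, read in the $\Delta$-semantics, with the sequence obtained from $\alpha$ by turning all but its rightmost non-accepting identifier into interior copies, read in the $\Delta'$-semantics, is (the graph of) a bisimulation, by the same case analysis matching the transitions and the acceptance of the leading identifier; its symmetric closure on the disjoint union of the two transition systems is a bisimulation relating each $X\in\Procids$ under $\Delta$ to itself under $\Delta'$, so by transitivity of bisimilarity $X\bisim Y$ holds with respect to $\Delta$ if, and only if, it holds with respect to $\Delta'$, for all $X,Y\in\Procids$.

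I expect the only genuine difficulty to be keeping the book-keeping of the case analyses in the second and third steps straight: that ``the boundary'' must always be the rightmost non-accepting identifier; that a fragment produced by $\psi$ is, for sequences of the separating shape, never pushed in front of a still-non-accepting sequence (which is what makes retaining its last non-accepting symbol correct rather than over-eager); and that the interior copies really do end up hereditarily non-accepting in $\Delta'$, so that they are licensed to fill the $(\ProcidsINT-\ProcidsNT)^{*}$ portion of the separating pattern. Each individual verification is routine; the care lies in defining $\psi$, $\chi$ and the pairing relation so that they interlock exactly.
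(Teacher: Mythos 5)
Your proposal is correct and follows essentially the same route as the paper's proof: your interior copies $X^{\circ}$ are the paper's daggered variants $X^{\dagger}$, your substitutions $\psi$ and $\chi$ correspond exactly to its $(\cdot)^{\dagger}$ (replace all non-accepting identifiers except the rightmost) and $(\cdot)^{\ddagger}$ (replace all), and both the separation property and the preservation of bisimilarity are verified by the same case analysis on the leading identifier of an acceptance irredundant sequence. The only cosmetic differences are that the paper introduces copies for all of $\ProcidsNT$ rather than only $\ProcidsINT$ (the extra copies are unreachable) and transfers bisimilarity via the copy-erasing map $f$ rather than your forward pairing along $\psi$.
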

    \begin{proof}
      By \cite[Proposition 20]{BLB19} we can assume without loss of
      generality that $\Delta$ is in AIGNF.
      Let $\ProcidsNT^{\dagger}$ be a set of process identifiers disjoint from
      $\Procids$ and with the same cardinality as $\ProcidsNT$; with
      every $X\in\ProcidsNT$ we bijectively associate a process identifier in
      $\ProcidsNT^{\dagger}$ that we denote by $X^{\dagger}$. Furthermore, if
      $\alpha\in\Procids^{*}$ is an acceptance irredundant sequence of
      identifiers, then we denote by $\alpha^{\dagger}$ the sequence of
      identifiers in $\Procids'=(\Procids\cup\ProcidsNT^{\dagger})^{*}$ that is obtained
      from $\alpha$ by replacing all occurrences of non-accepting
      process identifiers in $\alpha$ except the last one (if it
      exists) by their variants. That is, for all
      acceptance-irredundant sequences of process identifiers
      $\alpha$, we define $\alpha^{\dagger}$ with induction on the length of $\alpha$ as follows:
      \begin{itemize}
      \item if $\alpha\in\ProcidsT^{*}$ or
        $\alpha\in\ProcidsNT\ProcidsT^{*}$,  then
        $\alpha^{\dagger}=\alpha$; and
      \item if $\alpha=X\alpha'$ and $\alpha'\in\ProcidsINT^{*}\ProcidsNT\ProcidsT^{*}$, then
        $\alpha^{\dagger}=X^{\dagger}(\alpha')^{\dagger}$.
      \end{itemize}

      Define the recursive specification $\Delta'$ for all process
      identifiers $Y\in\Procids'$ as follows:
      \begin{enumerate}
      \item if $Y\in\Procids$ and $\Delta(Y)=(\emp{}+)\sum_{i=1}^na_i.\alpha_i$, then
        $\Delta^{\dagger}(Y)=(\emp{}+)\sum_{i=1}^na_i.\alpha_i^{\dagger}$;
      \item if $Y=X^{\dagger}$ for some $X\in\ProcidsNT$ and
        $\Delta(X)=(\emp{}+)\sum_{i=1}^na_i.\alpha_i$, then
        $\Delta^{\dagger}(Y)=(\emp{}+)\sum_{i=1}^na_i.\alpha_i^{\ddagger}$,
        where $\alpha_i^{\ddagger}$ is obtained from $\alpha$ by
        replacing \emph{all} occurrences (including the last) of non-accepting process
        identifiers $Z$ in $\alpha$ by their variant $Z^{\dagger}$.
      \end{enumerate}

      We first argue that for all $X, Y\in\Procids$ we have that
      $X\bisim Y$ with respect to $\Delta$ if, and only if, $X\bisim
      Y$ with respect to $\Delta'$. Let
      $f:(\Procids')^{*}\rightarrow\Procids^{*}$ be the
      mapping that maps every sequence $\alpha\in (\Procids')^{*}$
      to the sequence obtained by replacing in $\alpha$ every occurrence of a process identifier
      $X^{\dagger}\in\ProcidsNT^{\dagger}$ by $X$. It is immediate from the
      definition of $\Delta'$ that for all $\alpha\in  (\Procids')^{*}$ we have that
      \begin{itemize}
        \item if $\alpha\step{a}\beta$ with respect to $\Delta'$
      then
      $f(\alpha)\step{a}f(\beta)$ with respect to $\Delta$, and
        \item if $f(\alpha)\step{a}\alpha'$ with respect to $\Delta$, then there exists
       $\beta\in (\Procids')^{*}$ such that  $\alpha\step{a}\beta$
       with respect to $\Delta'$ and  $f(\beta)=\alpha'$.
     \end{itemize}
      It immediately follows from these two properties that if $R$ is a
      bisimulation relation with respect to $\Delta$, then $\{(\alpha,\beta)\in (\Procids')^{*}\times
      (\Procids')^{*}\mid (f(\alpha),f(\beta))\in R\}$ is a
      bisimulation with respect to $\Delta'$, and if $R$ is a
      bisimulation relation with respect to $\Delta'$, then
      $\{(f(\alpha),f(\beta))\in\Procids^{*}\times\Procids^{*}\mid
      (\alpha,\beta)\in R\}$ is a bisimulation with respect to $\Delta$. Hence,
      $X\bisim Y$ with respect to $\Delta$ if, and only if, $X\bisim
      Y$ with respect to $\Delta'$.
      
      It remains to argue that if $\alpha$ is an acceptance
      irredundant sequence such that $\ProcidsNT$ separates non-acceptance from acceptance in
      $\alpha$ and
      $\alpha\step{a}\beta$ for some action $a$, then also $\beta$ is an acceptance
      irredundant sequence and $\ProcidsNT$ separates non-acceptance from
      acceptance in $\beta$.
      
      If $\alpha$ is the empty sequence, then
      $\alpha\nstep{a}$, so there is nothing to prove.

      If $\alpha$ is not the empty sequence, then there is a process
      identifier $Y$ such that $\alpha = Y\alpha'$.
      We assume that $\alpha\step{a}\beta$ and we distinguish
      cases according to whether $Y\in \Procids$ or $Y\in\ProcidsNT^{\dagger}$:
      \begin{enumerate}
      \item If $Y\in\Procids$, then
          $\Delta'(Y)=(\emp{}+)\sum_{i=1}^na_i.\alpha_i^{\dagger}$
        and $\beta=\alpha_i^{\dagger}\alpha'$ for some $1\leq i \leq n$. Since $Y\in\Procids$
        and $\ProcidsNT$ separates non-acceptance from acceptance in $\alpha$, it follows that
        $\alpha'\in\ProcidsT^{*}$. Hence, since $\alpha_i^{\dagger}$ is
        acceptance irredundant and $\ProcidsNT$ separates non-acceptance from
        acceptance in $\alpha_i^{\dagger}$ it follows that $\beta$ is also acceptance
        irredundant and $\ProcidsNT$ separates non-acceptance from
        acceptance in $\beta$.
      \item If $Y=X^{\dagger}$ for some $X\in\ProcidsNT$ and
        $\Delta(X)=(\emp{}+)\sum_{i=1}^na_i.\alpha_i$,
        then $\beta=\alpha_i^{\ddagger}\alpha'$ for some $1\leq i \leq
        n$. Since $\ProcidsNT$ separates non-acceptance from
        acceptance in $\alpha$ and $Y\in\ProcidsNT^{\dagger}$, it follows that $\alpha'\in
        (\ProcidsNT^{\dagger})^{*}\ProcidsNT\ProcidsT^{*}$. 
        Moreover, since $\Delta$ is in AIGNF and $f(\alpha)$ is
          acceptance-irredundant, it follows that $f(\beta)$ and
          hence $\beta$ is acceptance-irredundant. It follows that
          $f(\alpha_i^{\ddagger})$ is either the empty sequence or a
          sequence of hereditarily non-accepting process identifiers, and therefore $\alpha_i^{\ddagger}\in(\ProcidsNT^{\dagger})^{*}$. Thus, we
          find that
          $\beta=\alpha_i^{\ddagger}\alpha'\in(\ProcidsNT^{\dagger})^{*}\ProcidsNT\ProcidsT^{*}$,
          so $\ProcidsNT$ separates non-acceptance from acceptance in $\beta$.
          \qedhere
      \end{enumerate}
    \end{proof}
         
    We can now associate with every guarded sequential specification a
    suitable pushdown automaton.
\begin{thm} \label{theo3}
For every guarded sequential specification there is a  pushdown
automaton with a bisimilar process graph, with at most two states.
\end{thm}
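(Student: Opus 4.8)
The plan, given a guarded sequential specification together with a distinguished identifier $X_0$, is to build a pushdown automaton with (at most) two states whose process graph is bisimilar to the process graph of $X_0$. I would first normalise: by Proposition~\ref{prop:separation} (after, if necessary, renaming the initial identifier) we may assume that the specification $\Delta$ over its identifiers $\Procids$ is in AIGNF, that some $\Procidssep\subseteq\ProcidsNT$ separates non-acceptance from acceptance in $\Delta$, and that every sequence of identifiers reachable from $X_0$ --- including $X_0$ itself --- is \emph{good}, i.e.\ lies in $(\ProcidsINT-\Procidssep)^{*}\Procidssep\ProcidsT^{*}\cup\ProcidsT^{*}$. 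Two consequences of this normal form drive the construction. (i) Since $\Delta$ is guarded and in AIGNF, every transition of the process expression $\alpha$ is inherited from its leftmost identifier: if $\alpha=X\alpha'$ then $\alpha\step{a}\beta\alpha'$ exactly for the summands $a.\beta$ of $\Delta(X)$, and there are no other transitions. (ii) For a non-empty good $\alpha=X\alpha'$ we have $\alpha{\downarrow}$ iff $\alpha\in\ProcidsT^{*}$ iff $X\in\ProcidsT$, and moreover $\alpha'\in\ProcidsT^{*}$ iff $X\in\ProcidsT\cup\Procidssep$. The point of (ii) is that, although a pushdown automaton cannot inspect the part of the stack below its top symbol, after popping or rewriting the top identifier $X$ it can still recompute whether the whole configuration is accepting, from the static data whether $X\in\ProcidsT\cup\Procidssep$ and whether the sequence it pushes lies in $\ProcidsT^{*}$.

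I would then take a pushdown automaton $M$ with two states $\mathrm{n}$ and $\mathrm{f}$, with $\mathrm{f}$ as the only accepting state, so that a configuration is accepting precisely when its state is $\mathrm{f}$. The data alphabet is $\Procids\cup\{\bot\}$, where $\bot$ is a fresh bottom-of-stack symbol with no outgoing transitions; its only role is to guarantee that the empty stack occurs only in the initial configuration, thereby keeping the initial identifier apart from the expression $\emp$. The initial state is $\mathrm{f}$ if $X_0{\downarrow}$ and $\mathrm{n}$ otherwise. From the initial state with empty stack there is, for every summand $a.\beta$ of $\Delta(X_0)$, a transition that pushes $\beta\bot$ and moves to $\mathrm{f}$ if $\beta\in\ProcidsT^{*}$ and to $\mathrm{n}$ otherwise. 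For every identifier $X$, every summand $a.\beta$ of $\Delta(X)$ and every state $s$, there is a transition $s\xrightarrow{a[X/\beta]}t$ where $t=\mathrm{f}$ if $\beta\in\ProcidsT^{*}$ and $X\in\ProcidsT\cup\Procidssep$, and $t=\mathrm{n}$ otherwise. As $\Delta$ is finite, $M$ is a genuine pushdown automaton with two states.

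Next I would verify bisimilarity through the relation that pairs the initial configuration with $X_0$, pairs $(\mathrm{f},\bot)$ with $\emp$, and pairs $(s,\gamma\bot)$ with the process expression $\gamma$ for every non-empty good $\gamma$, where $s$ matches the acceptance status of $\gamma$ (i.e.\ $s=\mathrm{f}$ iff $\gamma\in\ProcidsT^{*}$). Agreement of the acceptance predicates is immediate from the choice of accepting state together with fact (ii). For the transfer conditions one uses fact (i): a step of $(s,\gamma\bot)$ rewrites the leftmost identifier $X$ of $\gamma=X\gamma'$ by the right-hand side $\beta$ of one of its summands and yields $(t,\beta\gamma'\bot)$, while in $\TSPc$ the same summand gives $\gamma\step{a}\beta\gamma'$; and $t$ is the state matching $\beta\gamma'$ because, by the separation invariant, $\gamma'\in\ProcidsT^{*}$ iff $X\in\ProcidsT\cup\Procidssep$, so $\beta\gamma'\in\ProcidsT^{*}$ iff $\beta\in\ProcidsT^{*}$ and $X\in\ProcidsT\cup\Procidssep$ --- exactly the condition defining $t$. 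The initial configuration is handled the same way via the summands of $\Delta(X_0)$, and the degenerate sub-cases in which $\beta\gamma'$ is empty land in $(\mathrm{f},\bot)$ paired with $\emp$ (using $\A8$ and $\A9$). Proposition~\ref{prop:separation} guarantees that $\beta\gamma'$ is again good, so the relation is closed under transitions and is a bisimulation.

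I expect the main obstacle --- and the reason Proposition~\ref{prop:separation} must be invoked at the outset --- to be keeping the two states of $M$ synchronised with acceptance: a pushdown automaton cannot look below the top of its stack, so when it pops or rewrites the top identifier $X$ the only information it has for deciding whether the newly exposed configuration accepts is the category of $X$, and reading the answer off $X$ is sound if, and only if, ``$X\in\ProcidsT\cup\Procidssep$'' is equivalent to ``everything below $X$ is accepting'' in every reachable configuration --- which is precisely what separation buys us. A secondary, routine obstacle is the bookkeeping around the empty stack and the initial identifier, disposed of by the bottom marker $\bot$; it also makes clear that a single state suffices when all, or no, reachable states of $X_0$ are accepting.
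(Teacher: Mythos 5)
Your proposal is correct and follows essentially the same route as the paper's own proof: normalise via Proposition~\ref{prop:separation} to AIGNF with a separating set, then build a two-state pushdown automaton whose stack holds the sequence of identifiers and whose state records acceptance, with the target state computed from the pushed string and from whether the popped identifier lies in $\ProcidsT\cup\Procidssep$ (which, by separation, determines acceptance of the exposed remainder). The only deviations are cosmetic: the paper simulates the initial identifier by empty-stack transitions and relates $\emp$ to $(t,\epsilon)$ where you use a bottom marker $\bot$, and you add transitions from both states uniformly, which is immaterial since only the configurations whose state matches the stack's acceptance status are reachable.
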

\begin{proof}
Let $\Delta$ be a guarded sequential specification over $\Procids$. By
Proposition~\ref{prop:separation} we can assume without loss of
generality that $\Delta$ is in Acceptance Irredundant Greibach Normal
Form and separates non-acceptance from acceptance. Moreover, there
exists a subset $\Procidssep$ of non-accepting process identifiers
such that every state of the specification is given by a sequence of
identifiers that is acceptance irredundant and $\Procidssep$ separates
non-acceptance from acceptance. The corresponding pushdown automaton has two states $\{n,t\}$. The initial state is $n$ iff the initial identifier $S\not\downarrow$ and $t$ iff the initial identifier $S\downarrow$ (as defined by the operational semantics), and the final state is $t$.
\begin{itemize}
\item For each summand $a.\alpha$ of an identifier $X$ with $X\downarrow$ and $\alpha = \emp$ or the first identifier of $\alpha$ is an identifier with $\downarrow$, add a step $t \xrightarrow{a[X/\alpha]} t$. Moreover, in case $X$ is initial,  a step $t \xrightarrow{a[\epsilon/\alpha]} t$;
\item For each summand $a.\alpha$ of an identifier $X$ with $X\downarrow$ and the first identifier of $\alpha$ an identifier with $\not\downarrow$, add a step $t \xrightarrow{a[X/\alpha]} n$. Moreover, in case $X$ is initial,  a step $t \xrightarrow{a[\epsilon/\alpha]} n$;
\item For each summand $a.\alpha$ of an identifier $X\in\Procidssep$ and $\alpha = \emp$ or the first identifier of $\alpha$ is an identifier with $\downarrow$, add a step $n \xrightarrow{a[X/\alpha]} t$. Moreover, in case $X$ is initial,  a step $n \xrightarrow{a[\epsilon/\alpha]} t$;
\item For each summand $a.\alpha$ of an identifier $X\in\ProcidsNT-\Procidssep$, add a step $n \xrightarrow{a[X/\alpha]} n$. Moreover, in case $X$ is initial,  a step $n \xrightarrow{a[\epsilon/\alpha]} n$.
\end{itemize}
Now the bisimulation relation will relate $\emp$ to $(t, \epsilon)$ and will relate $X_d \seqc \alpha_x$ to $(t,x)$ if $X_d \downarrow$ and to $(n,x)$ if $X_d \not\downarrow$ for each $x \in \mathcal{P}^{*}$.
Now it is not difficult to check that the process of this pushdown
automaton is the same as the process of the given guarded sequential specification.
\end{proof}

\begin{figure}[htb]
\begin{center}
\begin{tikzpicture}[->,>=stealth',node distance=2cm, node font=\footnotesize, state/.style={circle, draw, minimum size=.5cm,inner sep=0pt}]
  \node[state,initial,initial text={},initial where=above] (s0) {$n$};
  \node[state,accepting] [right of=s0] (s1) {$t$};
  
  \path[->]
  (s0) edge[in=225,out=135,loop]
         node[left] {$\begin{array}{c}a[\epsilon/X]\\
                                      a[X/XY]\end{array}$} (s0)
  (s0) edge node[above] {$b[X/\epsilon]$} (s1)
  (s1) edge[in=-45,out=45,loop] node[right] {$c[Y/\epsilon]$} (s1);
\end{tikzpicture}
\end{center}
\caption{Pushdown automaton illustrating the construction in the proof
  of Theorem~\ref{theo3}.}\label{fig:pda3}
\end{figure}

\begin{exa}
Consider the recursive specification
 \begin{equation*}
    X \defeqn a.X \seqc Y + b.\emp \qquad
    Y \defeqn \emp + c.\emp
  \end{equation*}
Notice it is in Acceptance Irredundant GNF. Notice that for all states in the process graph, there is never a sequence of more than one non-accepting identifier.
We obtain the pushdown automaton shown in Figure~\ref{fig:pda3}.
\end{exa}

In the case of the sequential composition operator of \TSP{} \cite{BBR10}, Theorem~\ref{theo3} fails because a guarded recursive specification over \TSP{} can generate a process graph with unbounded branching, that cannot be bisimilar to a process graph generated by a pushdown automaton (see Example~\ref{exa:difference}).

Note that Theorem~\ref{theo3} also fails when we use pushdown automata with acceptance by empty stack, for by means of a guarded specification over \TSPc{} (or \TSP) we can generate a process graph with infinitely many non-bisimilar accepting states.

The conclusion of this section is, that the replacement of sequential composition $\seq$ by sequencing $\seqc$ results in a calculus that is not \emph{too} expressive: every guarded specification yields the process of a pushdown automaton. On the other hand, this calculus is not expressive \emph{enough}: we still cannot specify all pushdown processes. In order to do that, we need an extra ingredient.

\section{Signals and conditions}

In order to obtain the missing correspondence, we need a mechanism to
pass state information along a sequencing operator. We shall prove
that it suffices to add the mechanism provided by propositional
signals together with a conditional statement as given in \cite{BB97},
see also \cite{BBR10}.

To keep the focus of attention on the process calculus, especially
also when we consider a sound and ground-complete axiomatization for it, we
want to concern ourselves as little as possible with the formalities
of propositional logic. To this end, we presuppose a Boolean algebra
$\mathcal{B}$, with distinguished elements $\mathit{false}$ and
$\mathit{true}$, a unary operator $\neg$ and binary operators $\vee$
and $\wedge$, freely generated by some suitable (finite or countably
infinite) set of generators. The elements of this Boolean algebra can
be thought of as abstract propositions modulo logical equivalence, and each
generator as the logical equivalence class of some propositional
variable. A \emph{valuation} is a homomorphism from $\mathcal{B}$
into the two-element Boolean algebra; it is completely determined by
how it maps the generators. Moreover, we have that if $\phi$ and
$\psi$ are elements of $\mathcal{B}$ and $v(\phi)=v(\psi)$ for all
valuations $v$, then $\phi=\psi$. Henceforth, for the sake of
readability, we will commit a minor abuse of language by referring to the
elements of $\mathcal{B}$ as \emph{propositions} and to the generators
as \emph{propositional  variables}.

Given an element $\phi$ of $\mathcal{B}$ and a process expression $p$, we
write $\phi : \rightarrow p$, with the intuitive meaning `\emph{if}
$\phi$ \emph{then} $p$'; the construct is referred to as
\emph{conditional} or \emph{guarded command}\index{guarded command}.
The operational behaviour of $\phi:\rightarrow p$ depends on a
valuation that associates a truth value with $\phi$.
Upon executing an action $a$ in a state with valuation $v$,
a state with a possibly different valuation $v'$ results. The
resulting valuation $v'$ is called the \emph{effect} of the execution
of $a$ in a state with valuation $v$.

We present operational rules for guarded command in
Figure~\ref{GCsostable}; it presupposes a function $\mathit{effect}$ that associates with every action $a$ and
  every valuation $v$ its effect. We define when a process expression
  together with a certain valuation can take a step or be in a final state.

 \begin{figure}[htb]
 \centering
 
 \begin{osrules}
 \osrule*{}{\langle \emp,v \rangle \downarrow} \qquad \osrule*{v' = \mathit{effect}(a,v)}{\langle a.p,v \rangle \step{a} \langle p,v' \rangle} 
 \qquad     \osrule*{\langle p,v \rangle \step{a} \langle p',v' \rangle}{\langle \mathit{NA}(p),v \rangle \step{a} \langle p',v' \rangle} \\
   \osrule*{\langle p,v \rangle \step{a} \langle p',v' \rangle}{\langle p + q,v \rangle \step{a} \langle p',v'\rangle \;\; \langle q + p,v \rangle \step{a} \langle p',v'\rangle} \quad
  \osrule*{\langle p,v \rangle \downarrow}{\langle p + q,v \rangle \downarrow \;\; \langle q+p,v \rangle \downarrow} \\
     \osrule*{\langle p,v \rangle \downarrow & \langle q,v \rangle \downarrow}{\langle p \seqc q,v \rangle \downarrow}
  \quad
    \osrule*{\langle p,v \rangle \step{a} \langle p',v' \rangle}{\langle p \seqc q,v \rangle \step{a} \langle p' \seqc q,v' \rangle}
    \quad
  \\
    \osrule*{\langle p,v \rangle \downarrow & \langle p,v \rangle \nrightarrow & \langle q,v \rangle \step{a} \langle q',v' \rangle}{\langle p \seqc q,v \rangle
      \step{a} \langle q',v' \rangle}
    \\
    \osrule*{\langle p,v \rangle \step{a} \langle p',v' \rangle & X\defeqn p}{\langle X,v \rangle \step{a} \langle p',v' \rangle}
  \qquad
  \osrule*{\term{\langle p,v \rangle} & X\defeqn p}{\term{\langle X,v \rangle}} \\
   \osrule*{\langle p,v \rangle \step{a} \langle p',v' \rangle \quad v(\phi) = \mathit{true}}{\langle \phi : \rightarrow p,v \rangle \step{a} \langle p',v' \rangle} \qquad
    \osrule*{\langle p,v \rangle \downarrow  \quad v(\phi) = \mathit{true}}{\langle \phi : \rightarrow p,v \rangle \downarrow} 
 \end{osrules}
 \caption{Operational rules for guarded command ($a \in \mathcal{A}
   \cup \{\tau\}$ and $\phi$ ranging over propositions).}
 \label{GCsostable} %
\end{figure}

On the basis of these rules, we can define a notion of
bisimulation. We use \emph{stateless} bisimulation, which means that
two process graphs are bisimilar iff there is a bisimulation relation
that relates two process expressions iff they are related under every
possible valuation. The stateless bisimulation also allows
non-determinism, as the effect of the execution of an action will
allow every possible resulting sequel. See the example further on,
after we also introduce the root signal operator.

\begin{defi} \label{def:statelessbisimilarity}
A binary relation $R$ on the set of sequential process expressions with conditionals is a \emph{stateless bisimulation} iff $R$ is
symmetric and for all process expressions $p$ and $q$ such that if $(p,q) \in R$:
\begin{enumerate}
\item If for some $a \in \mathcal{A} \cup \{\tau\}$ and valuations
  $v,v'$ we have $\langle p,v \rangle \step{a} \langle p',v' \rangle$,
  then there exists a process expression $q'$, such that $\langle q,v \rangle \step{a} \langle q',v' \rangle$, and $(p', q') \in R$.
\item If for some valuation $v$ we have $\term{\langle p,v \rangle}$, then $\term{\langle q,v \rangle}$.
\end{enumerate}
The process expressions $p$ and $q$ are stateless bisimilar (notation: $p \sbisim q$) iff there exists a stateless bisimulation $R$ such that $(p, q) \in R$.
\end{defi}

Stateless bisimulation is a congruence for sequential process
expressions with conditionals (proven along the lines of
\cite{BBR10}), and so we can investigate the equational theory. We add
the axioms in Table~\ref{tab:axioms_cnd}.

\begin{table}[htb]
\centering
\begin{tabular}{lcl@{\quad}l}
  $\mathit{true} : \rightarrow x$ & $=$ & $x$ & \C{1} \\
  $\mathit{false} : \rightarrow x$ & $=$ & $\dl$ & \C{2}\\
  $(\phi \vee \psi) : \rightarrow x$ & $=$ & $(\phi : \rightarrow x) + (\psi : \rightarrow x)$ &
  \C{3} \\
  $(\phi \wedge \psi) : \rightarrow x$ & $=$ & $\phi : \rightarrow (\psi : \rightarrow x)$ &
  \C{4} \\
  $\phi : \rightarrow (x+y)$  & $=$ & $(\phi : \rightarrow x) + (\phi
                                      : \rightarrow y)$ & \C{5} \\
  $\phi : \rightarrow (x\seqc y)$ & $=$ & $(\phi : \rightarrow x)\seqc
                                        y$ &
  \C{6} \\ \\
  $\phi : \rightarrow \na{x}$ & $=$ & $\na{\phi : \rightarrow x}$ &
                                                                    \C{7}\\
  \multicolumn{4}{l}{$(\na{x}+\phi: \rightarrow \emp);(\na{y}+\psi: \rightarrow \emp{})$}\\
                                  & $=$ &
  $\na{x};(\na{y}+\psi:\rightarrow \emp) + (\phi\wedge\psi) :
  \rightarrow \emp{}$ & \C{8}
 \end{tabular}
\caption{The axioms for conditionals.}
\label{tab:axioms_cnd}
\end{table}

Note that there is no elimination theorem here: due to the presence of unresolved propositional variables, conditionals cannot be removed from all closed terms.

Next, we introduce an operator that allows the observation of aspects
of the current state of a process graph. The central idea is that the
observable part of the state of a process graph is represented by a
proposition.  We introduce the
\emph{root-signal emission operator} $\rse{}{}$\index{root-signal
  emission}. A process expression $\rse{\phi}{x}$ represents the
process $x$ that shows the signal $\phi$ in its initial state. In
order to define this operator by operational rules, we need to define
an additional predicate on process expressions, namely
\emph{consistency}. $\mathit{Cons}(\langle p,v \rangle)$ will not hold
when the valuation of the root signal of $p$ is false. A step
$\step{a}$ can only be between consistent states, and a state can only
be accepting when it is consistent. Thus, if the effect of executing
the action $a$ is to set the value of the proposition $\phi$ to
$\mathit{false}$ (i.e., $\mathit{effect}(a,v)(\phi)=\mathit{false}$
for all valuations $v$), then the process expression
$a.(\rse{\phi}{p})$ can under no valuation execute action $a$.

The operational rules are defined in Fig. \ref{RSEsostable}. First, we
define the consistency predicate. Next, we find that the rule for action prefix, the rules for choice
and the second rule for $\seqc$ in Fig. \ref{GCsostable} require an
extra condition. The other rules of Table~\ref{GCsostable} can remain
unchanged. Finally, we give the operational rules of the root signal
emission operator.  To emphasise the difference between guarded
commands and root signal emission, process expression
$\phi :\rightarrow \emp$ is consistent under any valuation,
whereas $\rse{\phi} {\emp}$ is inconsistent under a
valuation that assigns $\mathit{false}$ to $\phi$. So, if the
effect of the action $a$ is to set the value of $\phi$ to
$\mathit{false}$, then $a.(\phi : \rightarrow \emp)$, in any
valuation, can execute an $a$-transition, whereas $a.\rse{\phi}{\emp}$
cannot.

 \begin{figure}[htb]
 \centering
 
 \begin{osrules}
 	\osrule*{}{\mathit{Cons}(\langle \dl,v \rangle)} \qquad \osrule*{}{\mathit{Cons}(\langle \emp,v \rangle)} \qquad \osrule*{}{\mathit{Cons}(\langle a.p,v \rangle)} \\
	\osrule*{\mathit{Cons}(\langle p,v \rangle) \; \mathit{Cons}(\langle q,v \rangle)}{\mathit{Cons}(\langle p + q,v \rangle)} \quad \osrule*{\mathit{Cons}(\langle p,v \rangle)}{\mathit{Cons}(\langle \phi : \rightarrow p,v \rangle)} \quad
     \osrule*{\mathit{Cons}(\langle p,v \rangle) \quad v(\phi) = \mathit{true}}{\mathit{Cons}(\langle \rse{\phi}{p},v \rangle)}  \\
   \osrule*{\mathit{Cons}(\langle p,v \rangle) & \langle p,v \rangle \not\downarrow}{\mathit{Cons}(\langle p \seqc q,v \rangle)}
  \quad
    \osrule*{\langle p,v \rangle \downarrow & \mathit{Cons}(\langle q,v \rangle)}{\mathit{Cons}(\langle p \seqc q,v \rangle) }  \\
    \osrule*{\mathit{Cons}(\langle p,v \rangle)}{\mathit{Cons}(\langle \mathit{NA}(p),v \rangle)} \qquad
 \osrule*{\mathit{Cons}(\langle p,v \rangle) & X\defeqn p}{\mathit{Cons}(\langle X,p \rangle)} \\
 \osrule*{\mathit{Cons}(\langle p,v' \rangle) \quad v'=\mathit{effect}(a,v)}{\langle a.p,v \rangle \step{a} \langle p,v' \rangle}  \quad
 \\
 \osrule*{\langle p,v \rangle \step{a} \langle p',v' \rangle \quad  \mathit{Cons}(\langle q,v \rangle)}{\langle p + q,v \rangle \step{a} \langle p',v'\rangle \;\; \langle q + p,v \rangle \step{a} \langle p',v'\rangle} \quad
  \osrule*{\langle p,v \rangle \downarrow \quad \mathit{Cons}(\langle q,v \rangle)}{\langle p + q,v \rangle \downarrow \;\; \langle q+p,v \rangle \downarrow} \\
  \osrule*{\langle p,v \rangle \step{a} \langle p',v' \rangle & \mathit{Cons}(\langle p' \seqc q,v' \rangle)}{\langle p \seqc q,v \rangle \step{a} \langle p' \seqc q,v' \rangle} \\ 
     \osrule*{\langle p,v \rangle \step{a} \langle p',v' \rangle \quad v(\phi) = \mathit{true}}{\langle \rse{\phi}{p},v \rangle \step{a} \langle p',v' \rangle} \qquad
    \osrule*{\langle p,v \rangle \downarrow  \quad v(\phi) = \mathit{true}}{\langle \rse{\phi} {p},v \rangle \downarrow} \\
 \end{osrules}

 \caption{Operational rules for root-signal emission ($a \in \mathcal{A} \cup \{\tau\}$).}
 \label{RSEsostable} %
\end{figure}

We again have a stateless bisimulation, where two process expressions are related
iff any valuation that makes the root signal of one process expression
$\mathit{true}$ also makes the root signal of the other process expression
$\mathit{true}$ and for each such valuation, the process graphs of the
process expressions are stateless bisimilar.

\begin{defi} \label{def:statelessbisimilaritysignal}
A binary relation $R$ on the set of sequential process expressions with conditionals and signals is a \emph{stateless bisimulation} iff $R$ is
symmetric and for all process expressions $p$ and $q$ such that if $(p,q) \in R$:
\begin{enumerate}
\item\label{item:cons} If for some valuation $v$ we have $\mathit{Cons}(\langle p,v \rangle)$, then also $\mathit{Cons}(\langle q,v \rangle)$.
\item\label{item:step} If for some $a \in \mathcal{A} \cup \{\tau\}$ and valuations
  $v,v'$ we have $\langle p,v \rangle \step{a} \langle p',v' \rangle$,
  then there exists a process expression $q'$, such that $\langle q,v \rangle \step{a} \langle q',v' \rangle$, and $(p', q') \in R$.
\item\label{item:term} If for some valuation $v$ we have $\term{\langle p,v \rangle}$, then $\term{\langle q,v \rangle}$.
\end{enumerate}
The process expressions $p$ and $q$ are bisimilar (notation: $p \sbisim q$) iff there exists a stateless bisimulation $R$ such that $(p, q) \in R$.
\end{defi}

Again, stateless bisimulation is a congruence for sequential process expressions with conditionals and signals (see \cite{BBR10}), and we have the following additional axioms.

\begin{table}[htb]
\centering
\begin{tabular}{lcl@{\quad}l}
  $\rse{\mathit{true}}{x}$ & $=$ & $x$ & \SI{1} \\
  $\rse{\mathit{false}}{x}$ & $=$ & $\rse{\mathit{false}}{\dl}$ &
                                                                  \SI{2}\\
  $a.(\rse{\mathit{false}}{x})$ & $=$ & $\dl$ & \SI{3} \\
  $(\rse{\phi}{x}) +y$ & $=$ & $\rse{\phi}{(x+y)}$ & \SI{4} \\
  $\rse{\phi}{(\rse{\psi}{x})}$  & $=$ & $\rse{(\phi \wedge \psi)}{x} $ & \SI{5} \\
  $\phi : \rightarrow (\rse{\psi}{x}) $ & $=$ & $\rse{(\neg \phi \vee \psi)}{(\phi : \rightarrow x)} $ &
  \SI{6} \\
  $\rse{\phi}{(\phi : \rightarrow x)}$ &=& $\rse{\phi}{x}$ & \SI{7}\\
  $\rse{\phi}{(x \seqc y)}$ & $=$ & $ (\rse{\phi}{x}) \seqc y$ &
                                                                    \SI{8}\\
  $\rse{\phi}{\na{x}}$ & $=$ & $ \na{\rse{\phi}{x}} $ &
                                                                    \SI{9}\\
 \end{tabular}
\caption{The axioms for signals.}
\label{tab:axioms_sig}
\end{table}

It is tedious but straightforward to verify that the extended theory
is still sound. Below we shall prove first a head normal form theorem
and then also a ground-completeness theorem.

\begin{thm}[head normal form theorem] \label{thm:hnf}
  Let $\Delta$ be a guarded recursive specification.
  For every process expression $p$ there exists a natural number $n$,
  sequences of propositions $\phi_1,\dots,\phi_n$, actions
  $a_1,\dots,a_n$, and process expressions $p_1,\dots,p_n$, and
  propositions $\psi$ and $\chi$ such that
  \begin{equation}\label{eq:hnf}
    p = \sum_{i=1}^{n}\phi_i : \rightarrow a_i.p_i
          + \rse{\psi}{\chi: \rightarrow \emp}
  \end{equation}
  is derivable from the axioms in
  Tables~\ref{tab:axioms_tsp}--\ref{tab:axioms_sig} and the equations in $\Delta$ using
  equational logic. The process expression at the right-hand side of
  the equation is called a \emph{head normal form} of $p$, $\psi$ is
  called its \emph{root signal} and $\chi$ is called its
  \emph{acceptance condition}.
\end{thm}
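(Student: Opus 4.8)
The plan is to follow the proof of Theorem~\ref{hnf} closely. First reduce to the case that $p$ is guarded, by repeatedly replacing an unguarded occurrence of a process identifier in $p$ by the right-hand side of its defining equation; since $\Delta$ is guarded this terminates, and afterwards $p$ cannot be a bare process identifier. Then argue by induction on the structure of $p$, in each case assembling the required head normal form from the head normal forms supplied by the induction hypothesis for the immediate subexpressions. For the base cases, $\dl$ has the head normal form with $n=0$, $\psi=\mathit{true}$, $\chi=\mathit{false}$; $\emp$ the one with $n=0$, $\psi=\chi=\mathit{true}$; and $a.q$ the one with $n=1$, $\phi_1=\psi=\mathit{true}$, $\chi=\mathit{false}$, $a_1=a$ and $p_1=q$; in each instance the equation follows from \SI1, \C1, \C2 and \A6.

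The cases in which $p$ is formed by a single operator other than $\seqc$ are pure bookkeeping. If $p=\phi:\rightarrow p'$, distribute $\phi$ over the summation with \C5, contract the resulting nested conditionals with \C4, and rewrite the signal summand by \SI6 and \C4 into $\rse{(\neg\phi\vee\psi)}{((\phi\wedge\chi):\rightarrow\emp)}$. If $p=\rse{\phi}{p'}$, reorder summands with \A1 and absorb $\phi$ into the root signal with \SI4 and \SI5, giving root signal $\phi\wedge\psi$. If $p=\na{p'}$, note that \NA4, \C7 and \NA3 leave the action summands unchanged, while \SI9, \C7, \NA2, \C2 and \C4 turn the signal summand into $\rse{\psi}{(\mathit{false}:\rightarrow\emp)}$. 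If $p=p'+p''$, concatenate the two summations with \A1, \A2 and \A3, and merge the two signal summands with \SI4, \SI5 and \C3 into $\rse{(\psi'\wedge\psi'')}{((\chi'\vee\chi''):\rightarrow\emp)}$.

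The real work, and the expected main obstacle, is the case $p=p'\seqc p''$, precisely because the distributive law \A4 is no longer available. I would first use \SI4 to put the head normal forms of $p'$ and $p''$ into the shape $\rse{\psi'}{(\na{s'}+(\chi':\rightarrow\emp))}$ and $\rse{\psi''}{(\na{s''}+(\chi'':\rightarrow\emp))}$, where $s'$ and $s''$ collect the action summands (using \NA4, \C7 and \NA3 to see $s'=\na{s'}$ and $s''=\na{s''}$), and then use \SI8 and \SI9 to move the root signals out of the sequencing; this reduces the task to head-normalising $(\na{s'}+(\chi':\rightarrow\emp))\seqc(\na{s''}+(\chi'':\rightarrow\emp))$. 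If $s'$ is the empty summation $\dl$, this collapses by \A9 and \C6 to a conditional applied to a variant of $p''$, and is covered by the conditional and root-signal cases treated above. Otherwise axiom \C8 splits it into $\na{s'}\seqc(\na{s''}+(\chi'':\rightarrow\emp))$ plus the conditional $\emp$-summand $(\chi'\wedge\chi''):\rightarrow\emp$, after which the weak distributivity axioms \A{11}, \A{12} and \A{13}, together with \A5, \A{10} and \C6, pull the action prefixes of $s'$ to the front, turning $\na{s'}\seqc p''$ into $\sum_i\phi_i':\rightarrow a_i'.(p_i'\seqc p'')$; finally \SI4 reinstates the root signal $\psi'$.

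The two points I expect to need the most care are, first, the bookkeeping of the root signal and the acceptance condition produced by a sequencing -- these are driven entirely by \SI4--\SI9 and by \C8, and verifying that the outcome agrees, under every valuation, with the operational semantics of $\seqc$ in the presence of the consistency predicate of Figure~\ref{RSEsostable} is where a subtlety is most likely to hide -- and, second, the case split $s'=\dl$ versus $s'\neq\dl$ in the sequencing case, which plays the role of the three-way split in the proof of Theorem~\ref{hnf} but is now phrased in terms of whether action summands are present rather than in terms of the presence of an $\emp$-summand.
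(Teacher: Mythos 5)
Your proposal follows the paper's proof of Theorem~\ref{thm:hnf} essentially step for step: the same preliminary reduction to guarded $p$, the same structural induction, the same axioms in the base cases and in the cases for $+$, $\na{\cdot}$, guarded command and root-signal emission, and the same key combination of \C{8} with \A{11}, \C{6}, \C{7}, \NA{3} to push the action prefixes of the first argument through the sequencing. Your explicit subcase where $s'$ is the empty summation, handled via \NA{1}, \A{6}, \C{6} and \A{9}, is a harmless variation; the paper folds that subcase into the same \C{8}-based computation by means of \NA{1}.

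The one step that does not work as written is the claim that \SI{8} and \SI{9} let you ``move the root signals out of the sequencing'', reducing the task to head-normalising $(\na{s'}+(\chi':\rightarrow\emp))\seqc(\na{s''}+(\chi'':\rightarrow\emp))$. Axiom \SI{8} extracts only the root signal of the \emph{first} argument of $\seqc$; no axiom in Table~\ref{tab:axioms_sig} extracts the root signal $\psi''$ of the second argument, and no sound equation could, because $x\seqc\rse{\psi''}{y}$ emits $\psi''$ only once $x$ has terminated, whereas $\rse{\psi''}{(x\seqc y)}$ emits it initially (so, e.g., the two differ already on condition~1 of Definition~\ref{def:statelessbisimilaritysignal} when $x=a.\emp$). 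The repair is exactly what the paper does: leave $\psi''$ in place by rewriting $p''=\na{\rse{\psi''}{s''}}+\chi'':\rightarrow\emp$ (using \SI{4}, \SI{9}, \NA{3}, \NA{4}, \C{7}) and apply \C{8} with $y:=\rse{\psi''}{s''}$; your subsequent manipulations, which in fact already keep $p''$ intact in the continuations $p_i'\seqc p''$ and reinstate only $\psi'$ at the end, then go through unchanged. Note also that, since the theorem claims only derivability from the axioms and $\Delta$, no separate check against the operational semantics of $\seqc$ and the consistency predicate is part of the proof obligation.
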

\begin{proof}
  Let $p$ be a process expression. Without loss of generality we may
  assume that $p$ is guarded, for we can replace every unguarded
  occurrence of a process identifier in $p$ by the right-hand side of
  its defining  equation in $\Delta$ which, since $\Delta$ is guarded, results in
  process expression with one fewer unguarded occurrence of a process
  identifier. We now proceed by induction on the structure of $p$.

  Note that $p$ cannot be itself a process identifier, since then $p$
  would not be guarded.

  If $p=\dl$, then
    $p=\dl+\dl =\dl +\rse{\mathit{true}}{\mathit{false}:\rightarrow\emp}$
  by axioms \A3, \C2 and \SI1{}.
  Note that $\dl +\rse{\mathit{true}}{\mathit{false}:\rightarrow\emp}$
  fits the shape of the right-hand side of Equation~\eqref{eq:hnf}: we take $n=0$ and we let
  $\psi = \mathit{true}$ and $\chi = \mathit{false}$.
 
  If $p=\emp$, then by \A6, \A1, \C1 and \SI1 we have
  $p=\emp+\dl=\dl+\emp=\dl + \rse{\mathit{true}}{\mathit{true}: \rightarrow\emp}$,
  which fits the shape of the right-hand side of Equation~\eqref{eq:hnf} if we
  take $n=0$, $\psi=\mathit{true}$ and $\chi=\mathit{true}$.

  If $p=a.p'$, then by \A6, \C2 and \SI1{} we have that
  $p=a.p'+\dl=a.p' + \rse{\mathit{true}}{\mathit{false}: \rightarrow \emp}$.
  
  Suppose, now, by way of induction hypothesis, that
  \begin{gather*}
    p' = \sum_{i=1}^{m}\phi_i' : \rightarrow a_i'.p_i'
          + \rse{\psi'}{\chi': \rightarrow \emp}\\
    \intertext{and}
    p'' = \sum_{i=1}^{n}\phi_i'' : \rightarrow a_i''.p_i''
          + \rse{\psi''}{\chi'': \rightarrow \emp}\enskip.
  \end{gather*}

  First note that if $p=p'+p''$, then, by \A{1}--\A{3}, \C3, \SI{4} and \SI{5}, we get that
  \begin{equation*}
    p= \sum_{i=1}^{m+n}\phi_i : \rightarrow a_i.p_i
          + \rse{(\psi'\wedge\psi'')}{(\chi'\vee\chi''): \rightarrow \emp}\enskip,
  \end{equation*}
  where
  $\phi_i=\phi_i'$, $a_i=a_i'$ and $p_i=p_i'$ if $1\leq i \leq m$, and
  $\phi_i=\phi_i''$, $a_i=a_{i-m}''$ and $p_i=p_{i-m}''$ if $m < i \leq m+n$.
  
  Next, we consider $p=p'\seqc p''$. By axioms \A{1}, \NA{1} (used if
    $m=0$ or $n=0$), \NA{3}, \NA{4}, \C7, \SI{4} and \SI{8} we have
   \begin{gather*}
    p' = \na{\rse{\psi'}{\sum_{i=1}^{m}\phi_i' : \rightarrow a_i'.p_i'}}
          +\chi': \rightarrow \emp\\
    \intertext{and}
    p'' = \na{\rse{\psi''}{\sum_{i=1}^{n}\phi_i'' : \rightarrow a_i''.p_i''}}
          + \chi'': \rightarrow \emp\enskip.
  \end{gather*}
  Hence, by \C8, we have
  \begin{equation*}
    p=p'\seqc p''=
      \na{\rse{\psi'}{\sum_{i=1}^{m}\phi_i' : \rightarrow a_i'.p_i'}}\seqc p''
      + (\chi'\wedge\chi''): \rightarrow \emp \enskip.
  \end{equation*}
  The root signal can be transferred to the other summand
  with applications of \SI8, \SI9, \SI4 and  \A1:
  \begin{equation*}
    p=
      \na{\sum_{i=1}^{m}\phi_i' : \rightarrow a_i'.p_i'}\seqc p''
      + \rse{{\psi'}}{(\chi'\wedge\chi''): \rightarrow \emp} \enskip.
  \end{equation*}
  We then find a head normal form for $p$ with applications of \A{11},
  \C6, \NA3, \C7, and \A5:
   \begin{equation*}
    p=
      \sum_{i=1}^{m}\phi_i' : \rightarrow a_i'.(p_i'\seqc p'')
      + \rse{{\psi'}}{(\chi'\wedge\chi''): \rightarrow \emp} \enskip.
  \end{equation*}
 
   If $p=\na{p'}$, then \NA{1}--\NA{4}, \A1,
   \A6, \C7, \SI9, \C2 and \C4 we have
  \begin{equation*}
    p=\na{p'}=\sum_{i=1}^{m}\phi_i' : \rightarrow a_i'.p_i'
    + \rse{\psi'}{\chi': \rightarrow \dl}
      =\sum_{i=1}^{m}\phi_i' : \rightarrow a_i'.p_i'
    + \rse{\psi'}{\mathit{false}: \rightarrow \emp} \enskip.
  \end{equation*}

  Suppose that $p=\phi : \rightarrow p'$, with the head normal form of
  $p'$ as given above. Let us first consider the case that $m>0$, then
  by \C5, \SI6 and \C4 we get the following head normal form:
  \begin{equation*}
      p = \sum_{i=1}^{m}(\phi \wedge \phi_i') : \rightarrow a_i'.p_i'
          + \rse{(\neg\phi \vee \psi')}{(\phi \wedge \chi'): \rightarrow
            \emp} \enskip.          
  \end{equation*}
  If $m=0$, then the same head normal form for
  $p$ can be obtained, since by \C2 and \C4 we have
  \begin{multline*}
    \phi :\rightarrow \sum_{i=1}^{m}\phi_i' : \rightarrow a_i'.p_i' =
     \phi :\rightarrow \dl = \phi : \rightarrow (\mathit{false}:
     \rightarrow \dl) =  \\ \mathit{false}: \rightarrow \dl =\dl =
     \sum_{i=1}^{m}(\phi \wedge \phi_i') : \rightarrow a_i'.p_i'\enskip.
   \end{multline*}

   Suppose that $p = \rse{\phi}{p'}$, again with the head normal form
   of $p'$ as given above. Then by \A{1}, \SI4 and \SI5 we obtain the
   following head normal form:
  \begin{equation*}
    p' = \sum_{i=1}^{m}\phi_i' : \rightarrow a_i'.p_i'
          + \rse{(\phi\wedge\psi')}{\chi': \rightarrow \emp}\qedhere
  \end{equation*}
\end{proof}

There may be some redundancy in a head normal form. Consider, for
instance, the head normal form
\begin{equation*}
  p = \dl + P :\rightarrow a.\emp + P:\rightarrow b.\rse{(\neg P)}{\emp}  + (\neg P):
  \rightarrow c.\emp + \rse{P}{\mathit{true}:\rightarrow \emp}\enskip.
\end{equation*}
Then the summand $\neg P:\rightarrow c.\emp$ does not contribute any behaviour, since
there does not exist a valuation that is consistent with the root
signal $P$ and at the same time makes condition $\neg P$ evaluate to
$\mathit{true}$. Moreover, if the $\mathit{effect}$ function satisfies, e.g., the
property that  $\mathit{effect}(b,v)(P)=v(P)$ for all valuations $v$, then
the effect of $b$ is not consistent with the root signal of $\rse{\neg
  P}{\emp}$, and so  the summand $P:\rightarrow b.\rse{(\neg P)}{\emp}$
does not contribute any behaviour.

\begin{defi} \label{def:rhnf}
A head normal form
  $\sum_{i=1}^{n}\phi_i : \rightarrow a_i.p_i
  + \rse{\psi}{\chi: \rightarrow \emp}$
is \emph{reduced} if, for every $1\leq i \leq n$, there exists a
valuation $v$ such that $v(\phi_i\wedge\psi)=\mathit{true}$ and
$\mathit{Cons}(\langle p_i,\mathit{effect}(a_i,v)\rangle)$.
\end{defi}

In the remainder we shall only be interested in a very specific type
of $\mathit{effect}$ function. Let us denote by $v_{\mathit{true}}$
the valuation that assigns $\mathit{true}$ to every propositional
variable. We say that the $\mathit{effect}$ function has the
\emph{reset property} if $\mathit{effect}(a,v)=v_{\mathit{true}}$ for
every action $a$ and every valuation $v$. If the $\mathit{effect}$
function has the reset property, then the axiom in
Table~\ref{tab:axioms_effect} is valid.

\begin{table}[thb]
\centering
\begin{tabular}{lcl@{\quad}l}
  $a.\rse{(\neg P_1\vee\dots\vee \neg P_k)}{x}$ & $=$ & $\dl$ & \textrm{R}
 \end{tabular}
\caption{The axiom for the $\mathit{effect}$ function with the reset
  property ($a\in\mathcal{A}\cup\{\tau\}$ and $P_1,\dots,P_k$ is any
  sequence of propositional variables).}
\label{tab:axioms_effect}
\end{table}

\begin{lem} \label{rhnf}
  For every head normal form $p$ with a consistent root signal there
  exists a reduced head normal form $q$ such that $p=q$ is derivable
  from the axioms in Tables~\ref{tab:axioms_tsp}--\ref{tab:axioms_effect}.
\end{lem}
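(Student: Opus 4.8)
The plan is to take a head normal form $p = \sum_{i=1}^{n}\phi_i : \rightarrow a_i.p_i + \rse{\psi}{\chi: \rightarrow \emp}$ and delete its \emph{redundant} action summands one by one. By Definition~\ref{def:rhnf}, a summand $\phi_i:\rightarrow a_i.p_i$ is redundant precisely when there is no valuation $v$ with $v(\phi_i\wedge\psi)=\mathit{true}$ and $\mathit{Cons}(\langle p_i,\mathit{effect}(a_i,v)\rangle)$. Since the $\mathit{effect}$ function is assumed to have the reset property, $\mathit{effect}(a_i,v)=v_{\mathit{true}}$ for every $v$, so the second conjunct collapses to the valuation-independent condition $\mathit{Cons}(\langle p_i,v_{\mathit{true}}\rangle)$; hence a summand is redundant exactly when (i)~$\phi_i\wedge\psi=\mathit{false}$ (equivalently, no valuation makes it $\mathit{true}$), or (ii)~$\mathit{Cons}(\langle p_i,v_{\mathit{true}}\rangle)$ fails. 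I would first note that the redundancy status of a summand depends only on $\phi_i,a_i,p_i$ and the root signal $\psi$, not on the other summands, so it suffices to rewrite one redundant summand to $\dl$ (absorbed by \A{6}) while preserving the head-normal-form shape and both $\psi$ and $\chi$; iterating this strictly decreases $n$ and hence terminates, yielding a reduced head normal form $q$ with $p=q$ derivable.

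For case~(i), the root signal annihilates the summand. I would establish the auxiliary identity $\rse{\psi}{y}+\phi:\rightarrow z = \rse{\psi}{y}$ whenever $\psi\wedge\phi=\mathit{false}$: by \SI{4} combine the two summands under the signal $\psi$; by \SI{7} reintroduce $\psi$ as a guard inside; distribute it over the sum by \C{5} and collect guards by \C{4} to obtain the summand $(\psi\wedge\phi):\rightarrow z = \mathit{false}:\rightarrow z = \dl$ (by \C{2}), which disappears by \A{6}; finally undo the \SI{7}/guard steps. Applying this — taking $y$ to be the remaining action summands together with $\chi:\rightarrow\emp$ pulled under $\psi$ via \SI{4}, and $\phi:\rightarrow z$ the redundant summand — deletes the summand and restores the head-normal-form shape.

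For case~(ii) I would show $a_i.p_i=\dl$, after which $\phi_i:\rightarrow a_i.p_i = \phi_i:\rightarrow\dl=\dl$ (by \C{4} and \C{2}) is absorbed by \A{6}. Using the head normal form theorem (Theorem~\ref{thm:hnf}), fix a derivable equation $p_i = \sum_{j}\phi_j':\rightarrow b_j.q_j + \rse{\psi'}{\chi':\rightarrow\emp}$. From the consistency rules in Figure~\ref{RSEsostable}, each action summand $\phi_j':\rightarrow b_j.q_j$ and the summand $\dl$ are consistent under every valuation, so $\mathit{Cons}(\langle p_i,v\rangle)$ holds iff $v(\psi')=\mathit{true}$; since, by soundness and Definition~\ref{def:statelessbisimilaritysignal}, derivably equal process expressions are consistent under exactly the same valuations, case~(ii) amounts to $v_{\mathit{true}}(\psi')=\mathit{false}$. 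Writing $\psi'$ in conjunctive normal form in the free Boolean algebra $\mathcal{B}$, it then has a clause $C=\neg P_1\vee\dots\vee\neg P_k$ made up of negated literals only (a clause with a positive literal would hold under $v_{\mathit{true}}$, hence so would $\psi'$), and $\psi'\wedge C=\psi'$. Then $\rse{\psi'}{\chi':\rightarrow\emp} = \rse{C}{\rse{\psi'}{\chi':\rightarrow\emp}}$ by \SI{5}, and pulling $C$ to the front of the whole sum by repeated use of \SI{4} (and \A{1}, \A{2} for reordering) gives $p_i = \rse{C}{p_i}$ derivably. Hence $a_i.p_i = a_i.\rse{C}{p_i} = \dl$ by axiom~\textrm{R}.

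I expect the main obstacle to be case~(ii): connecting the purely operational failure of $\mathit{Cons}(\langle p_i,v_{\mathit{true}}\rangle)$ to an algebraic shape on which axiom~\textrm{R} applies. This needs both the appeal to Theorem~\ref{thm:hnf} to expose the root signal $\psi'$ of $p_i$ and the Boolean-algebra step extracting an all-negated clause of $\psi'$ (using that $\mathcal{B}$ is freely generated, so that its elements have conjunctive normal forms and are separated by valuations). One also has to check carefully that \SI{4} really does permit factoring a root signal out of an arbitrary sum of head-normal-form summands, and that deleting a redundant summand leaves an honest head normal form, with the (still consistent) root signal and the acceptance condition unchanged.
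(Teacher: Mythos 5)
Your proposal is correct and follows essentially the same route as the paper's proof: delete summands whose guard is incompatible with the root signal using \SI{7}, \C{4}, \C{2}, \SI{4} and \A{6}, and for summands whose target fails $\mathit{Cons}(\langle p_i,v_{\mathit{true}}\rangle)$, expose the target's root signal via Theorem~\ref{thm:hnf}, extract from it (using freeness of $\mathcal{B}$) a disjunction of negated propositional variables that it implies, and conclude with \SI{5} (plus \SI{4}) and axiom $\textrm{R}$. Your CNF-clause extraction, the explicit appeal to soundness for transferring consistency to the head normal form, and the termination remark are only minor elaborations of the paper's argument, not a different method.
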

\begin{proof}
  Let
    $p=\sum_{i=1}^{n}\phi_i : \rightarrow a_i.p_i + \rse{\psi}{\chi:
      \rightarrow \emp}$.
  If for some $1 \leq i \leq n$, there does not exist a valuation $v$ such that
  $\phi_i\wedge\psi=\mathit{true}$, then $\phi_i \wedge \psi =
  \mathit{false}$, so by axioms \SI{7}, \C4, 
  and \C2 we have  that
  \begin{multline*}
    \rse{\psi}{(\phi_i:\rightarrow a_i.p_i)} =
    \rse{\psi}{(\psi:\rightarrow(\phi_i: \rightarrow a_i.p_i))} =
    \rse{\psi}{(\phi_i\wedge\psi): \rightarrow a_i.p_i} \\ =
    \rse{\psi}{(\mathit{false}:\rightarrow a_i.p_i)} = \rse{\psi}{\dl} =
    \dl
  \enskip.
  \end{multline*}
  Hence, by \SI4, \A{1}--3 and \A6 such a summand can be eliminated from
  the head normal form.

  It remains to argue that if $\mathit{Cons}(\langle
  p_i,v_{\mathit{true}}\rangle)$ does not hold, then the summand
    $\phi_i:\rightarrow
  a_i.p_i$ can be eliminated.
  Note that by Theorem~\ref{thm:hnf}, $p_i$ is derivably equal to a head normal
  form, say
  \begin{equation*}
    p_i = \sum_{j=1}^{n_i}\phi_{i,j} :\rightarrow a_{i,j}.p_{i,j} +
    \rse{\psi_i}{\chi_i: \rightarrow \emp}\enskip.
  \end{equation*}
  From the rules in Figure~\ref{RSEsostable} it follows that we have
  $\mathit{Cons}(\langle p_i, v_{\mathit{true}}\rangle)$ if, and only
  if, $v_\mathit{true}(\psi_i)=\mathit{true}$. Suppose that
  $v_{\mathit{true}}(\psi_i)=\mathit{false}$. Since $\psi_i$ is an element of a Boolean
  algebra $\mathcal{B}$ that is freely generated by the propositional
  variables, there exists a finite sequence $P_1,\dots,P_k$ of propositional
  variables such that, for every valuation $v$, $v(\psi_i)$ is
  completely determined by the truth values it assigns to
  these propositional variables. From
  $v_{\mathit{true}}(\psi_i)=\mathit{false}$, it then follows that for
  all valuations $v$ such that $v(\psi_i)=\mathit{true}$ there exists
  $1\leq i \leq k$ such that $v(P_i)=\mathit{false}$.
  Therefore, $\psi_i = \psi_i \wedge (\neg P_1\vee\dots\vee \neg
  P_k)$. Applications of axioms \SI5 and $\textrm{R}$ then yield
  \begin{equation*}
    \phi_i:\rightarrow a_i.p_i =
      \phi_i: \rightarrow a_i.\rse{(\neg P_1\vee\dots\vee\neg
        P_k)}{p_i} = \dl.
  \end{equation*}
  We conclude that if $\mathit{Cons}(\langle p_i,v_{\mathit{true}}\rangle)$ does
  not hold, then the summand $\phi_i:\rightarrow a_i.p_i$ can be
  eliminated by \A6.
\end{proof}
 
Below, we shall prove that, assuming an $\mathit{effect}$ function
with the reset property, the axioms in
Tables~\ref{tab:axioms_tsp}--\ref{tab:axioms_effect} are complete for
recursion-free process expressions. The proof will use induction on
the depth of the process expressions involved.
\newcommand{\depth}[1]{\ensuremath{\mathalpha{|#1|}}}
\begin{defi}\label{def:depth}
  We define the \emph{depth} $\depth{p}$ of a process expression $p$ by
  \begin{multline*}
    \depth{p} = \sup\{
      n \mid \exists p_0,\dots,p_n.\exists v_1,\dots,v_n.\exists
      a_1,\dots,a_n.\exists v_1',\dots,v_n'.\\
        p=p_0\ \&\ \forall_{1\leq i < n}. \langle p_i,v_i\rangle \step{a_{i+1}} \langle p_{i+1},v_i'\rangle
        \}\enskip.
  \end{multline*}
\end{defi}
From the operational semantics and the definition of stateless
bisimilarity it follows that if $p\sbisim q$, then
$\depth{p}=\depth{q}$. Moreover, if $p$ is recursion-free, then
$\depth{p}$ is finite. The following lemma facilitates our
ground-completeness proof to proceed by induction on depth.

\begin{lem} \label{lem:depth}
  Let $p=\sum_{i=1}^{n}\phi_i : \rightarrow a_i.p_i + \rse{\psi}{\chi:
    \rightarrow \emp}$ be a recursion-free and reduced
  head normal form and suppose that $\mathit{effect}$ has the reset
  property; then $\depth{p} > \depth{p_i}$ for all $1\leq i \leq n$.
\end{lem}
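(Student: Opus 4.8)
The plan is to prove, for each fixed $i$ with $1\le i\le n$, that $\depth{p}\ge\depth{p_i}+1$. Since $p_i$ is a subexpression of the recursion-free expression $p$, its depth is finite, so the strict inequality $\depth{p}>\depth{p_i}$ follows at once. The whole argument consists of exhibiting one concrete transition out of $p$ and prepending it to a length-optimal sequence of transitions out of $p_i$.

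First I would use the hypothesis that $p$ is reduced to obtain, for the fixed $i$, a valuation $v$ with $v(\phi_i\wedge\psi)=\mathit{true}$ and $\mathit{Cons}(\langle p_i,\mathit{effect}(a_i,v)\rangle)$. Because $\mathit{effect}$ has the reset property, $\mathit{effect}(a_i,v)=v_{\mathit{true}}$, so in fact $\mathit{Cons}(\langle p_i,v_{\mathit{true}}\rangle)$ holds. I would then derive $\langle p,v\rangle\step{a_i}\langle p_i,v_{\mathit{true}}\rangle$ directly from the operational rules of Figures~\ref{GCsostable} and~\ref{RSEsostable}, applied to $p=\bigl(\sum_{j=1}^{n}\phi_j:\rightarrow a_j.p_j\bigr)+\rse{\psi}{\chi:\rightarrow\emp}$: one peels off, one at a time, the sibling summands $\rse{\psi}{\chi:\rightarrow\emp}$, the summands $\phi_j:\rightarrow a_j.p_j$ for $j\ne i$, and the $\dl$ at the base of the empty summation, and finally fires the $i$-th summand. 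Every $\dl$ and every conditional whose body is $\emp$ or an action prefix is consistent under every valuation, so the only genuine obligations arising are $v(\psi)=\mathit{true}$ (the consistency check when peeling $\rse{\psi}{\chi:\rightarrow\emp}$), $v(\phi_i)=\mathit{true}$ (the conditional rule on $\phi_i:\rightarrow a_i.p_i$), and $\mathit{Cons}(\langle p_i,v_{\mathit{true}}\rangle)$ (the action-prefix rule on $a_i.p_i$, whose effect is $v_{\mathit{true}}$). All three are in hand.

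Next I would take some sequence of $\depth{p_i}$ consecutive transitions (in the sense of Definition~\ref{def:depth}) starting from $p_i$ — which exists and is finite because $p_i$ is recursion-free — and prepend the transition $\langle p,v\rangle\step{a_i}\langle p_i,v_{\mathit{true}}\rangle$ to it. Here it is essential that in the definition of depth the valuation at which each step is taken is chosen independently of the valuation resulting from the previous step, so the intermediate valuation $v_{\mathit{true}}$ may simply be discarded and the remaining steps taken at exactly the valuations used by the original witness. This produces a sequence of $\depth{p_i}+1$ consecutive transitions out of $p$, whence $\depth{p}\ge\depth{p_i}+1>\depth{p_i}$.

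The hard part is the bookkeeping in the middle step: verifying that every sibling summand peeled off in the derivation of $\langle p,v\rangle\step{a_i}\langle p_i,v_{\mathit{true}}\rangle$ is indeed consistent under $v$, so that the only constraints imposed on $v$ are the two already supplied by the word ``reduced''. Conceptually, the reset property is precisely what converts the consistency obligation recorded in Definition~\ref{def:rhnf} — consistency of $p_i$ under $\mathit{effect}(a_i,v)$ — into consistency of $p_i$ under the single fixed valuation $v_{\mathit{true}}$, which is what the action-prefix rule demands; without it one could not guarantee that the prepended $a_i$-step exists.
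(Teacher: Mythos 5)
Your proof is correct and takes essentially the same route as the paper's: use reducedness together with the reset property to obtain the single transition $\langle p,v\rangle\step{a_i}\langle p_i,v_{\mathit{true}}\rangle$, and then conclude $\depth{p}>\depth{p_i}$ from the finiteness of depths of recursion-free expressions. Your additional bookkeeping (checking consistency of the sibling summands under $v$, and prepending the transition to a depth-witness for $p_i$, which is legitimate because the valuations in Definition~\ref{def:depth} are chosen independently at each step) just makes explicit what the paper's proof leaves implicit.
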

\begin{proof}
  Let $1\leq i \leq n$. Then since $p$ is reduced, there exists a
  valuation $v$ such that $v(\phi_i\wedge\psi)\neq
  \mathit{false}$. Since $\mathit{effect}$ has the reset property, we
  have that $\mathit{effect}(a_i,v)=v_{\mathit{true}}$, and since $p$
  is reduced we have that $\mathit{Cons}(\langle
  p_i,v_{\mathit{true}}\rangle)$.
  Therefore $\langle p,v\rangle \step{a_i}\langle
  p_i,v_{\mathit{true}}\rangle$.
  Since the depth of recursion-free processes is finite, it follows that $\depth{p}>\depth{p_i}$.
\end{proof}

We can now establish that our axiomatisation is ground-complete.
\begin{thm}[Ground-completeness] \label{thm:completeness}
  Suppose that the $\mathit{effect}$ function has the reset property.
  For all recursion-free process expressions $p$ and $p'$, if
  $p\sbisim p'$ then $p=p'$ is derivable from the axioms in
  Tables~\ref{tab:axioms_tsp}--\ref{tab:axioms_effect} using
  equational logic.
\end{thm}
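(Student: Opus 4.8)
The plan is to argue by induction on $\depth{p}$, which is finite because $p$ is recursion-free and which equals $\depth{p'}$ since $p\sbisim p'$ and depth is a bisimulation invariant. First I would dispose of a degenerate case. By Theorem~\ref{thm:hnf} we may replace $p$ by a derivably equal head normal form; reading the operational rules of Figure~\ref{RSEsostable} off its shape, its root signal $\psi$ is consistent (i.e.\ $\psi\ne\mathit{false}$) exactly when $\mathit{Cons}(\langle p,v\rangle)$ holds for some valuation $v$. If $\psi=\mathit{false}$, then the first clause of Definition~\ref{def:statelessbisimilaritysignal} forces the root signal of the head normal form of $p'$ to be $\mathit{false}$ as well, and then \SI2, \SI4 and \A6 give $p=\rse{\mathit{false}}{\dl}=p'$. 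So assume henceforth that the root signal of $p$ is consistent; then so is that of $p'$, and by Lemma~\ref{rhnf} we may assume that $p$ and $p'$ are \emph{reduced} head normal forms
\[
  p = \sum_{i=1}^{n}\phi_i : \rightarrow a_i.p_i + \rse{\psi}{\chi:\rightarrow\emp},
  \qquad
  p' = \sum_{j=1}^{n'}\phi_j' : \rightarrow a_j'.p_j' + \rse{\psi'}{\chi':\rightarrow\emp}\enskip.
\]

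Next I would settle the root-signal and acceptance parts. From the shape of a head normal form and the rules of Figure~\ref{RSEsostable} it follows that $\mathit{Cons}(\langle p,v\rangle)$ holds exactly when $v(\psi)=\mathit{true}$ and $\term{\langle p,v\rangle}$ holds exactly when $v(\psi\wedge\chi)=\mathit{true}$, and likewise for $p'$. Since $p\sbisim p'$, the first and third clauses of Definition~\ref{def:statelessbisimilaritysignal} give $v(\psi)=v(\psi')$ and $v(\psi\wedge\chi)=v(\psi'\wedge\chi')$ for every valuation $v$, whence $\psi=\psi'$ and $\psi\wedge\chi=\psi'\wedge\chi'$ because $\mathcal{B}$ is freely generated. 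Using \SI7 and \C4 one obtains $\rse{\psi}{\chi:\rightarrow\emp}=\rse{\psi}{(\psi\wedge\chi):\rightarrow\emp}$, and applying this identity on both sides together with the two Boolean equalities just derived yields $\rse{\psi}{\chi:\rightarrow\emp}=\rse{\psi'}{\chi':\rightarrow\emp}$.

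It then remains to match the action summands. Let $P_1,\dots,P_m$ be the propositional variables occurring in the two head normal forms displayed above; each $\phi_i$ and $\psi$ lies in the finite free Boolean algebra generated by $P_1,\dots,P_m$ and hence equals the join of the minterms (over $P_1,\dots,P_m$) below it, so by \C3 the summand $\phi_i:\rightarrow a_i.p_i$ is derivably equal to $\sum_{\mu\le\phi_i}\mu:\rightarrow a_i.p_i$. Performing this split on both sides and re-applying Lemma~\ref{rhnf} --- which, by the reset property, discards exactly the minterm summands $\mu:\rightarrow a_i.p_i$ with $\mu\wedge\psi=\mathit{false}$, keeping the others since $p$ was already reduced and their successors are left untouched --- we may assume that every action-summand condition is a minterm $\mu\le\psi$. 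Now fix such a summand $\mu:\rightarrow a.q$ of $p$ and let $v_\mu$ be the valuation realising $\mu$; because $\mu\le\psi$ and $\mathit{effect}$ has the reset property, reducedness gives $\langle p,v_\mu\rangle\step{a}\langle q,v_{\mathit{true}}\rangle$ exactly as in the proof of Lemma~\ref{lem:depth}. By $p\sbisim p'$ there is a $q'$ with $\langle p',v_\mu\rangle\step{a}\langle q',v_{\mathit{true}}\rangle$ and $q\sbisim q'$; this transition originates from a minterm summand of $p'$ whose condition is true at $v_\mu$, hence is the minterm $\mu$ itself, so $\mu:\rightarrow a.q'$ occurs as a summand of $p'$. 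By Lemma~\ref{lem:depth} we have $\depth{q}<\depth{p}$, and since $q\sbisim q'$ also $\depth{q'}<\depth{p}$, so the induction hypothesis yields $q=q'$ and therefore $\mu:\rightarrow a.q$ is, up to derivable equality, a summand of $p'$. Symmetrically every action summand of $p'$ is a summand of $p$ up to derivable equality, so combining this with the equality of the root-signal/acceptance parts and axioms \A1--\A3 (with \SI4 and \SI5 to merge the two copies of the root-signalled summand), we conclude $p+p'=p'$ and $p+p'=p$, hence $p=p'$.

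I expect the main obstacle to be the propositional bookkeeping in the final step: making precise that splitting the conditions into minterms and re-reducing with Lemma~\ref{rhnf} is sound and leaves the successors $p_i$ untouched --- so that Lemma~\ref{lem:depth} stays applicable --- and, more delicately, that a matching transition supplied by the bisimulation on the $p'$-side necessarily originates from a summand carrying the very same minterm $\mu$, which is what makes absorption via \A3 possible. The reset property is essential precisely here: it forces every transition into the single target valuation $v_{\mathit{true}}$, so that ``same minterm, provably equal successor'' is enough for the match; without it one would have to track the effect valuations along the transitions and the clean summand-by-summand comparison would break down.
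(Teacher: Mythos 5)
Your proof is correct and shares the paper's overall skeleton---induction on depth, reduction of both sides to reduced head normal forms via Theorem~\ref{thm:hnf} and Lemma~\ref{rhnf}, extraction of the root signal and acceptance condition from clauses~\ref{item:cons} and~\ref{item:term} of Definition~\ref{def:statelessbisimilaritysignal}, and use of Lemma~\ref{lem:depth} with the induction hypothesis on successors---but you handle the crucial summand-matching step differently. The paper does not normalise the guards: for a fixed summand $\phi_i:\rightarrow a_i.p_i$ of $p$ it collects the set $J'$ of all summands of $p'$ with the same action and a bisimilar successor, proves the covering identity $\bigvee_{j\in J'}(\phi_i\wedge\phi_j')=\phi_i$ (its Eqn.~\eqref{eq:condition}), and then absorbs the whole summand into $p'$ in one step using \C3, \A3 and the induction hypothesis $p_j'=p_i$. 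You instead split every guard into minterms over the variables occurring in the two head normal forms, re-reduce, and obtain a one-to-one match of minterm summands; this buys a cleaner ``same guard, same action, derivably equal successor'' correspondence at the cost of the propositional bookkeeping you flag (including the check that the minterm-split form is still reduced, which indeed holds because consistency of a successor at $v_{\mathit{true}}$ is valuation-independent under the reset property). Both arguments lean on the reset property in exactly the same way, namely to force every matched transition into the single target valuation $v_{\mathit{true}}$. Two smaller points where you are in fact more careful than the paper: the termination clause only yields $\psi\wedge\chi=\psi'\wedge\chi'$ rather than $\chi=\chi'$ as the paper asserts, and your rewriting via \SI7 and \C4 repairs this cleanly; and you treat the case of an inconsistent root signal explicitly, which is needed before Lemma~\ref{rhnf} (which presupposes consistency) can be invoked---the paper glosses over both.
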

\begin{proof}
   We proceed by induction on the sum of the depths of $p$ and $q$. By
   Theorem~\ref{hnf} and Lemma~\ref{rhnf} we may assume that $p$ and
   $p'$ are reduced head normal forms: 
   \begin{equation*}
      p=\sum_{i=1}^{n}\phi_i : \rightarrow a_i.p_i + \rse{\psi}{\chi:
        \rightarrow \emp}
    \end{equation*}
    and
    \begin{equation*}
      p'=\sum_{j=1}^{n'}\phi_j' : \rightarrow a_j'.p_j' + \rse{\psi'}{\chi':
        \rightarrow \emp}
    \enskip.
    \end{equation*}
    First note that by condition~\ref{item:cons} of
    Definition~\ref{def:statelessbisimilaritysignal} we have that
    $\psi=\psi'$ and by condition~\ref{item:term} we have that
    $\chi=\chi'$. Hence, it remains to establish that
      $p' = p' +  \phi_i : \rightarrow a_i.p_i$
      for all $1\leq i \leq n$.

    Note that if $v$ is some valuation such that
    $v(\phi_i)=\mathit{true}$, then $\langle p, v\rangle\step{a_i}
    \langle p_i,v_{\mathit{true}}\rangle$. It then
    follows by condition~\ref{item:step} of
    Definition~\ref{def:statelessbisimilaritysignal} that there exists
    $1\leq j \leq n'$ such that
    $\langle p',v\rangle\step{a_j} \langle
    p_j',v_{\mathit{true}}\rangle$, $a_j=a_i$ and $p_i \sbisim p_j'$,
    and hence $v(\phi_j')=\mathit{true}$.
    Define
    \begin{equation*}
      J' = \{ 1\leq j \leq   n' \mid a_j=a_i \ \&\ p_j \sbisim p_i\}
      \enskip.
    \end{equation*}
    We have just argued that $v(\phi_i)=\mathit{true}$ implies that
    there exists $j\in J'$ such that $v(\phi_j')=\mathit{true}$, and
    hence
    \begin{equation}  \label{eq:condition}
      \bigvee_{j\in J'}\left(\phi_i \wedge \phi_j'\right) =
      \phi_i\enskip.
    \end{equation}
    Using that
      $\phi_j' = (\phi_i \wedge \phi_j') \vee
      (\phi_i\wedge\neg\phi_j')$,
    note that by axioms \C3 and \A3 we have
    \begin{equation*}
      \phi_j': \rightarrow a_j'.p_j' =
         \phi_j':\rightarrow a_j'.p_j' + (\phi_i \wedge
         \phi_j'):\rightarrow a_j'.p_j'\enskip,
   \end{equation*}
   and so we have
    \begin{equation*}
      p' = p' + \sum_{j\in J'} (\phi_i \wedge \phi_j'):\rightarrow a_j'.p_j'\enskip.
   \end{equation*}
   By Lemma~\ref{lem:depth} we have $\depth{p_i}<\depth{p}$ and
   $\depth{p_j'}<\depth{p'}$, so by the induction hypothesis
   $p_j'=p_i$ is derivable from the axioms in
   Tables~\ref{tab:axioms_tsp}--\ref{tab:axioms_effect} .
   It follows that
   \begin{equation*}
      p' = p' + \sum_{j\in J'} (\phi_i \wedge \phi_j'):\rightarrow a_i.p_i\enskip.
   \end{equation*}
   Finally, we apply \C3 and Eqn.~\eqref{eq:condition} to get
   \begin{equation*}
      p' = p' + \phi_i:\rightarrow a_i.p_i\enskip.\qedhere
   \end{equation*}
\end{proof}

The information given by the truth of the signals allow to determine
the truth of some of the guarded commands. In this way, we can give a
semantics for process expressions and specifications as regular process graphs, leaving out the valuations. 

\begin{defi} \label{tss}
Let $p, q$ be sequential process expressions with signals and conditions, let $a \in \mathcal{A} \cup \{\tau\}$ and suppose the root signal of $t$ is not \emph{false}.
\begin{itemize}
\item $p \step{a} q$ iff for all valuations $v$ such that $\mathit{Cons}(\langle p,v \rangle)$ we have $\langle p,v \rangle \step{a} \langle q,\mathit{effect}(a,v) \rangle$,
\item $p \downarrow$ iff for all valuations $v$ such that $\mathit{Cons}(\langle p,v \rangle)$ we have $\langle p,v \rangle \downarrow$.
\end{itemize}
\end{defi}

  The above definition makes all undetermined guarded commands
  \emph{false}, as is illustrated in the following example.
\begin{exa} \label{exa:falseguards}
  Let $P$ be a proposition variable,
  and let $p = P: \rightarrow a.\emp$ and $q = P:\rightarrow
  b.\emp$. Note that we have $\mathit{Cons}(\langle p,v\rangle)$ for
  all valuations $v$ (since $p$ does not emit any signal), but
  $\langle p,v\rangle \step{a} \langle \emp,v'\rangle$ only if
  $v(P)=\mathit{true}$. Hence, $p$ does not have any outgoing
  transitions according to Definition~\ref{tss}. By the same
  reasoning, also $q$ does not have any outgoing
  transitions. Therefore, $p$ and $q$ are bisimilar with respect to
  the transition relation induced on them by Definition~\ref{tss}.
\end{exa}

When two process expressions are stateless bisimilar, then they are also bisimilar
with respect to the transition relation induced on them by the
Definition~\ref{tss}. The converse, however, does not hold: although
the process expressions $p$ and $q$ in Example~\ref{exa:falseguards}
are bisimilar, they are not stateless bisimilar.

To illustrate the interplay of root signal emission and guarded command, and to show how nondeterminism can be dealt with, we give the following example.

\begin{figure}[htb]
\begin{center}
\begin{tikzpicture}[->,>=stealth',node distance=1.5cm, node
  font=\footnotesize, state/.style={ellipse, draw, minimum size=.5cm,inner sep=0pt}]
  \node[state,initial,initial text={},initial where=left] (s0) {};
  \node[state] [above right=of s0,yshift=-.5cm,minimum width=1cm] (s1) {$\mathit{tails}$};
  \node[state] [below right=of s0,yshift=.5cm,minimum width=1cm] (s2) {$\mathit{heads}$};
  \node[state,accepting] [right of=s2,node distance=2cm] (s3) {};
  
  \path[->]
    (s0) edge node[left,align=right,yshift=0.2cm] {$\mathit{toss}$} (s1)
    (s0) edge node[left,align=right,yshift=-0.2cm] {$\mathit{toss}$} (s2)
    (s1) edge node[right] {$\mathit{toss}$} (s2)
    (s1) edge[loop right] node[right] {$\mathit{toss}$} (s1);
  \path[->]
    (s2) edge node[above] {$\mathit{hurray}$} (s3);
\end{tikzpicture}
\end{center}
\caption{The process graph associated with the specification of a coin toss.}\label{fig:cointoss}
\end{figure}

\begin{exa} \label{exa:cointoss}
A coin toss can be described by the following process expression: 
\[ T \defeqn \mathit{toss}.(\rse{\mathit{heads}}{\emp}) + \mathit{toss}.(\rse{\mathit{tails}}{\emp}), \]
where the \emph{effect} function has the reset property, i.e.\ for every
valuation $v$ we have
\begin{equation*}
  \mathit{effect}(\mathit{toss,v})(\mathit{heads}) =
  \mathit{effect}(\mathit{toss,v})(\mathit{tails}) = \mathit{true}\enskip.
\end{equation*}
  Now, consider the expression 
 \[S \defeqn T \seqc (\mathit{heads} : \rightarrow
   \mathit{hurray}.\emp + \mathit{tails} : \rightarrow S). \]

   Then $S$ represents the process of
   tossing a coin until heads comes up, and its process graph is
   shown in Figure~\ref{fig:cointoss}, as we shall now explain. Let
     \[\mathit{Heads}= (\rse{\mathit{heads}}{\emp}) \seqc (\mathit{heads} : \rightarrow
   \mathit{hurray}.\emp + \mathit{tails} : \rightarrow S)\]
   and let
    \[\mathit{Tails}=(\rse{\mathit{tails}}{\emp}) \seqc (\mathit{heads} : \rightarrow
      \mathit{hurray}.\emp + \mathit{tails} : \rightarrow
      S)\enskip.\]
   To see that $S\step{\mathit{toss}}\mathit{Tails}$ and $S
   \step{\mathit{toss}}\mathit{Heads}$, note that
   $\langle S,v\rangle \step{\mathit{toss}} \langle \mathit{Heads},
   \mathit{effect}(\mathit{toss},v)\rangle$
   and
   $\langle S,v\rangle \step{\mathit{toss}} \langle \mathit{Heads},
   \mathit{effect}(\mathit{toss},v)\rangle$ for every valuation $v$.

   To see that $\mathit{Tails}\step{\mathit{toss}}\mathit{Tails}$ and
   $\mathit{Tails}\step{\mathit{toss}}\mathit{Heads}$, first observe that $\mathit{Cons}(\langle \mathit{Tails},v\rangle)$ if, and
   only if, $v(\mathit{tails})=\mathit{true}$, and then note that for
   all such valuations $v$ we, indeed, have
        $\langle \mathit{Tails},v\rangle \step{toss}
     \langle\mathit{Tails},\mathit{effect}(\mathit{toss},v)\rangle$
     and
     $\langle \mathit{Tails},v\rangle \step{toss}
     \langle\mathit{Heads},\mathit{effect}(\mathit{toss},v)\rangle$.

   It is instructive to see why we do \emph{not} have that
   $\mathit{Tails}\step{\mathit{hurray}}\emp$. This is because
   if $v$ is a valuation that satisfies  $v(tails)=\mathit{true}$ and
   $v(\mathit{heads})=\mathit{false}$,  then we also have
   $\mathit{Cons}(\langle \mathit{Tails},v\rangle)$, whereas
   $\langle \mathit{Tails},v \rangle\nstep{\mathit{hurray}}\langle
   \emp,\mathit{effect}(\mathit{hurray},v)\rangle$.

   Finally, to see that $\mathit{Heads}\step{\mathit{hurray}}\emp$,
   first observe that $\mathit{Cons}(\langle \mathit{heads},v\rangle)$ if, and
   only if, $v(\mathit{heads})=\mathit{true}$, and then note that, indeed,
   $\langle \mathit{Heads},v \rangle\step{\mathit{hurray}}\langle
   \emp,\mathit{effect}(\mathit{hurray},v)\rangle$ for all such
   valuations $v$.
\end{exa}

\section{The full correspondence}

We prove that signals and conditions make it possible to find a guarded sequential specification with the same process as a given pushdown automaton.

\begin{thm} \label{th:correspondence}
For every  pushdown automaton there is a guarded sequential recursive
specification with signals and conditions with the same process.
\end{thm}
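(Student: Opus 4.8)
The plan is to build, from a given pushdown automaton $M=(\mathcal{S},\mathcal{A},\mathcal{D},{\rightarrow},{\uparrow},{\downarrow})$, a guarded sequential recursive specification with signals and conditions whose process graph --- taken in the valuation-free semantics of Definition~\ref{tss} --- is (isomorphic to, hence bisimilar to) the process graph $\mathcal{P}(M)$. I work with an $\mathit{effect}$ function having the reset property and take one propositional variable $\Psi_s$ per control state $s\in\mathcal{S}$. The key idea is that a configuration $(s,x)$ of $M$, with $x=d_1\cdots d_n$, is represented by
\[
  \rho(s,x)\;=\;\rse{\Psi_s}{\emp}\seqc X_{d_1}\seqc X_{d_2}\seqc\cdots\seqc X_{d_n}\seqc B ,
\]
and $\rho(s,\epsilon)=\rse{\Psi_s}{\emp}\seqc B$, where $B$ is a bottom-of-stack identifier and the $X_d$ ($d\in\mathcal{D}$) are stack-symbol identifiers. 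The leading $\rse{\Psi_s}{\emp}$ records the current control state: it is accepting and step-free, so by the negative-premise rule for $\seqc$ it transparently hands control to the active symbol $X_{d_1}$, while making a valuation consistent with $\rho(s,x)$ only when it assigns $\mathit{true}$ to $\Psi_s$. For each $X_d$ I put a guarded-command summand $\Psi_t:\rightarrow a.(\rse{\Psi_{t'}}{\emp}\seqc X_{e_1}\seqc\cdots\seqc X_{e_k})$ for every transition $t\xrightarrow{a[d/e_1\cdots e_k]}t'$ of $M$ (an empty product of identifiers denoting $\emp$, so a pop produces just $\rse{\Psi_{t'}}{\emp}$), plus an acceptance summand $(\bigvee_{t\in{\downarrow}}\Psi_t):\rightarrow\emp$; the equation of $B$ is analogous, using the empty-stack transitions $t\xrightarrow{a[\epsilon/e_1\cdots e_k]}t'$ and keeping a trailing $B$ in each summand. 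Finally the root of the specification is $X_{\mathrm{init}}\defeqn\rse{\Psi_{\uparrow}}{\big[(\emp +{})\sum_{\uparrow\xrightarrow{a[\epsilon/e_1\cdots e_k]}t'}a.(\rse{\Psi_{t'}}{\emp}\seqc X_{e_1}\seqc\cdots\seqc X_{e_k}\seqc B)\big]}$, with the $\emp$-summand present iff ${\uparrow}\in{\downarrow}$. All these equations are guarded, since every identifier occurrence lies under an action prefix.

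Next I would show that $R$, pairing $({\uparrow},\epsilon)$ with $X_{\mathrm{init}}$ and every reachable configuration $(s,x)$ with $\rho(s,x)$, closed under symmetry, is a bisimulation between $\mathcal{P}(M)$ and the process graph of $X_{\mathrm{init}}$ induced by Definition~\ref{tss}. Here is why the signals and conditions do exactly the right thing. Under Definition~\ref{tss}, a transition $\rho(s,x)\step{a}q$ is present precisely when it is present in the operational semantics of Figures~\ref{GCsostable}--\ref{RSEsostable} \emph{under every valuation consistent with $\rho(s,x)$}. Since $\rho(s,x)$ emits only the root signal $\Psi_s$ and all of $X_{d_1},\dots,X_{d_n},B$ are signal-free, the valuations consistent with $\rho(s,x)$ are exactly those assigning $\mathit{true}$ to $\Psi_s$; for every such valuation the $\seqc$-rules pass control through $\rse{\Psi_s}{\emp}$ to $X_{d_1}$, and the only guarded commands of $X_{d_1}$ enabled under \emph{all} of them are those guarded by $\Psi_s$. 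Because $\mathit{effect}$ has the reset property, the resulting valuation is always $v_{\mathit{true}}$, under which the target --- which is precisely $\rho(s',x')$ for the corresponding move $s\xrightarrow{a[d_1/\cdot]}s'$ (up to \A{5}) --- is again consistent, so each $M$-move of $(s,x)$ induces the matching transition of $\rho(s,x)$. Conversely, for any $t\neq s$ the valuation assigning $\mathit{true}$ only to $\Psi_s$ is consistent with $\rho(s,x)$ and disables every $\Psi_t$-guarded summand of $X_{d_1}$, so no spurious transition survives in the induced relation (distinct $M$-moves from a fixed configuration yield syntactically distinct $\rho$-targets, ruling out accidental coincidences). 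The same analysis gives that $\rho(s,x)$ is accepting in the induced semantics iff $\bigvee_{t\in{\downarrow}}\Psi_t$ holds under every consistent valuation, i.e.\ iff $\Psi_s$ is among its disjuncts, i.e.\ iff $s\in{\downarrow}$; and when $s\in{\downarrow}$ has outgoing moves, $\rho(s,x)$ displays intermediate acceptance just as $(s,x)$ does. Empty-stack moves (handled by the summands of $B$) and the initial configuration (for which one checks directly that $X_{\mathrm{init}}$ has the same induced transitions and acceptance as $\rse{\Psi_{\uparrow}}{\emp}\seqc B$) are treated in the same way.

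The main obstacle is the verification in the second paragraph, where one must be scrupulous about \emph{consistency} at every turn: confirming that $\rse{\Psi_s}{\emp}$ is accepting, step-free and consistent under all consistent valuations (so the negative-premise rule for $\seqc$ genuinely transfers control), that after a push or a pop the freshly produced prefix $\rse{\Psi_{t'}}{\emp}$ re-establishes the representation invariant, that the various $\seqc$-associations appearing as $\rho$-targets are absorbed by \A{5}, and --- above all --- that the ``for all consistent valuations'' quantifier of Definition~\ref{tss} deletes exactly the transitions belonging to the wrong control state. This last point is precisely where the interplay between root-signal emission $\rse{\Psi_s}{\cdot}$ (which constrains consistency) and guarded commands $\Psi_t:\rightarrow\cdot$ (which read the state back) does the work, in the style of the coin-toss Example~\ref{exa:cointoss}. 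Everything else is routine bookkeeping over the finitely many transitions of $M$.
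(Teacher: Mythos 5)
Your construction is essentially the paper's own: one propositional variable per control state, a reset-property $\mathit{effect}$ function, stack symbols encoded as sequenced identifiers with state-guarded summands, a bottom-of-stack identifier (your $B$ is the paper's $X_{\epsilon}$), control state recorded by root-signal emission after each action and read back by guarded commands, and a bisimulation relating each configuration $(s,x)$ to its signal-emitting sequential term. The only differences are cosmetic --- you emit the state signal via a leading $\rse{\Psi_s}{\emp}$ prefix rather than over the whole term (equal by \SI8 and \A9), and you merge the acceptance summands into one disjunctive guard $(\bigvee_{t\in{\downarrow}}\Psi_t):\rightarrow\emp$ instead of separate summands --- so your argument matches the paper's proof of Theorem~\ref{th:correspondence}.
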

\begin{proof}
Let $M = (\mathcal{S},\mathcal{A},\mathcal{D},{\rightarrow},{\uparrow},{\downarrow})$ be a pushdown automaton. 
For every state $s \in \mathcal{S}$ we have a propositional variable
$\mathit{state}(s)$. We assume an $\mathit{effect}$ function with the
reset property, to make sure that the execution of an
  action from a consistent state always results in a consistent
  state. Note that it is the interplay between the emitted root signal and
  the guards that ensures appropriate continuations (cf.\ also Example~\ref{exa:cointoss}).We proceed to define the recursive specification with initial identifier $X$ and additional identifiers $X_{\epsilon}$ and $\{X_d \mid d \in \mathcal{D}\}$.

\begin{itemize}
\item If $M$ does not contain any transition of the form
${\uparrow} \xrightarrow{a[\epsilon/x]} s$, and the initial state is final, we can take $X \defeqn \emp$ as the specification, and we do not need the additional identifiers.
\item If $M$ does not contain any transition of the form
${\uparrow} \xrightarrow{a[\epsilon/x]} s$, and the initial state is not final, we can take $X \defeqn \dl$ as the specification, and we do not need the additional identifiers.
\item Otherwise, there is some transition ${\uparrow}
  \xrightarrow{a[\epsilon/x]} s$ in $M$. For each such transition, add
  a summand
  \begin{equation*}
      a.(\rse{\mathit{state}(s)}{\alpha_x \seqc X_{\epsilon}})
    \end{equation*}
    to the equation of $X$. The effect $v$ of $a$ in any valuation  satisfies $v(\mathit{state}(s))=\mathit{true}$. Besides these summands, add a summand $\emp$ iff the initial state of $M$ is final. Notice that no restriction on the initial valuation is necessary.
\item Next, the equation for the added identifier $X_d$ has a summand 
  \begin{equation*}
     \mathit{state}(s) : \rightarrow a.(\rse{\mathit{state}(t)}{\alpha_x})
   \end{equation*}
   for each transition $s \xrightarrow{a[d/x]} t$, for every $s,t \in \mathcal{S}$. The effect $v$ of $a$ in any valuation satisfies $v(\mathit{state}(t))=\mathit{true}$.

Finally, the equation for the added identifier $X_d$ has a summand
$\mathit{state}(s) : \rightarrow \emp$ whenever $s \in {\downarrow}$.
\item Lastly, the equation for the added identifier $X_{\epsilon}$ has a summand
  \begin{equation*}
     \mathit{state}(s) : \rightarrow a.(\rse{\mathit{state}(t)}{\alpha_x \seqc X_{\epsilon} })
 \end{equation*} 
        for each transition $s \xrightarrow{a[\epsilon /x]} t$, for every $s,t \in \mathcal{S}$. The effect $v$ of $a$ in any valuation satisfies $v(\mathit{state}(t))=\mathit{true}$.

Finally, the equation for the added identifier $X_{\epsilon}$ has a summand
$\mathit{state}(s) : \rightarrow \emp$ whenever $s \in {\downarrow}$.
\end{itemize}
Then the bisimulation relation will relate the initial states, and will relate every state $(s,x)$ of the process graph of the pushdown automaton to state $\rse{\mathit{state}(s)}{\alpha_x \seqc X_{\epsilon}}$ of the process graph of the constructed specification. We see that the process graph of the pushdown automaton has a step $(s,dy) \step{a} (s,xy)$ (coming from a step $s \xrightarrow{a[d/x]} t$ in the automaton) if, and only if, $\rse{\mathit{state}(s)}{X_d \seqc \alpha_y \seqc X_{\epsilon}} \step{a} \rse{\mathit{state}(t)}{\alpha_x \seqc \alpha_y \seqc X_{\epsilon}}$ in the process graph of the specification. 

Also, the process graph of the pushdown automaton has a step $(s,\epsilon) \step{a} (s,x)$ (coming from a step $s \xrightarrow{a[\epsilon/x]} t$ in the automaton) if, and only if, $\rse{\mathit{state}(s)}{ X_{\epsilon}} \step{a} \rse{\mathit{state}(t)}{\alpha_x \seqc X_{\epsilon}}$ in the process graph of the specification.
\end{proof}

\begin{exa}
For the pushdown automaton in Fig. \ref{fig:pda2}, we find the following guarded recursive specification:
 \[ S = a.(\rse{\mathit{state}\!\uparrow}{A} \seqc (\mathit{state}\!\uparrow : \rightarrow S + \mathit{state}\!\downarrow : \rightarrow \emp)) + c.(\rse{\mathit{state}\!\downarrow}{\emp}) \]
 \[ A = \mathit{state}\!\downarrow : \rightarrow   b.(\rse{\mathit{state}\!\downarrow}{\emp}) \: + \]
 \[ \qquad  + \: \mathit{state}\!\uparrow : \rightarrow (a. (\rse{\mathit{state}\!\uparrow}{A} ; A) + b.(\rse{\mathit{state}\!\uparrow}{\emp}) + c.(\rse{\mathit{state}\!\downarrow}{A})). \]
\end{exa}

Finally, we are interested in the question whether the mechanism of signals and conditions is not \emph{too} powerful: can we find a pushdown automaton with the same process for any guarded sequential specification with signals and conditions? We will show the answer is positive. This means we recover the perfect analogue of the classical theorem: the set of processes given by pushdown automata coincides with the set of processes given by guarded sequential specifications with signals and conditions.

In order to prove this final theorem, we need another refinement of
Greibach normal forms, now in the presence of signals and
conditions.
Recall the head normal form of the previous section. Using Theorem~\ref{thm:hnf} and, if necessary, adding a finite number
of process identifiers with appropriate defining equations, we can write each equation in a guarded sequential specification with signals and conditions in Greibach normal form, as follows:
\[ X = \sum_{i=1}^{n} \phi_i : \rightarrow a_i.\alpha_i +
  \rse{\psi}{(\chi :\rightarrow \emp)} \qquad a_i \in {\Act} \cup
  \{\tau\},  X \in \Procids, \alpha_i \in \Procids^{*}\enskip. \]
Here, $\psi$ is the root signal of $X$, and $\chi$ the
acceptance condition of $X$, writing again
\[ X = \sum_{i=1}^{n} \phi_i : \rightarrow a_i.\alpha_i + \rse{\psi}{\dl}\qquad a_i \in {\Act} \cup \{\tau\},  X \in \Procids, \alpha_i \in \Procids^{*} \]
if $\psi\wedge\chi$ is \emph{false}.
We see that $X$ is accepting in any
state with a valuation $v$ such that $v(\psi\wedge\chi)=\mathit{true}$, and it shows intermediate acceptance
whenever there exists $1\leq i \leq n$ with $v(\psi \wedge \chi \wedge \phi_i) = \mathit{true}$.

We define the Acceptance Irredundant Greibach normal form as before, disregarding the remaining signals and conditions. 

 \begin{thm}
For every guarded sequential recursive specification with signals and
conditions there is a pushdown automaton with the same process.
\end{thm}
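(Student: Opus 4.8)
The plan is to mirror the pushdown-automaton construction from the proof of Theorem~\ref{theo3}, enriching the finite control of the automaton — and, where necessary, the stack symbols — with the finite propositional information that the signals and conditions make relevant; the root signals will play, read in the opposite direction, the role that the state-tracking signals played in the proof of Theorem~\ref{th:correspondence}.

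First I would normalise the specification. As explained just above the statement, Theorem~\ref{thm:hnf} together with the addition of finitely many fresh process identifiers lets me assume every defining equation has the Greibach form $X = \sum_{i=1}^{n}\phi_i:\rightarrow a_i.\alpha_i + \rse{\psi}{\chi:\rightarrow\emp}$ with $\alpha_i\in\Procids^{*}$. Treating the guards, root signals and acceptance conditions as passive decoration of the underlying identifier sequences, I would rerun the transformations behind Definition~\ref{def:AIGNF} and Proposition~\ref{prop:separation} — these rewrite only the identifier sequences — to obtain a specification in AIGNF together with a subset $\Procidssep\subseteq\ProcidsNT$ that separates non-acceptance from acceptance, so that every reachable state is given by an acceptance-irredundant sequence $\alpha\in(\ProcidsINT-\Procidssep)^{*}\Procidssep\ProcidsT^{*}\cup\ProcidsT^{*}$ carrying a root signal. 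Here I would need to check that the constructions of Proposition~\ref{prop:separation} still work with the signals carried along — the fresh copies $X^{\dagger}$ inheriting the root signal of $X$ — and that they preserve the bisimilarity induced by Definition~\ref{tss}.

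Second I would build the pushdown automaton. Since the normalised specification is finite, only finitely many propositional variables occur in it, so every guard, root signal and acceptance condition lies in a fixed finite sub-algebra; and assuming, as in Theorem~\ref{th:correspondence}, that $\mathit{effect}$ has the reset property, the ambient valuation after any step is $v_{\mathit{true}}$. Hence the behaviour of a reachable state $\rse{\psi}{\alpha}$ is determined by $\alpha$ together with a bounded amount of propositional data: its root signal and a mode bit recording whether the leading identifier of $\alpha$ lies in $\ProcidsT$, in $\Procidssep$, or in $\ProcidsINT-\Procidssep$ — a refinement of the states $n,t$ of the proof of Theorem~\ref{theo3}. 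I would therefore take the automaton's states to be these finitely many pairs; use one stack symbol per identifier, with the empty stack corresponding to the empty sequence exactly as in the proof of Theorem~\ref{theo3}, and decorated where necessary so that the relevant conjunctions of root signals and acceptance conditions of the identifiers below can be restored on a pop; and, for each summand $\phi_i:\rightarrow a_i.\alpha_i$ of the equation of $X$ and each control state that is consistent with $X$ being leading and in which the root signal implies $\phi_i$, add a step that consumes $a_i$, pops the symbol for $X$, pushes the (decorated) encoding of $\alpha_i$, and moves to the control state dictated by the head normal form of $\alpha_i$. The accepting states are exactly those whose corresponding process-graph state is accepting in the sense of Definition~\ref{tss}, which by AIGNF and separation is a function of the recorded mode and root signal and of the acceptance condition of the leading identifier.

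Third I would verify that the relation pairing each reachable state $\rse{\psi}{\alpha}$ with the configuration whose stack encodes $\alpha$ and whose control state records the leading mode and the root signal is a bisimulation and matches the initial states; transition-matching in both directions is immediate from the construction. The main obstacle I expect is the acceptance-and-consistency bookkeeping under conditional acceptance: without signals a sequence is accepting exactly when all its identifiers are, whereas here acceptance is relative to a valuation, so I must confirm that the bounded data kept in the control and on the stack really does reproduce, at every reachable state, both the acceptance predicate of Definition~\ref{tss} and the set of enabled guarded commands — and, in particular, that maintaining it incrementally across pushes and pops never escapes a finite state set. That is exactly the place where the reset property of $\mathit{effect}$, the finiteness of the relevant Boolean sub-algebra, and the separation property must all be used together.
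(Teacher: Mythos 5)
Your overall skeleton (bring the specification into Greibach/AIGNF form, then build a pushdown automaton in the style of Theorem~\ref{theo3}) matches the paper's, but your proposal stops short of the one idea the paper's proof actually turns on --- and it is precisely the point you yourself flag as the ``main obstacle''. The process associated with a specification with signals and conditions is the bisimulation class of the graph induced by Definition~\ref{tss}, in which transitions and acceptance are defined by quantifying over \emph{all} consistent valuations; they are therefore not relative to a valuation at all. Consequently, once the specification is in AIGNF, every identifier $X$ has a fixed, valuation-independent behaviour: $\term{X}$ holds iff every valuation making its root signal true also makes its acceptance condition true, and $X\step{a}\alpha$ holds iff every such valuation makes the corresponding guard true and yields a consistent target; and by acceptance irredundancy this local information transfers to reachable sequences, i.e.\ $\term{X}$ implies $\term{X\seqc\beta}$ and $X\step{a}\alpha$ implies $X\seqc\beta\step{a}\alpha\seqc\beta$. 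This is exactly what the paper establishes, and it means that the automaton need not carry any propositional data whatsoever: the two control states $n,t$ and the identifier stack of Theorem~\ref{theo3} suffice, and that construction is then followed verbatim.

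By contrast, you propose to record the root signal in the control state and to decorate stack symbols so that ``conjunctions of root signals and acceptance conditions of the identifiers below can be restored on a pop'', and you explicitly leave unverified that this bookkeeping reproduces the acceptance predicate of Definition~\ref{tss} and remains finite-state. That unverified step is the heart of the theorem, so as it stands the proposal is a plan rather than a proof; moreover, the machinery it calls for is unnecessary, because under Definition~\ref{tss} acceptance of a reachable (acceptance-irredundant) state is an absolute predicate determined by its leading identifier alone, so nothing has to be ``restored on a pop''. Two smaller points: your appeal to the reset property of $\mathit{effect}$ is not needed (Definition~\ref{tss} already quantifies the valuations away, and the paper's proof makes no use of it), and your intention to redo Proposition~\ref{prop:separation} with the signals carried along is in the right spirit --- the paper silently relies on this when it says to follow Theorem~\ref{theo3} exactly --- but it is the valuation-free, leading-identifier-local characterisation of steps and acceptance that you still need to supply.
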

\begin{proof}
Suppose a finite guarded sequential specification with signals and
conditions is given. Without loss of generality we can assume this specification is in Acceptance Irredundant Greibach normal form, so every state of the specification is given by a sequence of identifiers that is acceptance irredundant. 

Because the specification is in AIGNF, it becomes easier to establish the acceptance and steps for any state of the process graph of the specification.
For any variable of the specification, we have $X \downarrow$ iff for
all valuations $v$ that make the root signal of $X$ true also make the
acceptance condition of $X$ true, and $X \step{a} \alpha$ iff for all valuations $v$ that make the root condition of $X$ also make the guard of $a$ true and $\mathit{Cons}(v,a)$.

Because of the acceptance irredundancy, it holds that if $X \downarrow$ then also $X \seqc \beta \downarrow$ for every reachable state of the specification. Also, if $X \step{a} \alpha$ then also $X \seqc \beta \step{a} \alpha \seqc \beta$ for every reachable state of the specification $X \seqc \beta$.

This means we can follow the proof of Theorem~\ref{theo3} exactly, and we are done.
\end{proof}
 
\section{Conclusion}
We looked at the classical theorem, that the set of languages given by
a pushdown automaton coincides with the set of languages given by a
context-free grammar. A language is an equivalence class of
process graphs modulo language
equivalence. A process is an equivalence class of process graphs modulo bisimulation.

This paper solves the question how we can characterize the set of processes given by a pushdown automaton. In the process setting, a context-free grammar is a process algebra with actions, choice, sequential composition and recursion. We need to limit to guarded recursion in the process setting. Starting out from the seminal process algebra BPA of \cite{BK84}, we need to add constants for the inactive and accepting process and for the inactive non-accepting (deadlock) process. We look at the candidate process algebra \TSP{} of \cite{BBR10}, but find it is not suitable: by means of a guarded recursive specification we find a process that cannot be the process of a pushdown automaton, and also, there is a process of a pushdown automaton that cannot be given by a guarded recursive specification.

If we replace the sequential composition operator of \TSP{} by a sequencing operator \cite{BLY17}, we get a better situation: every guarded specification yields the process of a pushdown automaton, but still, not the other way around, there are processes of pushdown automata that cannot be given by a guarded specification. We obtain a complete correspondence by adding a notion of state awareness, that allows to pass on some information during sequencing.

\medskip

\fbox{\parbox{.9\textwidth}{A process is defined by a pushdown
    automaton if and only if it is defined by a finite guarded recursive specification over a process algebra with actions, choice, sequencing, conditions and signals.}}

\medskip

We see that signals and conditions add expressive power to the process algebras considered, since a signal can be passed along the sequencing (or sequential composition) operator. If we go to the theory BCP, so without sequencing but with parallel composition, then we know from \cite{BB97} that value passing can be replaced by signal observation. We leave it as an open problem, whether or not signals and conditions add to the expressive power of BCP.

A pushdown automaton is a model of a computer with a memory in the form of a stack. A \emph{parallel pushdown automaton} is like a pushdown automaton, but with a memory in the form of a multiset (a set, where the elements have a multiplicity, telling how often they occur). It is interesting to compare the set of processes defined by a parallel pushdown automaton with the \emph{basic parallel} processes, the processes that are defined by a guarded recursive specification over the language, where sequencing is replaced by parallel composition. We conjecture that every process defined by a basic parallel recursive specification can also be defined by a parallel pushdown automaton, and in the other direction that a process defined by a one-state parallel pushdown automaton can also be defined by a basic parallel recursive specification. In order to recover a full correspondence, we need to add an extra feature. We speculate this extra feature can be communication between parallel processes, without abstraction. Some details can be found in \cite{vT11}.

This paper contributes to our ongoing project to integrate automata theory and process theory. As a result, we can present the foundations of computer science using a computer model with interaction. Such a computer model relates more closely to the computers we see all around us.

\section*{Acknowledgement}
Thanks to the anonymous referees we were able to correct some errors and make some clarifications, improving on \cite{BCL21}.

\bibliographystyle{alphaurl}
\bibliography{special}

\end{document}